\documentclass[reqno,11pt]{amsart}

\usepackage{xcolor}
\usepackage{graphicx}
\usepackage[hidelinks]{hyperref}
\usepackage{amscd}

\usepackage{amsmath}
\usepackage{amsthm}
\usepackage{amssymb}
\usepackage{latexsym,array}
\usepackage{amsfonts}
\usepackage{shadow}
\usepackage{amsbsy}
\usepackage{amssymb}
\usepackage{dsfont}
\usepackage{mathtools}
\usepackage{enumitem}
\usepackage[notref, notcite, final]{showkeys}
\usepackage{color}
\usepackage{graphicx}
\usepackage[hidelinks]{hyperref}

\usepackage{dsfont}

\topmargin=-10mm \oddsidemargin=0mm \evensidemargin=0mm
\textheight=230mm \textwidth=160mm
\newtheorem{theorem}{Theorem}[section]
\newtheorem{definition}[theorem]{Definition}

\newtheorem{lemma}[theorem]{Lemma}
\newtheorem{proposition}[theorem]{Proposition}
\newtheorem{corollary}[theorem]{Corollary}
\newtheorem{remark}[theorem]{Remark}

\newcommand{\rr}{\mathbb{R}}
\newcommand{\cc}{\mathbb{C}}

\newcommand{\nn}{\mathbb{N}}

\newcommand{\unx}{\underline{x}}
\newcommand{\uny}{\underline{y}}

\newcommand{\bx}{\mathbf{x}}

\title[Superoscillations in the hypercomplex setting]{Superoscillations in the hypercomplex setting}

\author[F. Colombo]{F. Colombo}
\address{(FC) Politecnico di Milano, Dipartimento di Matematica, Via E. Bonardi 9, 20133 Milano, Italy}
\email{fabrizio.colombo@polimi.it}

\author[F. Mantovani]{F. Mantovani}
\address{(FM) Politecnico di Milano, Dipartimento di Matematica, Via E. Bonardi 9, 20133 Milano, Italy}
\email{francesco.mantovani@polimi.it}

\author[S. Pinton]{S. Pinton}
\address{(SP) Politecnico di Milano, Dipartimento di Matematica, Via E. Bonardi 9, 20133 Milano, Italy}
\email{stefano.pinton@polimi.it}

\author[P. Schlosser]{P. Schlosser}
\address{(PS) Graz University of Technology, Institut f\"ur Angewandte Mathematik, Steyrergasse 30, 8010 Graz, Austria}
\email{pschlosser@math.tugraz.at}

\begin{document}

\begin{abstract}
Superoscillatory functions represent a counterintuitive phenomenon in physics but also in mathematics,
where a band-limited function oscillates faster than its highest Fourier component.
They appear in various contexts, including quantum mechanics, as a result of a weak
measurement introduced by Y. Aharonov and collaborators.
These functions can be extended to the complex variable and are a specific instance of the more general notion of supershift.
The aim of this paper is to extend the notion of superoscillatory functions to the hypercomplex setting.
This extension is richer than the complex case since the Fueter-Sce extension theorem for Clifford-valued functions provides two notions of hyperholomorphic functions.
We will explore these notions and address the corresponding superoscillating theories.
 \end{abstract}
 \maketitle

\vskip 1cm
\par\noindent
 AMS Classification: 26A09, 41A60.
\par\noindent
\noindent {\em Key words}: General superoscillatory functions, Clifford superoscillations,  Clifford supershifts.

\vskip 1cm

\textbf{Acknowledgements:}
F. Colombo and S. Pinton are supported by MUR grant "Dipartimento di Eccellenza 2023-2027".
 P. Schlosser was funded by the Austrian Science Fund (FWF) under Grant No. J 4685-N and by the European Union -- NextGenerationEU.

\tableofcontents

\section{Introduction }
Superoscillations are a  somewhat counterintuitive phenomenon for which a band-limited function
locally oscillates faster than its highest Fourier component.
This means that for example a signal can exhibit rapid variations in certain regions,
even though its overall frequency content is limited to lower frequencies.
The problem of constructing a sequence of superoscillatory functions can be described as follows:
Let $h_j(n)$ be a given set of points in $[-1,1]$, $j=0,1,...,n$, for $n\in \mathbb{N}$ and let $a\in \mathbb{R}$ be such that $|a|>1$.
Determine the coefficients $X_j(n,a)$ of the sequence
$$
f_n(x)=\sum_{j=0}^nX_j(n,a)e^{ih_j(n)x},\ \ \ x\in \mathbb{R}
$$
in such a way that
$$
f_n^{(p)}(0)=(ia)^p,\ \ \ {\rm for} \ \ \ p=0,1,...,n.
$$
The conditions $f_n^{(p)}(0)=(ia)^p$  mean that the functions $x\mapsto e^{iax}$ and $f_n(x)$ have the same derivatives at the origin, for
$p=0,1,...,n$,  so they have the same Taylor polynomial of order $n$.
Under the condition that, for every fixed $n\in\mathbb{N}$, the points  $h_j(n)$  for $j=0,...,n$, (often denoted by $h_j$) are distinct we
obtain an explicit formula  for the coefficients  $X_j(n,a)$ that are given by
$$
X_j(n,a)=
\prod_{k=0,\  k\not=j}^n\Big(\frac{h_k(n)-a}{h_k(n)-h_j(n)}\Big),
$$
so the superoscillating sequence $f_n(x)$ takes  the explicit  form
$$
f_n(x)=\sum_{j=0}^n\prod_{k=0,\  k\not=j}^n\Big(\frac{h_k(n)-a}{h_k(n)-h_j(n)}\Big)\ e^{ih_j(n)x},\ \ \ x\in \mathbb{R}.
$$
Observe that, by construction, this function is band limited and it converges to $e^{iax}$ with arbitrary $|a|>1$, so it is superoscillating.
These functions where introduced in \cite{newmet} and further studied in \cite{PeterTAMS}.

The prototypical superoscillating function, obtained in a different way with respect to what we have discussed above, is
associated with the weak values in quantum mechanics \cite{aav}, and is defined as
\begin{equation}\label{FNEXP}
F_n(x,a)=\sum_{j=0}^nC_j(n,a)e^{i(1-2j/n)x},\ \ x\in \mathbb{R},
\end{equation}
where $a>1$ and the coefficients $C_j(n,a)$ are given by
\begin{equation}\label{Ckna}
C_j(n,a)={n\choose j}\left(\frac{1+a}{2}\right)^{n-j}\left(\frac{1-a}{2}\right)^j.
\end{equation}
If we fix $x \in \mathbb{R}$  and we let $n$ go to infinity, we  obtain that
$$
\lim_{n \to \infty} F_n(x,a)=e^{iax}.
$$
Clearly the name superoscillations comes from the fact that
in the Fourier's representation of the function (\ref{FNEXP}) the frequencies
$1-2j/n$ are bounded by 1, but the limit function $e^{iax}$ has a frequency $a$ that can be arbitrarily larger  than $1$.
A precise definition of superoscillating functions is as follows.

\medskip
We call {\em generalized Fourier sequence}
a sequence of the form
\begin{equation}\label{basic_sequence}
f_n(x):= \sum_{j=0}^n X_j(n,a)e^{ih_j(n)x},\ \ \ n\in \mathbb{N},\ \ \ x\in \mathbb{R},
\end{equation}
where $a\in\mathbb R$, $X_j(n,a)\in\mathbb{C}$ and $h_j(n)\in\mathbb{R}$.
A generalized Fourier sequence of the form (\ref{basic_sequence})
 is said to be {\em a superoscillating sequence} if
 $h_j(n)\in[-1,1]$ and
 there exists a compact subset of $\mathbb R$,
 which will be called {\em a superoscillation set},
 on which $f_n(x)$ converges uniformly to $e^{ig(a)x}$,
 where $g$ is a continuous real valued function such that $|g(a)|>1$.

The evolution of superoscillations have been widely studied in the literature and
many investigations are going on also nowadays by different research groups
in mathematics and in physics. See, without claiming completeness,
\cite{ABCS19}-\cite{acsst5} and \cite{ACJSSST22}-\cite{uno},\cite{Jussi}, \cite{Be19}-\cite{b4}, \cite{Talbot}, \cite{due},\cite{Pozzi}.
Several results on the evolution of superoscillations are based on continuity properties of
infinite order differential operators \cite{AOKI,aoki3}.
Infinite order differential  operators in the hypercomplex setting have been recently considered in
\cite{CKPS24}, \cite{CPSS21},  \cite{EDIM_PPSS} for the entire hyperholomorphic function settings see also the paper \cite{CAK07}.
Supershift and superoscillations and the case of several variables
\cite{hyper}-\cite{CSSYgenfun}.
Recently in the literature appeared and interesting relations of superoscillations associated with Fock spaces \cite{DIKI},
to the short-time Fourier transform \cite{SHORTTIME}, and to
some applications to technology \cite{kempf1,kempf2HHH}.

\medskip
We highlight an important aspect in the investigation of superoscillations.
Extending superoscillatory functions to the complex setting is particularly useful,
as this extension to entire functions allows the use of infinite order differential operators,
which play a crucial role in studying the evolution problem of superoscillations
by Schr\"odinger equation or by any other field equation in quantum mechanics.
 \medskip

The aim of this article is to describe the phenomenon of superoscillations also in the setting of the Clifford algebra $\mathbb{R}_n$. One important feature of $\mathbb{R}_n$ is that there are several (natural) notions of holomorphicity. The most important ones are the {\it slice hyperholomorphic functions} and the {\it monogenic functions}. Note, that these two function spaces are connected by the Fueter-Sce extension theorem, see \cite{ColSabStrupSce}. Thus, the challenge is to define what we mean by superoscillations in these two settings.

\medskip
{\em The content of the paper.}
{\color{black} The main object of study of this paper are Clifford valued superoscillating functions defined on paravectors that is a subset of the Clifford-Algebra. We adopt two different approaches to define this kind of sequences. The first one treats the Clifford variable in a unified way, in particular we give the following definition:
\begin{definition}[Superoscillating sequence.]\label{SUPOSONE}
Let $e_j$, for $j=1,\ldots,n$ be the imaginary units of the Clifford algebra $\mathbb{R}_n$ with the usual defining properties.
A sequence of the form
\begin{equation}\label{basic_sequenceq}
f_N(x_1,\dots, x_n;a):= \sum_{j=0}^N Z_j(N,a)e^{h_j(N)(e_1x_1+\dots +e_nx_n)},\ \ \ n\in \mathbb{N},\ \ \ (x_1,\dots ,x_n)\in \mathbb{R}^n,
\end{equation}
where $a\in\mathbb{R}$, $Z_j(N,a)\in\mathbb{R}_n$, $h_j(n)\in[-1,1]$, is said to be {\em a superoscillating sequence} if
 $\sup_{j,n}|h_j(N)|\leq 1$ and
 $f_N(x_1,\dots,x_n,a)$ converges uniformly in $\unx=e_1x_1+\dots +e_nx_n$ on any compact subset of $\mathbb R^{n}$ to $e^{g(a)\unx}$,
 where $g:\mathbb{R}\rightarrow\mathbb{R}$ is a continuous such that $|g(a)|>1$.
\end{definition}
\begin{remark}\label{remark1}
In the Definition \ref{SUPOSONE}, we decided to put the coefficients $Z_j(N,a)$ on the left of the exponential functions,
 but we could give analogous definitions putting $Z_j(N,a)$ on the right of the exponential functions.
 This remark is important, as we will see, when the left or right regularity
 of the  extension of a superoscillating sequence has to be taken into account.
\end{remark}
Note that in the Definition \ref{SUPOSONE} the role of $e_1x_1+\dots +e_nx_n$ is the same of the role of $ix$ in the definition of the superoscillating sequences in the complex setting. We prove the existence of the archetypical superoscillating sequence and its extension to a sequence of entire slice hyperholomorphic  functions  or to a sequence of entire monogenic functions that converges in the corresponding spaces of entire exponentially increasing functions (see Lemma \ref{l1} and \ref{l2}).

The second approach is more flexible and it admits the possibility in the definition of a superoscillating sequence to multiply the
 variables: $x_1,\dots, x_n$ with different frequencies  (see Definition \ref{superoscill} for the slice hyperholomorphic setting and Definition \ref{superoscill2} for the monogenic setting). In particular, in Corollary \ref{cor1} (resp. Corollary \ref{cor2}) we prove the construction of a superoscillating sequence according to Definition \ref{superoscill} (resp Definition \ref{superoscill2}) starting from a superoscillating sequence as in Definition \ref{SUPOSONE} that admits an extension to a sequence of entire slice hyperholomorphic functions (resp. monogenic functions) converging in the spaces of entire slice hyperholomorphic functions (resp. entire monogenic functions).

Finally we also define, as a generalization of the notion of superoscillating sequence, the notion of supershift sequences of functions of Clifford variable in the slice hyperholomorphic setting (see Definition \ref{d4} and Definition \ref{ss1}) and in the monogenic setting (see Definition \ref{d5} and Definition \ref{ss2}) and in Theorem \ref{propm=3q} (resp. Theorem \ref{propm=5q}) we prove that starting from a superoscillating sequence it is possible to construct a supershift sequence according to Definition \ref{ss1} (resp. Definition \ref{ss2}). }

\section{Preliminary material on hyperholomorphic functions}

In this section we recall some results on slice hyperholomorphic functions (see Chapter 2 in \cite{CSS11})
and we prove some important properties of entire slice hyperholomorphic and entire monogenic functions that appear here for the first time.
We recall that $\rr_n$ is the real Clifford algebra over $n$ imaginary units $e_1,\ldots ,e_n$.
The element
 $(x_0,x_1,\ldots,x_n)\in \mathbb{R}^{n+1}$ will be identified with the paravector
$
 x=x_0+\underline{x}=x_0+ \sum_{\ell=1}^nx_\ell e_\ell
$
and the real part $x_0$ of $x$ will also be denoted by ${\rm Re}(x)$.
An element in $\mathbb{R}_{n}$, called a {\em Clifford number}, can be written as
$$
a=a_0+a_1 e_1+\ldots +a_ne_n+a_{12}e_1e_2+\ldots+a_{123}e_1e_2e_3+\ldots+a_{12...n}e_1e_2...e_n.
$$
Denote by $A=\{i_1,\dots, i_r\}$ with $0\leq r\leq n$,  an element in the power set $P(1,\ldots ,n)$ with ordered indexes $i_{1}<i_2<\dots< i_r$.
The element $e_{i_1}\ldots e_{i_r}$ can be written as $e_{i_1...i_r}$ or, in short, $e_A$.
Thus, in a more compact form, we can write a Clifford number as
$$
a=\sum_Aa_A e_A.
$$
When $A=\emptyset$ we set $e_\emptyset=1$.
The Euclidean norm of an element $y\in \mathbb{R}_n$ is
given by $|y|^2=\sum_{A} |y_A|^2$,
in particular the norm of the paravector $x\in\mathbb{R}^{n+1}$ is $|x|^2=x_0^2+x_1^2+\ldots +x_n^2$.
The conjugate of $x$ is given by
$
\bar x=x_0-\underline x=x_0- \sum_{\ell=1}^nx_\ell e_\ell.
$
Recall that $\mathbb{S}$ is the sphere
$$
\mathbb{S}=\{ \underline{x}=e_1x_1+\ldots +e_nx_n\ | \  x_1^2+\ldots +x_n^2=1\};
$$
so for $\mathbf{j}\in\mathbb{S}$ we have $\mathbf{j}^2=-1$.
Given an element $x=x_0+\underline{x}\in\rr^{n+1}$ let us define
$
\mathbf{j}_x=\underline{x}/|\underline{x}|$ if $\underline{x}\not=0,
$
 and given an element $x\in\rr^{n+1}$, the set
$$
[x]:=\{y\in\rr^{n+1}\ :\ y=x_0+{\mathbf{j}} |\underline{x}|, \ \mathbf{j}\in \mathbb{S}\}
$$
which is an $(n-1)$-dimensional sphere in $\mathbb{R}^{n+1}$.
The vector space $\mathbb{R}+\mathbf{j}\mathbb{R}$ will be denoted by $\mathbb{C}_\mathbf{j}$ and
an element belonging to $\mathbb{C}_\mathbf{j}$ will be indicated by $u+\mathbf{j}v$, for $u$, $v\in \mathbb{R}$.
With an abuse of notation we will write $x\in\mathbb{R}^{n+1}$.
Thus, if $U\subseteq\mathbb{R}^{n+1}$ is an open set,
a function $f:\ U\subseteq \mathbb{R}^{n+1}\to\mathbb{R}_n$ can be interpreted as
a function of the paravector $x$.
We say that $U \subseteq \mathbb{R}^{n+1}$ is axially symmetric if $[x]\subset U$  for any $x \in U$.

\subsection{Slice hyperholomorphic functions}
This is the first class of hyperholomorphic function  due to the  Fueter-Sce extension theorem, see \cite{ColSabStrupSce},
associated with the second map $T_{FS1}$

\begin{definition}[Slice hyperholomorphic functions]\label{SHolDefMON}
 Let $U\subseteq \mathbb{R}^{n+1}$ be an axially symmetric open set and let
 $\mathcal{U} = \{ (u,v)\in\rr^2: u+ \mathbb{S} v\subset U\}$. A function $f:U\to \mathbb{R}_n$ is called a left slice function, if it is of the form
 \[
 f(q) = f_{0}(u,v) + \mathbf{j}f_{1}(u,v)\qquad \text{for } q = u + \mathbf{j} v\in U
 \]
where the two functions $f_{0},f_{1}: \mathcal{U} \to \mathbb{R}_n$ satisfy the compatibility conditions
\begin{equation}\label{CCondmon}
f_{0}(u,-v) = f_{0}(u,v),\qquad f_{1}(u,-v) = -f_{1}(u,v).
\end{equation}
If in addition $f_{0}$ and $f_{1}$ are $C^1$ regular and satisfy the Cauchy-Riemann-equations
 \begin{align}\label{CRMMON}
\frac{\partial}{\partial u} f_{0}(u,v) - \frac{\partial}{\partial v} f_{1}(u,v) &= 0\\
\frac{\partial}{\partial v} f_{0}(u,v)+ \frac{\partial}{\partial u} f_{1}(u,v) &= 0,
\end{align}
 then $f$ is called left slice hyperholomorphic.
A function $f:U\to \mathbb{R}_n$ is called a right slice function if it is of the form
\[
f(q) = f_{0}(u,v) + f_{1}(u,v) \mathbf{j}\qquad \text{for } q = u+ \mathbf{j}v \in U
\]
with two functions $f_{0},f_{1}: \mathcal{U} \to \mathbb{R}_n$ that satisfy \eqref{CCondmon}.
If in addition $f_{0}$ and $f_{1}$ satisfy the Cauchy-Riemann-equation, then $f$ is called right slice hyperholomorphic.
\end{definition}

If $f$ is a left (or right) slice function such that $f_{0}$ and $f_{1}$ are real-valued, then $f$ is called intrinsic.
We denote the sets of left (resp. right) slice hyperholomorphic functions on $U$ by $\mathcal{S\!M}_L(U)$
(resp. $\mathcal{S\!M}_R(U)$).

\begin{definition}\label{slicederivative}
Let $f: U\subseteq\mathbb{R}^{n+1}\to\mathbb{R}_n$ and let $x = u + \mathbf{j}v\in U$. If $x$ is not real, then we say that $f$ admits left slice derivative in $x$ if
\begin{equation}\label{SDerivL}
\partial_S f(x) := \lim_{p\to x, \,  p\in\cc_{\mathbf{j}}} (p-x)^{-1}(f(p)-f(x))
\end{equation}
exists and is finite. If $x$ is real, then we say that $f$ admits left slice derivative in $x$ if \eqref{SDerivL} exists for any $\mathbf{j}\in\mathbb{S}$.
Similarly, we say that $f$ admits right slice derivative in a nonreal point $x = u + \mathbf{j}v \in U$ if
\begin{equation}\label{SDerivR}
\partial_S f(x) := \lim_{p\to x, \,  p\in\cc_{\mathbf{j}}}(f(p)-f(x))(p-x)^{-1}
\end{equation}
exists and is finite, and we say that  $f$ admits right slice derivative in a real point
$x\in U$ if \eqref{SDerivR} exists and is finite, for any $\mathbf{j}\in\mathbb{S}$.
\end{definition}
\begin{remark}
Observe that $\partial_S f(x)$ is uniquely defined and independent of the choice of $\mathbf{j}\in\mathbb{S}$ even if $x$ is real.
If $f$ admits a slice derivative in one point $x\in\mathbb R^{n+1}$, then $f_{\mathbf{j}}$ is $\cc_\mathbf{j}$-complex left, resp. right, differentiable in $x$ and we find
\begin{equation}\label{SDerivPartial}
\partial_S f(x) = f_{\mathbf{j}}'(x) = \frac{\partial}{\partial u} f_{\mathbf{j}}(x) = \frac{\partial}{\partial u}f(x),\quad x = u + \mathbf{j}v.
\end{equation}
\end{remark}

\begin{theorem}[\cite{CSS11}, Proposition 2.3.1]\label{PowSerThm}
Let $a\in\mathbb{R}$, $r>0$ and $B_{r}(a) = \{x\in\mathbb R^{n+1}: |x-a|<r\}$. If $f\in\mathcal{S\!M}_L(B_r(a))$, then
\begin{equation}
\label{PowSerL}
f(x) = \sum_{k= 0}^{+\infty} (x-a)^k\frac{1}{k!} \partial_S^k f(a)\qquad \forall x = u + \mathbf{j} v \in B_r(a).
\end{equation}
If on the other hand $f\in\mathcal{S\!M}_R(B_r(a))$, then
\[
f(x) = \sum_{k= 0}^{+\infty}\frac{1}{k!} \partial_S^k f(a)  (x-a)^k\qquad \forall x = u + \mathbf{j} v \in B_r(a).
\]
\end{theorem}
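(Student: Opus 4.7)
The plan is to reduce the claim to classical complex analysis on a single slice $\mathbb{C}_\mathbf{j}$ and then spread the resulting expansion to the whole ball $B_r(a)$ using that $a\in\mathbb{R}$. The key observation is that for real $a$ the monomials $(x-a)^k$ are intrinsic slice hyperholomorphic functions of $x$, so a formal power series with $\mathbb{R}_n$-coefficients on the right automatically defines a left slice hyperholomorphic function on $B_r(a)$, and equality with $f$ then follows from the Identity Principle for slice hyperholomorphic functions.

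First I would fix an arbitrary $\mathbf{j}\in\mathbb{S}$ and study the restriction $f_\mathbf{j} := f|_{B_r(a)\cap\mathbb{C}_\mathbf{j}}$. Writing $f(u+\mathbf{j}v) = f_0(u,v) + \mathbf{j}f_1(u,v)$ and using (\ref{CRMMON}), a short computation gives $\partial_u f_\mathbf{j} + \mathbf{j}\partial_v f_\mathbf{j} = 0$ on $B_r(a)\cap\mathbb{C}_\mathbf{j}$. Choosing a $\mathbb{C}_\mathbf{j}$-left basis $\{b_1,\ldots,b_m\}$ of $\mathbb{R}_n$ and writing $f_\mathbf{j}=\sum_i\phi_i b_i$ with $\phi_i:B_r(a)\cap\mathbb{C}_\mathbf{j}\to\mathbb{C}_\mathbf{j}$, this equation becomes the classical Cauchy-Riemann system for each $\phi_i$. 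Hence each $\phi_i$ is holomorphic on the disk $B_r(a)\cap\mathbb{C}_\mathbf{j}$, admits a Taylor expansion of radius at least $r$ around $a$, and one obtains
\[
f_\mathbf{j}(x) = \sum_{k=0}^{+\infty}(x-a)^k c_k,\qquad x\in B_r(a)\cap\mathbb{C}_\mathbf{j},
\]
for some $c_k\in\mathbb{R}_n$, since $(x-a)^k\in\mathbb{C}_\mathbf{j}$ commutes with the $\mathbb{C}_\mathbf{j}$-coefficients of the $\phi_i$. Because $a$ is real, iterating (\ref{SDerivPartial}) gives $c_k = \frac{1}{k!}f_\mathbf{j}^{(k)}(a) = \frac{1}{k!}\partial_S^k f(a)$, and in particular these coefficients do not depend on $\mathbf{j}$.

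Next I would define $g(x):=\sum_{k=0}^{+\infty}(x-a)^k\frac{1}{k!}\partial_S^k f(a)$ on $B_r(a)$. Because $a\in\mathbb{R}$, each $(x-a)^k$ is intrinsic, so each term $(x-a)^k\frac{1}{k!}\partial_S^k f(a)$ is left slice hyperholomorphic. Uniform convergence of the series on compact subsets of $B_r(a)$ follows from the one-slice radius-of-convergence estimate together with $|(x-a)^k|\le|x-a|^k$, and a standard Cauchy-type argument ensures that $g\in\mathcal{S\!M}_L(B_r(a))$. Since $f$ and $g$ coincide on $B_r(a)\cap\mathbb{C}_\mathbf{j}$ and $B_r(a)$ is axially symmetric and connected, the Identity Principle forces $f=g$ on all of $B_r(a)$. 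The right slice hyperholomorphic case is handled by the mirror argument, placing the coefficients to the left of the monomials.

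The main obstacle is the transition from the single-slice expansion to an identity valid on every $\mathbb{C}_\mathbf{i}$, $\mathbf{i}\in\mathbb{S}$. This step crucially uses that $a$ is real: only then are the monomials $(x-a)^k$ intrinsic so that the formal series represents a genuine slice hyperholomorphic function, and only then are the coefficients $\partial_S^k f(a)$ independent of the chosen slice, so that applying the Identity Principle is unambiguous.
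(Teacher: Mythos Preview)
The paper does not prove this statement; Theorem~\ref{PowSerThm} is quoted verbatim from \cite[Proposition~2.3.1]{CSS11} as preliminary material, with no proof supplied. So there is no ``paper's own proof'' to compare against.

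Your argument is essentially the standard proof that one finds in the literature (including in \cite{CSS11}), and it is correct in outline. Two small points are worth sharpening. First, where you invoke the Identity Principle to pass from the single slice $\cc_{\mathbf{j}}$ to the full ball, the more direct tool is the Representation (or Structure) Formula: a left slice hyperholomorphic function on an axially symmetric domain is completely determined by its restriction to any one slice. This gives $f=g$ immediately, without needing to appeal to connectedness or accumulation-point arguments. Second, in the convergence step you use $|(x-a)^k|\le|x-a|^k$; this is true here because $a\in\rr$ forces $x-a\in\cc_{\mathbf{j}_x}$ for each paravector $x$, so $(x-a)^k$ stays in that complex line and the modulus is multiplicative there. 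It is worth stating this explicitly, since the Clifford norm is not multiplicative in general.
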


We now recall the natural product that preserves slice hyperholomorphicity of functions admitting power series expansion
as shown by Theorem \ref{PowSerThm}.
\begin{definition}
Let  $f(x) = \sum_{k=0}^{+\infty}x^{k}a_k$ and
$g(x) = \sum_{k=0}^{+\infty}x^kb_k$, defined on a disc $B_r(0)$, be two left slice hyperholomorphic power series,
the left-star product is defined by
\begin{equation}\label{ProdLSeries}
  (f\star_L g) (x) = \sum_{\ell=0}^{+\infty} x^\ell \left(\sum_{k=0}^{\ell}a_{k}b_{\ell-k}\right),
\end{equation}
and leads again to a left slice hyperholomorphic function on $B_r(0)$.
Similarly, for right slice hyperholomoprhic power series
$
f(x)=\sum_{k=0}^{+\infty}a_kx^{k}
$ and
$
g(x)=\sum_{k=0}^{+\infty}b_kx^{k}
$
the right-star product is defined by
\begin{equation}\label{ProdRSeries}
 (f\star_R g) (x) =  \sum_{\ell=0}^{+\infty} \left( \sum_{k=0}^{\ell}a_{k}b_{\ell-k}\right)x^\ell.
\end{equation}
\end{definition}
The Cauchy formula of slice hyperholomoprhic functions has two different Cauchy kernels
according to left or right slice hyperholomoprhicity.
Let $x,s\in \mathbb{R}^{n+1}$, with $x\not\in [s]$,
be paravectors then the slice hyperholomorphic Cauchy kernels are defined by
$$
S_L^{-1}(s,x):=-(x^2 -2 {\rm Re}  (s) x+|s|^2)^{-1}(x-\overline s),
$$
and
$$
S_R^{-1}(s,x):=-(x-\bar s)(x^2-2{\rm Re} (s)x+|s|^2)^{-1}.
$$
The two results below can be found in \cite{CSS11}, Section 2.8.
\begin{theorem}[The Cauchy formulas for slice hyperholomoprhic functions]
\label{CauchygeneraleMONOG}
Let $U\subset\mathbb{R}^{n+1}$ be an axially symmetric domain.
Suppose that $\partial (U\cap \mathbb{C}_\mathbf{j})$ is a finite union of
continuously differentiable Jordan curves  for every $\mathbf{j}\in\mathbb{S}$ and set
 $ds_\mathbf{j}=-ds \mathbf{j}$.
 Let $f$ be
a slice hyperholomoprhic function on an open axially symmetric set that contains $\overline{U}$.
 Then
\begin{equation}\label{cauchynuovo}
 f(x)=\frac{1}{2 \pi}\int_{\partial (U\cap \mathbb{C}_\mathbf{j})} S_L^{-1}(s,x)\, ds_\mathbf{j}\,  f(s),\qquad\text{for any }\ \  x\in U.
\end{equation}
If $f$ is a right slice hyperholomoprhic function on a set that contains $\overline{U}$,
then
\begin{equation}\label{Cauchyright}
 f(x)=\frac{1}{2 \pi}\int_{\partial (U\cap \mathbb{C}_\mathbf{j})}  f(s)\, ds_\mathbf{j}\, S_R^{-1}(s,x),\qquad\text{for any }\ \  x\in U.
 \end{equation}
 Moreover, the integrals  depend neither on $U$ nor on the imaginary unit $\mathbf{j}\in\mathbb{S}$.
\end{theorem}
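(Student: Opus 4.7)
The strategy is to reduce the hypercomplex Cauchy formula to the classical complex Cauchy formula on a single slice, and then propagate the identity to the whole of $U$ by the identity principle for slice hyperholomorphic functions. First, I would fix the imaginary unit $\mathbf{j}\in\mathbb{S}$ that parametrizes the contour and restrict $f$ to the slice $U\cap\mathbb{C}_{\mathbf{j}}$. The Cauchy--Riemann equations \eqref{CRMMON} combined with the compatibility conditions \eqref{CCondmon} guarantee that, viewed as an $\mathbb{R}_n$-valued function of the complex variable $u+\mathbf{j}v$, the restriction $f|_{U\cap\mathbb{C}_{\mathbf{j}}}$ is holomorphic in the classical sense (the operator $\partial_u+\mathbf{j}\partial_v$ annihilates it, componentwise in any real basis of $\mathbb{R}_n$). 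Applying the scalar-valued Cauchy formula, for every $x\in U\cap\mathbb{C}_{\mathbf{j}}$ one obtains
\begin{equation*}
f(x)=\frac{1}{2\pi\mathbf{j}}\int_{\partial(U\cap\mathbb{C}_{\mathbf{j}})}(s-x)^{-1}\,ds\,f(s)=\frac{1}{2\pi}\int_{\partial(U\cap\mathbb{C}_{\mathbf{j}})}(s-x)^{-1}\,ds_{\mathbf{j}}\,f(s),
\end{equation*}
where the second equality uses $ds_{\mathbf{j}}=-ds\,\mathbf{j}$ and $\mathbf{j}^{-1}=-\mathbf{j}$.

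The next step is an algebraic identification of the kernels on the slice: since $\mathbb{C}_{\mathbf{j}}$ is a commutative complex plane, one checks that $x^{2}-2\,\mathrm{Re}(s)\,x+|s|^{2}=(x-s)(x-\overline s)$ inside $\mathbb{C}_{\mathbf{j}}$, and a short commutative manipulation gives $S_L^{-1}(s,x)=(s-x)^{-1}$ whenever $s,x\in\mathbb{C}_{\mathbf{j}}$ and $x\notin[s]$. Substituting this into the displayed formula yields \eqref{cauchynuovo} on $U\cap\mathbb{C}_{\mathbf{j}}$.

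To pass from $U\cap\mathbb{C}_{\mathbf{j}}$ to the whole of $U$, I would observe that for each fixed $s\in\partial(U\cap\mathbb{C}_{\mathbf{j}})$ the kernel $x\mapsto S_L^{-1}(s,x)$ is left slice hyperholomorphic on $\mathbb{R}^{n+1}\setminus[s]$; differentiation under the integral sign then shows that the right-hand side of \eqref{cauchynuovo} is a left slice hyperholomorphic function of $x$ on $U$. Since it agrees with $f$ on the slice $\mathbb{C}_{\mathbf{j}}\cap U$ and $U$ is axially symmetric, the identity principle for slice hyperholomorphic functions forces equality on all of $U$. Independence from $\mathbf{j}\in\mathbb{S}$ and from the choice of $U$ follows from a Cauchy-type deformation argument, as $s\mapsto S_L^{-1}(s,x)\,ds_{\mathbf{j}}\,f(s)$ behaves, after applying Stokes' theorem on each slice, like a closed form on $U\setminus[x]$. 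The right-slice case is entirely symmetric, using the analogous identity $S_R^{-1}(s,x)=(s-x)^{-1}$ on $\mathbb{C}_{\mathbf{j}}$ together with the right-sided classical Cauchy formula. The main obstacle will be the extension step: to invoke the identity principle cleanly one must either differentiate under the integral in the slice sense or rewrite the kernel via the axially symmetric representation formula, both of which require careful handling of the paravector structure off the reference slice.
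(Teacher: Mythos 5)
The paper does not prove this theorem itself; it cites it verbatim from \cite{CSS11}, Section~2.8. So there is no internal proof to compare against, but your sketch does recover the standard argument from that source: reduce to a slice $\mathbb{C}_{\mathbf{j}}$, observe that the restriction satisfies the $\mathbb{C}_{\mathbf{j}}$-Cauchy--Riemann equation, apply the scalar Cauchy formula, identify $S_L^{-1}(s,x)=(s-x)^{-1}$ on the commutative slice via the factorization $x^2-2\,\mathrm{Re}(s)\,x+|s|^2=(x-s)(x-\bar s)$, and then propagate off the slice. Your verifications of the kernel identities on $\mathbb{C}_{\mathbf{j}}$ and the rewriting $\frac{1}{2\pi\mathbf{j}}(s-x)^{-1}\,ds=\frac{1}{2\pi}(s-x)^{-1}\,ds_{\mathbf{j}}$ are correct.

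One caution on the extension step: for the right-hand side of \eqref{cauchynuovo} to be well-defined for all $x\in U$ one must note that $\partial(U\cap\mathbb{C}_{\mathbf{j}})\subset\partial U$ is disjoint from $U$, hence from $[x]$, so the kernel has no singularity; you tacitly assume this. Also, the propagation off the slice is more commonly done in \cite{CSS11} via the \emph{representation (structure) formula}
\[
f(u+\mathbf{i}v)=\tfrac12\bigl(f(u+\mathbf{j}v)+f(u-\mathbf{j}v)\bigr)+\tfrac{\mathbf{i}}{2}\,\mathbf{j}\bigl(f(u-\mathbf{j}v)-f(u+\mathbf{j}v)\bigr)
\]
applied both to $f$ and to the kernel, rather than via differentiation under the integral plus the identity principle. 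Your alternative is legitimate but, as you yourself flag, needs the verification that the integral is an axially symmetric slice function before the identity principle applies; the representation-formula route sidesteps this by construction. The independence of $\mathbf{j}$ and of $U$ is then automatic from that formula together with a homotopy/Stokes argument within each slice, as you indicate.
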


\subsection{Monogenic functions}\label{entireMON}

This is the second class of hyperholomorphic function  due to the  Fueter-Sce extension theorem, see \cite{ColSabStrupSce},
associated with the second map $T_{FS2}$.
 We recall that the generalized Cauchy-Riemann operator in $\mathbb R^{n+1}$ is defined by:
$$
\mathcal D:= \frac{\partial}{\partial{x_0}}+\sum_{i=1}^n e_i \frac{\partial}{\partial{x_i}}.
$$
\begin{definition}[Monogenic Functions, see \cite{BDS82}] Let $U\subseteq\mathbb R^{n+1}$ be an open subset.
 A function $f:\, U\to \mathbb{R}_n$, of class $C^1$, is called left monogenic if
$$
\mathcal D f= \frac{\partial}{\partial{x_0}} f +\sum_{i=1}^n e_i \frac{\partial}{\partial{x_i}} f=0.
$$
A function $f:\, U\to \mathbb{R}_n$, of class $C^1$, is called right monogenic if
$$
f \mathcal D = \frac{\partial}{\partial{x_0}} f +\sum_{i=1}^n \frac{\partial}{\partial{x_i}} f e_i=0.
$$
The set of  left monogenic functions (resp. right monogenic functions) will be denoted by $\mathcal M_L(U)$
(resp. $\mathcal M_R(U)$).
\end{definition}
\begin{definition}[Fueter's homogeneous polynomials]\label{fp}
Let $x\in\mathbb{R}^{n+1}$.
Given a multi--index $k=(k_1,...,k_n)$ where $k_i$ are integers, we set $|k|=\sum_{i=1}^nk_i$ and $k!=\prod_{i=1}^nk_i!$. We define the homogeneous polynomials $P_k(x)$ as follows:
\begin{enumerate}
\item[(I)] For a multi--index $k$ with at least one negative component we set
$$
P_k(x):=0.
$$
For $k=0=(0,...,0)$ we set
$$
P_0(x):=1.
$$
\item[(II)]
For a multi-index $k$ with $|k|>0$ and the integers $k_j$ nonnegative, we define $P_k(x)$ as follows: for each $k$ consider the sequence of indices $j_1,j_2,\ldots ,j_{|k|}$ be given such that $1$ appears in the sequence $k_1$ times, $2$ appears $k_2$ times, et cetera and, finally, $n$ appears $k_n$ times (i.e $\{j_1,j_2,\ldots ,j_{|k|}\}=\{ \underset{\textrm{$k_1$ times}}{\underbrace{1,\dots, \,1 }},\dots, \underset{\textrm{$k_n$ times}}{\underbrace{n,\dots, \,n }}\}$).
We define $z_i=x_i-x_0e_i$ for any $i=1,\dots , n$ and $z=(z_1,\dots, z_n)$. We set
$$
z^k:=z_{j_1}z_{j_2}\ldots z_{j_{|k|}}
$$
and
$$
|z|^k=|z_1|^{k_1}\cdots |z_n|^{k_n}
$$
these products contains $z_1$ exactly $k_1$-times, $z_2$ exactly $k_2$-times and so on. We define
$$
P_k(x):=\frac{1}{|k|!}\sum_{\sigma\in perm(k)} z_{j_{\sigma(1)}}z_{j_{\sigma(2)}} \ldots z_{j_{\sigma(|k|)}},
$$
where $perm(k)$ is the permutation group with $|k|$ elements.
\end{enumerate}
\end{definition}
\begin{remark}
In \cite{BDS82}  pag. 68, the definition of Fueter polynomials is given in the following way: let $s\in\mathbb N$, for any combination $(\ell_1,\dots,\ell_s)$ of $s$ elements out of (1,\dots, n) repetitions being allowed, we put
$$ V_0(x)=e_0 $$
and
$$ V_{\ell_1,\dots, \ell_s}(x)=\frac 1{s!} \sum_{\pi(\ell_1,\dots, \ell_s)} z_{\ell_1}\cdot\dots\cdot z_{\ell_s} $$
where the sum runs over all distinguishable permutation of all of $(\ell_1,\dots, \ell_s)$. According to this definition we have the following relation
$$ P_m(x):= m! V_{\ell_1,\dots, \ell_{s}}(x),$$
where $m=(m_1,\dots, m_n)$ and $m_i$ is the number of times that $i$ appear in the sequence $(\ell_1,\dots,\ell_{s})$ being $|m|=s$.
\end{remark}
The polynomials $P_k(x)$ play an important role in the monogenic function theory and we collect some of their properties in the next proposition (see Theorem $6.2$ in \cite{GHS08}):
\begin{theorem}\label{fpt}
Consider the Fueter polynomials $P_k(x)$ defined above. Then the following facts hold:.

(I) the recursion formula
$$
kP_k(x)=\sum_{i=1}^nk_i P_{k-\varepsilon_i}(x)z_i=\sum_{i=1}^nk_i z_iP_{k-\varepsilon_i}(x),
$$
and also
$$
\sum_{i=1}^nk_i P_{k-\varepsilon_i}(x)e_i=\sum_{i=1}^nk_i e_iP_{k-\varepsilon_i}(x),
$$
where $\varepsilon_i=(0,...,0,1,0,...,0)$ with $1$ in the position $i$.

(II) The derivatives $\partial_{x_j}$ for $j=1,...,n$, are given by
$$
\partial_{x_j}P_k(x)=k_jP_{k-\varepsilon_j}(x).
$$

(III) The Fueter polynomials $P_k(x)$ are both left and right monogenic.

(IV) The following estimates holds
$$
|P_k(x)|\leq |x|^{|k|}.
$$

(V)
For all paravectors $x$ and $y$, and for any multi--index $k$ there holds the binomial formula
$$
P_k(x+y)=\sum_{i+j=k}\frac{k!}{i!j!}P_i(x)P_j(y).
$$
\end{theorem}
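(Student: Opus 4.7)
My plan is to establish the five assertions sequentially, using the defining symmetric-sum representation of $P_k(x)$ as the main computational tool. For part (I), I multiply through by $|k|!$ and group the $|k|!$ permutations by their last index $j_{\sigma(|k|)}$: since the letter $i$ appears $k_i$ times in the canonical sequence, exactly $k_i(|k|-1)!$ permutations end in $z_i$, and factoring out this trailing $z_i$ leaves the defining sum for $P_{k-\varepsilon_i}$ scaled by $(|k|-1)!$. Collecting over $i$ yields the right-sided recursion $|k|P_k(x)=\sum_i k_iP_{k-\varepsilon_i}(x)z_i$; the companion with $z_i$ on the left follows by grouping according to the first index $j_{\sigma(1)}$ instead. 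The third identity, equating $\sum_i k_ie_iP_{k-\varepsilon_i}$ with $\sum_i k_iP_{k-\varepsilon_i}e_i$, will then appear as a corollary of the two-sided monogenicity proved in (III).

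For (II), I differentiate the symmetric product directly, using $\partial_{x_j}z_i=\delta_{ij}$: by symmetry the $k_j$ copies of $z_j$ each contribute equally, and after rearrangement one obtains $k_jP_{k-\varepsilon_j}(x)$. Part (III) is the main technical obstacle. The base cases $|k|\le 1$ reduce to $\mathcal D z_i=-e_i+\sum_j e_j\delta_{ji}=0$. For the inductive step, (II) reduces $\mathcal D P_k=0$ to the identity $\partial_{x_0}P_k=-\sum_j k_je_jP_{k-\varepsilon_j}$. Direct computation via $\partial_{x_0}z_i=-e_i$ produces terms in which the inserted factor $-e_{j_{\sigma(m)}}$ sits in the middle of a $z$-product, whereas the target expression has it on the left. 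The reconciliation uses the symmetrization over $\mathrm{perm}(k)$ together with the Clifford anticommutativity $e_ie_j=-e_je_i$ for $i\ne j$: commuting the inserted $e$ past the surrounding $z_\ell$'s generates correction terms that appear in pairs $(i,j),(j,i)$ and cancel identically under the symmetric average. Right monogenicity follows by the analogous argument on the right-sided recursion, and comparing the two representations of $\partial_{x_0}P_k$ gives the third identity of (I).

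For (IV), $|z_i|^2=x_0^2+x_i^2\le|x|^2$ gives $|z_i|\le|x|$; estimating the $|k|!$ ordered products in the symmetric sum term-by-term, and using that the specific paravector products arising are norm-submultiplicative with constant $1$ (the symmetrization absorbing the dimensional constants present in a generic Clifford product), gives $|P_k(x)|\le|x|^{|k|}$. For (V), the starting point is the additivity $z_i(x+y)=z_i(x)+z_i(y)$: expanding each factor in $P_k(x+y)$ produces an interleaved sum over all choices, for each of the $|k|$ positions, of either the $x$- or the $y$-variant. Grouping by the multi-indices $i,j$ with $i+j=k$ of the $x$- and $y$-choices, the full averaging over $\mathrm{perm}(k)$ collapses the non-commutative interleavings---whose obstruction is $z_i(x)z_j(y)-z_j(y)z_i(x)=2x_0y_0e_ie_j$ for $i\ne j$---into clean concatenated products $P_i(x)P_j(y)$, and the combinatorial accounting delivers the multinomial coefficient $k!/(i!j!)$. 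A cleaner alternative uses Cauchy--Kovalevskaya uniqueness: both sides are monogenic in $x$ by (III), and on $\{x_0=0\}$ the $z_i$'s reduce to the commuting real variables $x_i$, so the identity becomes the classical multinomial theorem and hence extends uniquely to all of $\mathbb R^{n+1}$. The overall hardest step is (III): the symmetrization is precisely what restores monogenicity of the non-commutative product, and making this rigorous requires careful tracking of commutator cancellations through the permutation average.
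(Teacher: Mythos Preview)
The paper does not prove this theorem at all: it is quoted from \cite{GHS08} (Theorem~6.2 there) as background. Your sketch is therefore genuinely additional, and for (I), (II), (III), (V) it is the standard argument and essentially matches what one finds in \cite{GHS08}. In particular your CK-uniqueness route to (V) is the clean one: fixing $y$, both sides are left monogenic in $x$ (the right-hand side because each $P_i(x)$ is left monogenic and $P_j(y)$ sits on the right), and on $x_0=0$ the identity becomes the real Taylor expansion $P_k(\underline{x}+y)=\sum_i \frac{x^i}{i!}\,\partial_{\underline x}^{\,i}P_k(y)=\sum_{i+j=k}\frac{k!}{i!j!}\,x^iP_j(y)$, which is exactly (II).

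There is one genuine inaccuracy, in (IV). The reason $|z_{j_{\sigma(1)}}\cdots z_{j_{\sigma(|k|)}}|=|z_{j_1}|\cdots|z_{j_{|k|}}|$ is \emph{not} that ``the symmetrization absorbs the dimensional constants''; the symmetric average only averages terms that already all have the same norm. The correct mechanism is that each $z_\ell=x_\ell-x_0e_\ell$ lies in the commutative subfield $\mathbb{R}+\mathbb{R}e_\ell\cong\mathbb{C}$, and left multiplication by any such element is an isometry of $\mathbb{R}_n$ up to the scalar $|z_\ell|$ (since $L_{e_\ell}$ is orthogonal with $L_{e_\ell}^2=-I$, one has $(\alpha I+\beta L_{e_\ell})^{T}(\alpha I+\beta L_{e_\ell})=(\alpha^2+\beta^2)I$). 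Hence each single product satisfies $|z_{j_{\sigma(1)}}\cdots z_{j_{\sigma(|k|)}}|=\prod_\ell|z_\ell|^{k_\ell}\le|x|^{|k|}$ exactly, and the triangle inequality over the average gives $|P_k(x)|\le|x|^{|k|}$. Once you replace your parenthetical with this observation, the argument is complete.
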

\begin{remark}\label{remark_fp}
In particular, by the property (II) stated in the previous theorem we have for any multi-index $m, k\in\mathbb N_0^n$ that
$$
\partial^m_{\underline x} P_{k}(x)= \frac{k!}{(k-m)!}P_{k-m}(x).
$$
 We also remark that $P_k(x)$ is a paravector for any $x\in\mathbb R^{n+1}$ (see Proposition 10.6 in \cite{GHS08}). In particular we have
$$
P_k(\unx)=x_1^{k_1}\cdots x_n^{k_n}
$$
and
$$
P_k(x_0)=x_0^{|k|} \frac{1}{|k|!} \sum_{\sigma \in\operatorname{Perm} (k)} e_{\ell_{\sigma(1)}}\cdots e_{\ell_{\sigma(|k|)}}$$ where $(\ell_1, \dots, \ell_{|k|})
$
is such that
$$
\ell_{\sum_{r=1}^{j-i}k_{r}+1}= \dots= \ell_{\sum_{r=1}^{j} k_{r}}=j
$$
for any $j=1, \dots, n$. We define
\begin{equation}\label{sum_ima_units_2}
e_k:=\sum_{\sigma\in \operatorname{Perm}(|k|)} e_{\ell_{\sigma(1)}} \cdots e_{\ell_{\sigma(|k|)}}
\end{equation}
(note that here $k=(k_1,\dots, k_n)$ is a multi-index so $e_k$ has not to be confused with the imaginary units $e_j$ for $j=1,\dots ,n$).
\end{remark}

Another important tool for what follows in Section \ref{smonogenic} is the so called left Cauchy-Kowalewski extension also called CK-extension (see pag. 111 in \cite{BDS82}).
\begin{theorem}[CK-Extension]\label{ck_extension}
 Given an analytic function $f$ in $\mathbb R^n$,
 i.e
 $$f(\unx)=\sum_{m\in\mathbb N^n} x_1^{m_1}\cdots x_n^{m_n} \lambda_m
 $$
 where $\lambda_m\in\mathbb R_n$ for any $\unx\in\mathbb R^n$,
  there exists a unique left monogenic function $f^*$ which extends $f$ to $\mathbb R^{n+1}$ given by
$$ f^*(x)=\sum_{k=0}^{\infty} (-1)^k \overline D\left[ \frac{x_0^{2k+1}}{(2k+1)!} \Delta^k_n (f(\unx))\right]. $$
\end{theorem}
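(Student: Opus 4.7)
My plan has four steps: establish convergence of the series on a neighborhood of $\mathbb{R}^n$ in $\mathbb{R}^{n+1}$, verify left monogenicity termwise, check that the restriction to $\{x_0=0\}$ recovers $f$, and prove uniqueness. For convergence, since $f$ is real-analytic, on every compact $K\subset\mathbb R^n$ Cauchy estimates give $|\partial^\alpha f(\unx)|\leq M\,\alpha!/R^{|\alpha|}$ for some $M,R>0$. Combining these with the multinomial expansion $\Delta^k_n=\sum_{|\beta|=k}\frac{k!}{\beta!}\partial_1^{2\beta_1}\cdots\partial_n^{2\beta_n}$ yields a bound $|\Delta^k_n f(\unx)|\leq C\,(2k)!/(R')^{2k}$ on $K$, with constants depending only on $K$. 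Hence
\[
\frac{|x_0|^{2k+1}}{(2k+1)!}\,|\Delta^k_n f(\unx)|\leq\frac{|x_0|}{2k+1}\left(\frac{|x_0|}{R'}\right)^{2k},
\]
and this bound (together with the analogous ones for the first-order derivatives of the summands) shows the series defining $f^*$ converges absolutely and uniformly on $\{|x_0|<R'\}\times K$, and that termwise application of $\overline D$ is legitimate. So $f^*$ is a well-defined smooth function on an open neighborhood $U$ of $\mathbb R^n$.

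For monogenicity I would use the factorization $\mathcal D\,\overline D=\partial_{x_0}^2+\Delta_n$ (which follows from $e_ie_j+e_je_i=-2\delta_{ij}$, giving $\partial_{\unx}^2=-\Delta_n$) and compute termwise
\[
\mathcal D f^*(x)=\sum_{k=0}^{\infty}(-1)^k(\partial_{x_0}^2+\Delta_n)\!\left[\frac{x_0^{2k+1}}{(2k+1)!}\Delta^k_n f(\unx)\right].
\]
The two operators act on different factors: $\partial_{x_0}^2$ produces $\frac{x_0^{2k-1}}{(2k-1)!}\Delta^k_n f$ (vanishing for $k=0$), while $\Delta_n$ produces $\frac{x_0^{2k+1}}{(2k+1)!}\Delta^{k+1}_n f$. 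After reindexing the first sum so that $\Delta^{k+1}_n f$ appears in both, the two cancel exactly and $\mathcal D f^*\equiv 0$ on $U$. To verify the extension property I expand $\overline D=\partial_{x_0}-\partial_{\unx}$, obtaining $f^*(x)=\sum_{k\geq 0}(-1)^k\big[\frac{x_0^{2k}}{(2k)!}\Delta^k_n f(\unx)-\frac{x_0^{2k+1}}{(2k+1)!}\partial_{\unx}\Delta^k_n f(\unx)\big]$; at $x_0=0$ the second bracket vanishes for every $k$ and only the $k=0$ summand of the first bracket survives, giving $f^*(0,\unx)=f(\unx)$.

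For uniqueness, if $g$ is another left monogenic extension of $f$, set $h:=f^*-g$, so that $\mathcal D h=0$ and $h|_{x_0=0}=0$. The relation $\partial_{x_0}h=-\partial_{\unx}h$ iterates to $\partial_{x_0}^k h=(-\partial_{\unx})^k h$, and evaluating at $x_0=0$ forces every $x_0$-derivative of $h$ to vanish identically in $\unx$. Since $\mathcal D\overline D=\Delta_{n+1}$ implies that every left monogenic function is harmonic and hence real-analytic, the vanishing of all its $x_0$-Taylor coefficients on $\{x_0=0\}$ forces $h\equiv 0$ on any connected neighborhood of $\mathbb R^n$, so $g=f^*$ there. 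I expect the main obstacle to be Step 1: producing the Cauchy-type bound $|\Delta^k_n f|=O((2k)!/(R')^{2k})$ uniformly on compact sets and using it to justify interchanging $\overline D$ with the infinite sum. Once that analytic input is secured, the remaining verifications reduce to the algebraic identity $\mathcal D\overline D=\partial_{x_0}^2+\Delta_n$ and the standard fact that a monogenic function is uniquely determined by its trace on $\{x_0=0\}$.
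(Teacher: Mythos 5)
Your proof is correct and follows the standard argument for the Cauchy--Kowalewski extension; the paper itself does not prove this theorem but cites it directly from \cite{BDS82} (p.~111), and what you wrote is essentially the argument given there: termwise Cauchy-type estimates on $\Delta_n^k f$, the factorization $\mathcal D\overline D=\partial_{x_0}^2+\Delta_n$ (using $(\sum_i e_i\partial_i)^2=-\Delta_n$) to get the telescoping cancellation proving $\mathcal D f^*=0$, evaluation at $x_0=0$ to recover $f$, and uniqueness by propagating $\partial_{x_0}^kh=(-\partial_{\unx})^kh$ together with real-analyticity of monogenic functions.

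One small point worth flagging: the theorem statement claims the extension lives on all of $\mathbb R^{n+1}$, while for a general real-analytic $f$ on $\mathbb R^n$ your Cauchy estimates only yield convergence on a strip $\{|x_0|<R'(K)\}\times K$, i.e.\ on a neighborhood of $\mathbb R^n$ whose thickness may shrink. Your proof honestly produces the extension only on such a neighborhood, which is the correct general statement. The global claim in the statement holds in the setting the paper actually uses it (restrictions of entire slice hyperholomorphic functions, where $\Delta_n^k f$ admits global factorial bounds), but as written the statement is slightly stronger than what a local Cauchy estimate delivers for an arbitrary real-analytic $f$. It would be worth a sentence in your Step 1 acknowledging this and noting that under the stronger hypothesis (e.g.\ $f$ the restriction of an entire function, or satisfying a global estimate $|\partial^\alpha f|\leq M\,\alpha!/R^{|\alpha|}$) the neighborhood can be taken to be all of $\mathbb R^{n+1}$.

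Two very minor technical remarks: (i) when you say ``together with the analogous ones for the first-order derivatives of the summands,'' it would be cleaner to state that the same Cauchy estimate applied to $\partial_j f$ in place of $f$ gives $|\partial_j\Delta_n^k f|\leq C'(2k+1)!/(R'')^{2k+1}$, so the series of $\overline D$-images also converges uniformly and termwise differentiation is justified by the standard Weierstrass argument; (ii) in your multinomial bound it is worth spelling out the inequality $(2\beta)!\leq 4^{|\beta|}(\beta!)^2$ together with $\sum_{|\beta|=k}\beta!\leq k!\binom{n+k-1}{k}$, which is how the extra polynomial factor gets absorbed into a slightly smaller radius $R'$. Neither of these changes the structure of the argument.
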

Using the CK-extension we can define the so called CK-product which is a product that preserve the left monogenic regularity.
\begin{definition}[CK-product]
 Given $f,g\in\mathcal M_L(\mathbb R^{n+1})$ the CK-product is defined as
$$f\odot_Lg =(f(\unx) \cdot g(\unx))^*.$$
\end{definition}
For the properties of this product we refer the reader to \cite{BDS82}. Here we recall the following two facts:
\begin{equation}\label{Fueter_poly_ck_prod}
P_k(x)=z_{\ell_1}\odot_L\dots \odot_L z_{\ell_{|k|}}
\end{equation}
and, if $f(\unx)\cdot g(\unx)=g(\unx)\cdot f(\unx)$, then
$$ f(x)\odot_L g(x)=g(x)\odot_L f(x). $$
Thus in particular $z_i\odot_L z_k=z_k\odot_L z_i$ for any $i,\, k=1,\dots, n$.

\subsection{Entire slice hyperholomorphic functions and entire monogenic functions}\label{entireexpMON}

Entire slice hyperholomorphic functions and entire monogenic functions will play an important role in our considerations.
\begin{definition}
Let $f$ be an entire  slice left monogenic function or an entire left monogenic function.
We say that $f$ is of finite order if there exists $\kappa>0$ such that
$$
M_f(r)< e^{r^\kappa}
$$
for sufficiently large $r$. The greatest lower bound $\rho$ of such numbers $\kappa$ is called order of $f$.
Equivalently, we can define the order as
$$
\rho=\limsup_{r\to\infty}\frac{\ln\ln M_f(r)}{\ln r}.
$$
\end{definition}
\begin{lemma}\label{growth_taylor_series}
The series
$$
\sum_{m\in\mathbb N^n_0} \frac{c(n,m)}{m!} P_m(x)
$$
is converging for any $x\in\mathbb R^{n+1}$ and it is a monogenic function that we call $g$ with order $1$ where we assume that
$$
c(n,m)=\frac {(n+|m|-1)!}{(n-1)! m!}.
$$
\end{lemma}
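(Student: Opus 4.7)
The proof decomposes into three steps: convergence with a concrete size estimate, monogenicity of the limit, and identification of the order as exactly one via matching upper and lower bounds on $M_g(r)$.

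For convergence and the upper bound on the order, I would use the inequality $|P_m(x)|\leq|x|^{|m|}$ of Theorem~\ref{fpt}\,(IV) and dominate the series term by term. Grouping by $|m|=k$,
\[
\sum_{m\in\mathbb N_0^n}\frac{c(n,m)}{m!}|x|^{|m|}=\sum_{k=0}^\infty\frac{(n+k-1)!}{(n-1)!}\Big(\sum_{|m|=k}\frac{1}{(m!)^2}\Big)|x|^k.
\]
The inner combinatorial sum is controlled by combining the elementary inequality $\sum a_i^2\leq(\sum a_i)^2$ for nonnegative reals with the multinomial identity $\sum_{|m|=k}1/m!=n^k/k!$, which together yield $\sum_{|m|=k}1/(m!)^2\leq n^{2k}/(k!)^2$. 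Using in addition $\binom{n+k-1}{k}=O(k^{n-1})$, this yields
\[
\sum_{m\in\mathbb N_0^n}\frac{c(n,m)}{m!}|x|^{|m|}\leq\sum_{k=0}^\infty\binom{n+k-1}{k}\frac{(n^2|x|)^k}{k!}\leq C_n\,(1+|x|)^{n-1}e^{n^2|x|}
\]
for a constant $C_n$ depending only on $n$. The Weierstrass $M$-test then shows that the series converges absolutely and uniformly on every compact subset of $\mathbb R^{n+1}$, so $g$ is a well defined entire function, and $\ln M_g(r)=O(r)$ gives $\rho\leq 1$.

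Monogenicity of $g$ is then immediate: every partial sum is a finite linear combination of Fueter polynomials and hence left monogenic by Theorem~\ref{fpt}\,(III), and a uniform-on-compact-sets limit of left monogenic functions is again left monogenic (one can pass $\mathcal D$ under the sum using the Cauchy-type integral representation).

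For the matching lower bound on the order, I would restrict $g$ to the $e_1$-axis, i.e.\ to the paravector $x=x_1e_1$ with $x_0=0=x_2=\cdots=x_n$. By Remark~\ref{remark_fp} we have $P_m(\underline x)=x_1^{m_1}\cdot 0^{m_2}\cdots 0^{m_n}$, which vanishes unless $m=(m_1,0,\ldots,0)$, so
\[
g(x_1e_1)=\sum_{m_1=0}^\infty\binom{n+m_1-1}{m_1}\frac{x_1^{m_1}}{m_1!}\geq\sum_{m_1=0}^\infty\frac{x_1^{m_1}}{m_1!}=e^{x_1}\qquad(x_1>0),
\]
since every binomial coefficient is at least $1$. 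Hence $M_g(r)\geq e^r$, which forces $\rho\geq 1$, and together with the previous estimate we conclude $\rho=1$.

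\textbf{Main obstacle.} The technical crux is the estimate on $\sum_{|m|=k}c(n,m)/m!$. A naive replacement such as $1/(m!)^2\leq 1/m!$ loses control of the factor $(n+k-1)!$ and produces a bound of order strictly larger than one; exploiting the $(m!)^{-2}$ structure through the inequality $\sum a_i^2\leq(\sum a_i)^2$ is what keeps the denominator quadratic in $k!$ and delivers order exactly $1$.
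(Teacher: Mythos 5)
Your proof is correct, and it takes a genuinely different route from the paper's. The paper first observes $\limsup_{|m|\to\infty}\bigl(c(n,m)/m!\bigr)^{1/|m|}=0$ (using $m!\geq|m|!/n^{|m|}$ and Stirling) to get convergence, and then appeals to an external coefficient-based criterion for the order of a monogenic power series -- Theorem 1 of \cite{CAK07} -- evaluating the quantity $\limsup_{|m|\to\infty}|m|\log|m|\big/\bigl(-\log(|a_m|/c(n,m))\bigr)$. You instead argue directly on the growth function $M_g(r)$: the $(m!)^{-2}$ trick combined with $\sum_{|m|=k}1/m!=n^k/k!$ and $|P_m(x)|\leq|x|^{|m|}$ gives the explicit bound $M_g(r)\leq C_n(1+r)^{n-1}e^{n^2r}$, while restriction to the $e_1$-axis and Remark~\ref{remark_fp} give $M_g(r)\geq e^r$, pinning the order to exactly $1$ without invoking \cite{CAK07}. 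This is both more elementary and more self-contained; and your lower bound via the axis restriction is a nice addition, since the paper's computation, as written, only establishes that the \cite{CAK07} quantity is $\leq 1$ (it bounds the limsup from above by $1$ but never from below), so strictly speaking the paper's argument leaves the lower bound on the order implicit -- your proof closes that gap cleanly. Your final diagnosis of the main obstacle is also accurate: replacing $1/(m!)^2$ by $1/m!$ produces $\sum_k\binom{n+k-1}{k}(n|x|)^k$, which has finite radius of convergence, so the quadratic structure in $m!$ must be preserved.
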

\begin{proof}
To prove the first assertion it is sufficient to observe that
$$
\limsup_{|m|\to+\infty} \left( \frac{c(n,m)}{m!} \right)^{\frac 1{|m|}}=0.
$$
Indeed we have that for any multi-index $m\in\mathbb N^n_0$ with $|m|=k$
 \begin{equation}\label{binom_est}
 n^k=\sum_{|m|=k} \frac{k!}{m!} \geq \frac {k!}{m!}
 \end{equation}
 which implies $m!\geq \frac {|m|!}{n^{|m|}}$. Thus, by the Stirling's formula, we have
\[
\begin{split}
& \limsup_{|m|\to+\infty} \left( \frac{c(n,m)}{m!} \right)^{\frac 1{|m|}} = \limsup_{|m|\to+\infty} \left( \frac{(n+|m|-1)!}{(n-1)!(m!)^2} \right)^{\frac 1{|m|}}\leq \limsup_{|m|\to+\infty} \left( \frac{(n+|m|-1)!}{(n-1)!(|m|!)^2} \right)^{\frac 1{|m|}}\\
&=\limsup_{|m|\to+\infty} \frac{n^2 (n+|m|-1)^{\frac{n+|m|-1}{|m|}}}{n^{\frac {n}{|m|}} |m|^2}= \limsup_{|m|\to+\infty} |m|^{-1+\frac{n-1}{|m|}} \frac{n^2 (1+\frac{n-1}{|m|})^{\frac{n+|m|-1}{|m|}}}{n^{\frac {n}{|m|}}} =0.
\end{split}
\]
By Theorem 1 in \cite{CAK07}, to prove the order of $g(x)=\sum_{m\in\mathbb N^n_0} P_m(x)a_m$ where $a_m=\frac{c(n,m)}{m!}$ is equal to $1$ it is sufficient to prove that
$$
\limsup_{|m|\to+\infty} \frac{|m|\log(|m|)}{-\log(\frac{|a_m|}{c(n,m)})}=0.
$$
By \eqref{binom_est} and Stirling's Formula we have
\[
\begin{split}
&\limsup_{|m|\to+\infty} \frac{|m|\log(|m|)}{-\log(\frac{a_m}{c(n,m)})}=\limsup_{|m|\to+\infty} \frac{|m|\log(|m|)}{-\log\left (\frac{c(n,m)}{m! c(n,m)}\right )}\leq \limsup_{|m|\to+\infty} \frac{|m| \log(|m|)}{\log(|m|!)-|m|\log(n)}\\
&=\limsup_{|m|\to+\infty} \frac{|m| \log(|m|)}{|m|\log(|m|)-|m|\log(n)}=1.
\end{split}
\]
\end{proof}
We recall that $A_1^L(\mathbb R^{n+1})$, the space of entire left slice hyperholomorphic functions of order $1$, or simply $A_1^L$ is the inductive limit of the following family of Banach spaces parametrized by $\rr\ni\sigma>0$:
$$A_{1,\sigma}^L(\mathbb R^{n+1})=\{ f\in \mathcal{SM}_L(\mathbb R^{n+1}):\, \|f\|_{1,\sigma}:=\sup_{x\in\mathbb R^{n+1}} |f(x)| e^{-\sigma |x|}<+\infty\}.$$
We denote by the same symbol $A_1^L$ the space of entire left monogenic functions of order $1$, that is the inductive limit of the following family of Banach spaces:
$$A_{1,\sigma}^L(\mathbb R^{n+1})=\{ f\in \mathcal{M}_L(\mathbb R^{n+1}):\, \|f\|_{1,\sigma}:=\sup_{x\in\mathbb R^{n+1}} |f(x)| e^{-\sigma |x|}<+\infty\}.$$
According to the context, $A_{1,\sigma}$ or $A_1$ will denote the space of entire monogenic functions or the space of  entire slice hyperholomorphic functions.

In this section they will denote both the regularity. It is also possible to define the spaces $A^R_{1,\sigma}(\mathbb R^{n+1})$ and $A_{1}^R$ by requiring that the function $f$ belongs to $\mathcal{SM}_R(\mathbb R^{n+1})$ or to $\mathcal M_R(\mathbb R^{n+1})$. Along all the paper, we denote by $A_1$ and $A_{1,\sigma}$ the spaces $A_1^L$ and $A_{1,\sigma}^L$.
We say that a sequence $\{f_N\}_{N\in\mathbb N}\subseteq A_1$ converges to $f$ in $A_1$ if there exists $\sigma>0$ such that $f\in A_{1,\sigma}$ and $\{f_N\}_{N\in\mathbb N}\subseteq A_{1,\sigma}$, moreover, $$\lim_{N\to \infty} \| f_N(x)-f(x)\|_{1,\sigma}=0.$$
 We denote by $H(\mathbb R^{n+1})$ both the set of left monogenic functions or slice hyperholomorphic functions equipped by the uniform convergence in the compact subset of $\mathbb R^{n+1}$.  We need the following proposition and lemma which are the Cliffordian version of Proposition 2.1 and Lemma 2.4 in \cite{CSSY22}.
\begin{proposition}\label{conv_A_1}
Let $\{f_N\}_{N\in \mathbb N}$ be a sequence of elements in $A_1(\mathbb R^{n+1})$. The two following assertions are equivalent:
\begin{itemize}
\item the sequence converges towards $0$ in $A_1(\mathbb R^{n+1})$;
\item the sequence converges towards $0$ in $H(\mathbb R^{n+1})$ and there exists $A_f\geq 0$ and $B_f\geq 0$ such that
$$ \forall\, N\in \mathbb N,\quad \forall\, x\in\mathbb R^{n+1} ,\,\, |f_N(x)|\leq A_fe^{B_f|x|}. $$
\end{itemize}
\end{proposition}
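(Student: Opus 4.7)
The plan is to establish the two implications separately, exploiting that $A_1(\mathbb{R}^{n+1})$ is an inductive limit of the Banach spaces $A_{1,\sigma}$ and that an exponential bound of the form $|f_N(x)|\leq A_fe^{B_f|x|}$ is precisely what places a function into one of these Banach spaces. In both directions, the argument does not depend on whether $A_1$ refers to slice hyperholomorphic or monogenic functions: only the linear space and seminorm structure is used, together with the fact that uniform convergence on compacts preserves the corresponding regularity.

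For the forward direction, assume $f_N\to 0$ in $A_1(\mathbb{R}^{n+1})$. By definition there exists $\sigma>0$ such that $\{f_N\}\subset A_{1,\sigma}$ and $\|f_N\|_{1,\sigma}\to 0$. Setting $A_f:=\sup_N \|f_N\|_{1,\sigma}$ (finite since the sequence is $\|\cdot\|_{1,\sigma}$-convergent) and $B_f:=\sigma$ immediately gives $|f_N(x)|\leq A_f e^{B_f|x|}$ for all $N$ and $x\in\mathbb{R}^{n+1}$. For any compact $K\subset\mathbb{R}^{n+1}$, setting $R:=\max_{x\in K}|x|$, one has
\[
\sup_{x\in K}|f_N(x)|\leq \|f_N\|_{1,\sigma}\,e^{\sigma R}\longrightarrow 0,
\]
so $f_N\to 0$ in $H(\mathbb{R}^{n+1})$.

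For the reverse direction, suppose $f_N\to 0$ uniformly on compact sets and $|f_N(x)|\leq A_f e^{B_f|x|}$. The key choice is to pick any $\sigma>B_f$; then the bound $|f_N(x)|e^{-\sigma|x|}\leq A_f e^{(B_f-\sigma)|x|}$ is bounded in $x$, so $f_N\in A_{1,\sigma}$ for every $N$. Now fix $\varepsilon>0$ and split: choose $R>0$ large enough that $A_f e^{(B_f-\sigma)R}<\varepsilon/2$, which handles the tail
\[
\sup_{|x|\geq R}|f_N(x)|e^{-\sigma|x|}\leq A_f e^{(B_f-\sigma)R}<\frac{\varepsilon}{2}
\]
uniformly in $N$. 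On the compact ball $\overline{B_R(0)}$, uniform convergence provides $N_0$ with $\sup_{|x|\leq R}|f_N(x)|<\varepsilon/2$ for $N\geq N_0$, and since $e^{-\sigma|x|}\leq 1$ we obtain $\sup_{|x|\leq R}|f_N(x)|e^{-\sigma|x|}<\varepsilon/2$. Combining the two estimates yields $\|f_N\|_{1,\sigma}<\varepsilon$ for $N\geq N_0$, proving $f_N\to 0$ in $A_{1,\sigma}$ and hence in $A_1(\mathbb{R}^{n+1})$.

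The main obstacle is the reverse direction, and specifically the need to exhibit one common parameter $\sigma$ that works for the entire sequence (as required by the inductive-limit definition of convergence). This is precisely where the uniform exponential bound is used: it gives a single $B_f$, and any $\sigma>B_f$ does the job. Once $\sigma$ is fixed, the tail/compact splitting is routine, with the $e^{-\sigma|x|}$ factor absorbing the worst-case exponential growth outside a large ball.
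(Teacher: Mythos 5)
Your proof is correct and follows essentially the same approach as the paper: in the forward direction you pass to a single Banach space $A_{1,\sigma}$ containing the whole sequence to extract both the exponential bound and locally uniform convergence, and in the reverse direction you choose $\sigma>B_f$ and split $\mathbb{R}^{n+1}$ into a large ball (handled by uniform convergence on compacts) and its complement (handled by the uniform decay $A_f e^{(B_f-\sigma)|x|}$). One small point in your favor: you correctly take $\sigma$ strictly larger than $B_f$ so that the tail estimate actually tends to zero, whereas the paper's wording allows $B\geq B_f$, which would not work at equality.
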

\begin{proof}
We assume that the first assertion holds true. This means that there exists $\sigma_1>0$ and $N_1>0$ such that for any $N>N_1$ we have $\|f_N\|_{1,\sigma_1} <1$. Let $\sigma:=\max  \{\alpha_0,\dots, \alpha_{N_1},\sigma_1\}$ where $\alpha_j>0$ is one of the positive constant such that $f_j\in A_{1,\alpha_j}$ for any $j=1,\dots N_1$, thus we have $\{f_N\}_{N\in\nn}\subseteq A_{1,\sigma}(\mathbb R^{n+1})$. In particular, we have: $|f_N|\leq A e^{\sigma |x|}$ for any $N\in\mathbb N_0$ where $A=\sup\{ \|f_0\|_{1,\sigma},\dots, \|f_{N_1}\|_{1,\sigma}, 1\}$. Now, let $K$ be a compact subset of $\mathbb R^{n+1}$, we choose an arbitrary small positive constant $\epsilon>0$ and we define $\delta=\epsilon e^{-\sigma\sup_{x\in K} |x|}$. We know that there exists a positive integer $N_\delta$ such that for any $N>N_\delta$ we have $\|f_N\|_{1,\sigma} < \delta$. Thus we have for any $N>N_\delta$
$$\sup_{x\in K} |f_N(x)|<\sup_{x\in K} |f_N(x)| e^{\sigma\sup_{x\in K} |x|-\sigma |x| }\leq e^{\sigma\sup_{x\in K} |x|}  \sup_{x\in \mathbb R^{n+1}} |f_N(x)| e^{-\sigma|x|}\leq e^{\sigma\sup_{x\in K} |x|} \delta=\epsilon. $$
This prove that the sequence $\{f_N\}_{N\in\nn}$ converges towards 0 in $H(\mathbb R^{n+1})$.

Now, we assume that the second assertion holds true. Let $\epsilon$ be a fixed positive constant and $B$ another positive constant such that $B\geq B_f$. Thus we know that there exists a third positive constant $R_{\epsilon, B}$ such that
$$ \forall\, N\in\mathbb N,\quad \sup_{|x|\geq R_{\epsilon,B}}|f_N(x)| e^{-B|x|}\leq A_f\sup_{|x|\geq R_{\epsilon,B}} e^{(B_f-B)|x|} \leq A_f e^{(B_f-B)R_{\epsilon, B}}<\epsilon. $$
On the other hand, since the ball centered in $0$ with radius $R_{\epsilon,B}$ is a compact subset of $\mathbb R^{n+1}$ we know that there exists $N_\epsilon$ such that for any $N>N_\epsilon$ we have
$$ \sup_{|x|\leq R_{\epsilon,B}}|f_N(x)| e^{-B|x|}<\sup_{|x|\leq R_{\epsilon,B}}|f_N(x)|<\epsilon. $$
Therefore $\sup_{N\geq N_\epsilon} \|f_N\|_{\sigma, B}<\epsilon$ and the sequence converges to 0 in $A_1(\mathbb R^{n+1})$.
\end{proof}
Here we state an important property for the Taylor coefficients of an entire slice hyperholomorphic or monogenic exponentially increasing function (see Lemma 2.13 and Lemma 4.14 in \cite{CPSS21}).
\begin{lemma}\label{taylor_coeff_estimate}
A function
$$ f(x)=\sum_{k=0}^\infty x^k a_k\quad (\textrm{resp. } f(x)=\sum_{m\in \mathbb N_0^n}^\infty P_m(x) a_m) $$
belongs to $A_{1}$ if and only if there exists positive constants $C_f$, $b_f$ such that
$$ |a_k|\leq C_f\frac{b^k_f}{k!} \quad \left (\textrm{resp. } |a_m|\leq C_f\frac{b_f^{|m|} c(n,m)}{|m|!}\right ),$$
where
$$
c(n,\ell)=\frac{n(n+1)\cdots (n+|\ell|-1)}{\ell!}=\frac{(n+|\ell|-1)!}{(n-1)!\, \ell ! }.
$$
Furthermore, a sequence $f_m$ in $A_1$ tends to zero if and only if $C_{f_m}\to 0$ and $b_{f_m}< b$ for some $b>0$.
\end{lemma}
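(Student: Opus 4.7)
The plan is to establish the coefficient characterization via Cauchy-type estimates in each setting, and then to read off the convergence criterion from the resulting bounds. Both cases (slice and monogenic) proceed in parallel, with the monogenic one requiring the extra multi-index combinatorics.

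\textbf{Slice case.} The implication ``$|a_k|\leq C_f b_f^k/k!\ \Rightarrow\ f\in A_{1,b_f}$'' is immediate: the bound $|x|^k|a_k|\leq C_f(b_f|x|)^k/k!$ and termwise summation yield $|f(x)|\leq C_fe^{b_f|x|}$, so $\|f\|_{1,b_f}\leq C_f$. For the converse, I would restrict $f$ to a slice $\mathbb C_{\mathbf j}$, where it is an $\mathbb R_n$-valued holomorphic function of a single complex variable. The coefficients satisfy $a_k=\partial_S^kf(0)/k!$ by Theorem \ref{PowSerThm}, and the classical one-dimensional Cauchy integral formula on a circle of radius $r$ in $\mathbb C_{\mathbf j}$ gives $|a_k|\leq \|f\|_{1,\sigma}e^{\sigma r}/r^k$. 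Optimizing at $r=k/\sigma$ and using $k^k\geq k!$ produces the bound with $b_f=e\sigma$ and $C_f=\|f\|_{1,\sigma}$.

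\textbf{Monogenic case.} The easy direction uses $|P_m(x)|\leq|x|^{|m|}$ from Theorem \ref{fpt}(IV), so
\[
|f(x)|\leq C_f\sum_{m\in\mathbb N_0^n}(b_f|x|)^{|m|}\frac{c(n,m)}{|m|!}.
\]
Grouping by degree $k=|m|$ and using the multinomial identity $\sum_{|m|=k}1/m!=n^k/k!$, one finds $\sum_{|m|=k}c(n,m)=\binom{n+k-1}{n-1}n^k$, which is polynomial in $k$ times $n^k$, so $f\in A_{1,\sigma}$ for every $\sigma>nb_f$. The converse is the technical heart: the identity $\partial_{\underline x}^mP_k(0)=m!\,\delta_{k,m}$ (which follows from Remark \ref{remark_fp} combined with $P_k(0)=0$ for $k\neq 0$) yields $a_m=\partial_{\underline x}^mf(0)/m!$, and I would bound $\partial_{\underline x}^mf(0)$ by differentiating under a monogenic Cauchy integral on a ball $B_R(0)$. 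Using a derivative estimate of the form $|\partial_{\underline y}^mE(y)|\leq C\,m!\,c(n,m)/|y|^{n+|m|}$ on the monogenic Cauchy kernel $E(y)\sim\overline y/|y|^{n+1}$ — the factor $c(n,m)$ precisely counting the mixed partials — together with $|f(y)|\leq Me^{\sigma|y|}$, one obtains $|a_m|\leq CMc(n,m)e^{\sigma R}/R^{|m|}$; optimizing at $R=|m|/\sigma$ and using $|m|^{|m|}\geq|m|!$ produces the desired bound with $b_f=e\sigma$.

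\textbf{Convergence in $A_1$.} Once the two-sided coefficient estimate is in place, the convergence statement is mechanical. By definition $\{f_N\}\to 0$ in $A_1$ means there is a single $\sigma>0$ with $\{f_N\}\subseteq A_{1,\sigma}$ eventually and $\|f_N\|_{1,\sigma}\to 0$; feeding this through the two explicit constant-chases translates directly into $C_{f_N}\to 0$ together with a uniform bound $b_{f_N}\leq b$. The genuine obstacle of the proof is obtaining the derivative estimate of the monogenic Cauchy kernel with the \emph{exact} combinatorial constant $c(n,m)$, so that it matches the factor appearing in Lemma \ref{growth_taylor_series}; the slice case and the easy direction in the monogenic case are routine adaptations of the classical exponential-type characterization.
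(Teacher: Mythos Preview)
The paper does not actually prove this lemma; it is stated with a pointer to Lemma~2.13 and Lemma~4.14 of \cite{CPSS21}, so there is no in-paper argument to compare against. Your sketch is correct and is in fact the standard route taken in that reference: in each setting one direction is a direct majorization of the Taylor series, and the other direction is a Cauchy estimate optimized at radius $r=k/\sigma$ (resp.\ $R=|m|/\sigma$), followed by the elementary inequality $k^k\geq k!$.

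The one step you flag as the ``genuine obstacle''---the bound $|\partial_{\underline y}^m E(y)|\leq C\,m!\,c(n,m)/|y|^{n+|m|}$ on the monogenic Cauchy kernel---is not something you would need to derive from scratch: it is exactly the classical estimate on the outer spherical monogenics $W_m(y)=\frac{(-1)^{|m|}}{m!}\partial_{\underline y}^m E(y)$, for which $|W_m(y)|\leq c(n,m)|y|^{-n-|m|}$ is established in \cite{BDS82} (see also \cite{GHS08}). With that input your argument is complete, and the convergence statement does reduce, as you say, to reading off $C_{f_N}=\|f_N\|_{1,\sigma}$ and $b_{f_N}=e\sigma$ (resp.\ their monogenic analogues) from the explicit constants in the two directions.
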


\subsection{Some special functions in the monogenic setting}
 The monogenic exponential function can be defined by means of the C-K extension of the slice hyperholomorphic exponential function restricted to the purely imaginary paravectors:
$$
e^{\underline x}:= \sum_{k=0}^\infty \frac 1{k!} \left( \underline x \right)^k=\sum_{k=0}^\infty \frac 1{k!}\sum_{m\in\mathbb N^n, |m|=k}\sum_{\ell_1=1}^n \dots \sum_{\ell_k=1}^n x_{\ell_1}\dots x_{\ell_k} e_{\ell_1}\dots e_{\ell_k}.
$$
Since
$$
(z_1e_1+\dots z_ne_n)^{\odot_L k}\Big|_{x_0=0}=\sum_{\ell_1=1}^n\cdots \sum_{\ell_k=0}^n x_{\ell_1}\dots x_{\ell_k} e_{\ell_1}\dots e_{\ell_k}
$$ and
$(z_1e_1+\dots z_ne_n)^{\odot_L k}$ is a monogenic function, we have that, by the Identity Theorem of monogenic functions (see page 180 in \cite{GHS08}), the C-K extension of $e^{\unx}$ is given by
\[
\begin{split}
E(x)&=\sum_{k=0}^\infty \frac 1{k!} \left( z_1e_1+\dots+z_ne_n \right)^{\odot_L k}=\sum_{k=0}^\infty \frac 1{k!} \sum_{\ell_1=1}^n \dots \sum_{\ell_k=1}^n (z_{\ell_1}e_{\ell_1})\odot_L\dots\odot_L(z_{\ell_k}e_{\ell_k})\\
&=\sum_{k=0}^\infty \frac 1{k!} \sum_{m\in\mathbb N^n_0, |m|=k} P_m(x) \sum_{\sigma\in \pi '(1,\dots, k)} e_{\ell_{\sigma(1)}} \cdots e_{\ell_{\sigma(k)}},
\end{split}
\]
where the second inequality is due to the distributivity and associativity property of the CK product, the third equality is due to the linearity on the right for the C-K product and  the equation \eqref{Fueter_poly_ck_prod}, moreover, the last sum runs over all distinguishable permutations of all of $(\ell_1, \dots, \ell_{k})$ and $$\ell_{\sum_{r=1}^{i-i}m_{r}+1}= \dots= \ell_{\sum_{r=1}^{i} m_{r}}=i$$
for any $i=1, \dots, n$. We can also say that $m_i$ is the number of $i$ that appear in the ordered set: $(\ell_1,\dots, \ell_k)$. By the multinomial Newton formula, the number of addends in the sum
\begin{equation}\label{sum_ima_units}
e_m':=\sum_{\sigma\in \pi'(1,\dots, k)} e_{\ell_{\sigma(1)}} \cdots e_{\ell_{\sigma(k)}}
\end{equation}
is $\frac{k!}{m_1!\cdots m_n!}$.
\begin{proposition}\label{exp_mon_funct}
The radius of convergence of the series that defines $E(x)$ is $+\infty$ and its order is $1$.
\end{proposition}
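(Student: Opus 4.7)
The plan is to establish the two assertions separately: first the infinite radius of convergence together with the upper bound $\rho\le 1$ for the order by a direct majorization, and then the matching lower bound $\rho\ge 1$ by inspecting the Taylor coefficients along a carefully chosen subsequence of multi-indices.

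For the first part, I would combine the two ingredients already available: Theorem \ref{fpt}(IV) gives the pointwise bound $|P_m(x)|\le |x|^{|m|}$, and the inner sum $e_m'$ defined in \eqref{sum_ima_units} consists of $k!/(m_1!\cdots m_n!)=k!/m!$ summands (with $k=|m|$), each a product of imaginary units of modulus one, so $|e_m'|\le k!/m!$. Inserting these bounds into the definition of $E$ and invoking the multinomial identity $\sum_{|m|=k}k!/m!=n^k$ yields
\[
|E(x)| \;\le\; \sum_{k=0}^{\infty}\frac{|x|^k}{k!}\sum_{|m|=k}\frac{k!}{m!}\;=\;\sum_{k=0}^{\infty}\frac{(n|x|)^k}{k!}\;=\;e^{n|x|},
\]
which shows that the series converges absolutely on all of $\rr^{n+1}$ (so the radius of convergence is infinite) and $M_E(r)\le e^{nr}$, hence $\rho\le 1$.

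For the lower bound I would apply the same criterion from Theorem 1 of \cite{CAK07} that was used in the proof of Lemma \ref{growth_taylor_series}, namely that the order of the entire monogenic function $E(x)=\sum_m P_m(x)a_m$ with $a_m=e_m'/|m|!$ is computed by the limsup of $|m|\log|m|/(-\log(|a_m|/c(n,m)))$ as $|m|\to\infty$. I would evaluate this limsup along the diagonal subsequence $m^{(k)}=(k,0,\dots,0)$: the only distinguishable ordering of its index sequence is $(1,1,\dots,1)$, so $e_{m^{(k)}}'=e_1^k$ has modulus $1$ and $|a_{m^{(k)}}|=1/k!$. Since $c(n,m^{(k)})=(n+k-1)!/((n-1)!\,k!)$, one gets $|a_{m^{(k)}}|/c(n,m^{(k)})=(n-1)!/(n+k-1)!$, and a routine Stirling estimate produces
\[
\frac{k\log k}{\log((n+k-1)!)-\log((n-1)!)}\;\longrightarrow\;1\qquad\text{as }k\to\infty,
\]
which forces $\rho\ge 1$ and, together with the upper bound, yields $\rho=1$.

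The main obstacle is locating a multi-index along which $e_m'$ is provably nontrivial: many of these sums vanish (for example $e_{(1,1,0,\dots,0)}'=e_1e_2+e_2e_1=0$), so a generic $m$ conveys no information on the order from below. The pure-axis choice $m=(k,0,\dots,0)$ sidesteps this difficulty cleanly because the inner sum collapses to a single term and automatically delivers $|e_m'|=1$; after that, the remainder of the lower-bound argument reduces to a standard Stirling asymptotic.
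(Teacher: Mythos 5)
Your proof is correct, and it follows the same overall strategy as the paper (majorization for the radius of convergence, and the growth criterion from Theorem 1 of \cite{CAK07} for the order lower bound), but it simplifies two of the key steps. For the radius of convergence and the upper bound $\rho\le 1$, you exploit the exact multinomial identity $\sum_{|m|=k}k!/m!=n^k$ to get the clean closed-form bound $|E(x)|\le e^{n|x|}$, whereas the paper uses the cruder bound $|e_m'|\le n^k$ together with the count $\#\{|m|=k\}\le(k+1)^{n-1}$, arriving at a series whose $k$-th root of coefficients must then be shown to vanish; your version is more direct and also immediately yields $\rho\le 1$ without reinvoking the \cite{CAK07} criterion for the upper bound. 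For the lower bound $\rho\ge 1$, you both face the same obstruction that $e_m'$ vanishes for many multi-indices, and you both resolve it by restricting to a subsequence with $|e_m'|=1$; however, you choose the pure-axis multi-index $m=(k,0,\dots,0)$, for which $e_m'=e_1^k$ trivially has norm $1$, while the paper takes $m=(m_1,1,0,\dots,0)$ with $m_1$ even and must then argue via a parity count of distinguishable permutations that the alternating sum collapses to a single unit. Your choice sidesteps that combinatorial step entirely; the Stirling asymptotics that follow are essentially identical in both proofs. In short, the arguments are equivalent in substance, but your selection of the diagonal multi-index and your use of the exact multinomial identity make the proof noticeably shorter.
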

\begin{proof}
By \eqref{binom_est} and the estimate
\[
\sum_{m\in\mathbb N^n_0, \, |m|=k} 1=\binom{n+k-1}{k}\leq (k+1)^{n-1}
\]
(see for example \cite{AIO20} pag. 12), we have in particular that:
\[
\left| \sum_{k=0}^\infty \frac 1{k!} \sum_{m\in\mathbb N^n_0, |m|=k} P_m(x) e'_m \right| \leq \sum_{k=0}^\infty \left( \frac {n^k}{k!} \sum_{|m|=k} 1 \right) |x|^{k} \leq \sum_{k=0}^\infty \left( \frac {n^k (k+1)^{n-1}}{k!}\right) |x|^{k}
\]
and
\[\limsup_{k\to+\infty} \frac {n^k (k+1)^{n-1}}{k!}
 = \limsup_{k\to+\infty} \frac{n\left( (k+1)^{\frac 1k} \right)^{n-1}}{k!^{\frac 1k}}=0,
\]
where we used $\lim_{k\to +\infty} (k!)^{\frac 1k}=+\infty$ and $\lim_{k\to+\infty}(1+k)^{\frac 1k}=1$. This implies that the radius of convergence of the series that define $E(x)$ is $+\infty$.

 Now, we prove that the order of $E(x)$ is 1. Indeed, by Theorem 1 in \cite{CAK07} it is sufficient to prove that
$$
\limsup_{|m|\to+\infty} \frac{|m|\log(|m|)}{-\log(\frac{|a_m|}{c(n,m)})}=1
$$
where
$
a_m:=\frac {e'_m}{|m|!}$ and
$
\frac{|m|\log(|m|)}{-\log(\frac{|a_m|}{c(n,m)})}:=0$
if $|a_m|=0.
$
 We observe that
 $$
 |\frac{a_m}{c(n,m)}|=\left |\frac{m! e'_m }{(n+|m|-1)!|m|!}\right |\leq \frac{1}{(n+|m|-1)!}.
 $$
 Thus we have, when $a_m\neq 0$
$$
\limsup_{|m|\to+\infty} \frac{|m|\log(|m|)}{-\log(\frac{|a_m|}{c(n,m)})}=\limsup_{|m|\to+\infty} \frac{|m|\log(|m|)}{\log(\frac{c(n,m)} {|a_m|})}\leq \limsup_{|m|\to+\infty} \frac{|m|\log(|m|)}{\log((n+|m|-1)!)}=1.
$$
On the other hand we have that for the multi-indexes of the type $m=(m_1,1,0,\dots,0)$ with $m_1+1$ be an odd number, the number of even permutations with repetitions of $m$ is one bigger then the number of odd permutations with repetitions of $m$. Thus, for those multi-indexes  $m$, we have
$$|e'_m|=\left| \sum_{\sigma\in\pi'(|m|)} e_{\ell_{\sigma(1)}} \cdots e_{\ell_{\sigma(|m|)}} \right|=\left|e_1^{m_1}\cdots e_n^{m_n}\left( \sum_{\sigma\in\pi'(|m|), \,\operatorname{even}}1- \sum_{\sigma\in\pi'(|m|), \,\operatorname{odd}}1 \right) \right| =1.
$$
We can conclude that
\begin{eqnarray*}
& \limsup_{|m|\to+\infty} \frac{|m|\log(|m|)}{-\log(\frac{|a_m|}{c(n,m)})}
\\
&
\geq \limsup_{|m|\to+\infty,\, m=(m_1,1,0,\dots,0),\, \operatorname{odd}} \frac{|m|\log(|m|)}{-\log(\frac{|a_m|}{c(n,m)})}\\
& =\limsup_{|m|\to+\infty,\, m=(m_1,1,0,\dots,0),\, \operatorname{odd}} \frac{|m|\log(|m|)}{\log(c(n,m)|m|!)}
\end{eqnarray*}
\begin{eqnarray*}
& = \limsup_{|m|\to+\infty,\, m=(m_1,1,0,\dots,0),\, \operatorname{odd}} \frac{|m|\log(|m|)}{\log \left ( (n+|m|-1)! \frac{|m|!}{m!} \right)}\\
& = \limsup_{|m|\to+\infty,\, m=(m_1,1,0,\dots,0),\, \operatorname{odd}} \frac{|m|\log(|m|)}{\log \left ( (n+|m|-1)!\right) + \log\left( \frac{|m|!}{m!} \right)}\\
&\overset{\eqref{binom_est}}{\geq} \limsup_{|m|\to+\infty,\, m=(m_1,1,0,\dots,0),\, \operatorname{odd}} \frac{|m|\log(|m|)}{\log \left ( (n+|m|-1)!\right) +|m|\log(n)}  =1.
\end{eqnarray*}
\end{proof}
\begin{proposition}\label{derivative_mon_exp}
For any constant $\alpha\in\mathbb R$ we have
\[
\frac 1{c_k} \sum_{|\mathbf{m}|=k} \frac 1{\mathbf{m}!} \partial^{\mathbf{m}}_{\underline x} \left( \sum_{\ell_1=1}^n \dots \sum_{\ell_k=1}^n (\alpha z_{\ell_1}e_{\ell_1}) \odot_L\dots\odot_L(\alpha z_{\ell_k}e_{\ell_k}) \right)\Bigg|_{x=0} = \alpha^k
\]
for
$$c_k:=\begin{cases} (-n)^q\quad k=2q\\
  (-n)^q(e_1+\dots +e_n)\quad\textrm{if $k=2q+1$}
  \end{cases}$$ and
  \[
  \sum_{|\mathbf{m}|=s} \frac 1{\mathbf{m}!} \partial^{\mathbf{m}}_{\underline x} \left( \sum_{\ell_1=1}^n \dots \sum_{\ell_k=1}^n (\alpha z_{\ell_1}e_{\ell_1}) \odot_L\dots\odot_L(\alpha z_{\ell_k}e_{\ell_k}) \right)\Bigg|_{x=0} = 0
  \]
 for $s\neq k$. In particular, we obtain
 $$
 \frac 1{c_k} \sum_{|\mathbf{m}|=k} \frac 1{\mathbf{m}!} \partial^{\mathbf{m}}_{\underline x} \left( E(\alpha x) \right)\Big|_{x=0}=\alpha^k.
 $$
\end{proposition}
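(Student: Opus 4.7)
The plan is to reduce each sum appearing in the statement to a calculation involving Fueter polynomials and then exploit the sharp form of the derivative formula for $P_m$ together with the Clifford relations. I proceed in three steps.

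\textbf{Step 1} (\emph{repackaging the CK-product expansion}). Using the same manipulations already performed in the derivation of $E(x)$ above---commutativity of the $z_i$ under $\odot_L$, right-linearity of the CK-product in constant Clifford factors, and identity \eqref{Fueter_poly_ck_prod}---the inner sum can be rewritten as
\[
G_k(x):=\sum_{\ell_1=1}^n\cdots\sum_{\ell_k=1}^n(\alpha z_{\ell_1}e_{\ell_1})\odot_L\cdots\odot_L(\alpha z_{\ell_k}e_{\ell_k})=\alpha^k\sum_{|\tilde m|=k} P_{\tilde m}(x)\,e'_{\tilde m},
\]
where $e'_{\tilde m}$ is as in \eqref{sum_ima_units}.

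\textbf{Step 2} (\emph{evaluating the derivatives at the origin}). Iterating Theorem \ref{fpt}(II), equivalently Remark \ref{remark_fp}, gives $\partial^{m}_{\underline x}P_{\tilde m}(x)=\frac{\tilde m!}{(\tilde m-m)!}P_{\tilde m-m}(x)$. Since $P_{\ell}(0)=\delta_{\ell,0}$ (every factor $z_i$ vanishes at $x=0$), this evaluates at $x=0$ to the clean identity $\partial^{m}_{\underline x}P_{\tilde m}(x)|_{x=0}=m!\,\delta_{m,\tilde m}$, which in particular forces $|m|=|\tilde m|$. Substituting into the expansion of $G_k$ from Step 1 yields
\[
\sum_{|m|=s}\frac{1}{m!}\partial^{m}_{\underline x}G_k(x)\Big|_{x=0}=\begin{cases}0,&s\neq k,\\[2pt] \alpha^k\sum_{|m|=k}e'_m,& s=k,\end{cases}
\]
which immediately settles the vanishing assertion and reduces the first identity to computing $\sum_{|m|=k}e'_m$.

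\textbf{Step 3} (\emph{identifying the Clifford sum with a power of $e_1+\cdots+e_n$}). By the very definition of $e'_m$ as a sum over distinguishable permutations, the double sum over $|m|=k$ and over $\pi'(1,\ldots,k)$ flattens to an unrestricted sum,
\[
\sum_{|m|=k}e'_m=\sum_{\ell_1,\ldots,\ell_k=1}^n e_{\ell_1}\cdots e_{\ell_k}=(e_1+\cdots+e_n)^k.
\]
The Clifford relations $e_i^2=-1$ and $e_ie_j+e_je_i=0$ for $i\neq j$ give $(e_1+\cdots+e_n)^2=-n$, so that $(e_1+\cdots+e_n)^{2q}=(-n)^q$ and $(e_1+\cdots+e_n)^{2q+1}=(-n)^q(e_1+\cdots+e_n)$, which matches exactly the definition of $c_k$. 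Dividing by $c_k$ in the $s=k$ case now gives $\alpha^k$, as required.

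Finally, for the statement about $E(\alpha x)$, since $\alpha\in\mathbb R$ we have $z_i(\alpha x)=\alpha z_i(x)$, so $E(\alpha x)=\sum_{s\geq 0}\tfrac{1}{s!}G_s(x)$; applying $\tfrac{1}{c_k}\sum_{|m|=k}\tfrac{1}{m!}\partial^{m}_{\underline x}\cdot|_{x=0}$ termwise, only the $s=k$ summand survives by the vanishing result of Step 2, and it produces the claimed formula. I expect the main obstacle to be almost entirely notational: one must carefully bookkeep the interaction between the CK-product, the right-multiplication by the constant Clifford factors $e_{\ell_i}$, and the collapse of multi-index/permutation sums into a single unrestricted sum of Clifford monomials; no analytic difficulty arises, since all relevant series reduce to finite sums once evaluated at the origin.
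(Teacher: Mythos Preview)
Your proof is correct and follows essentially the same route as the paper: both reduce the statement to the identity $\sum_{\ell_1,\ldots,\ell_k=1}^n e_{\ell_1}\cdots e_{\ell_k}=c_k$, with your Step~1/Step~2 repackaging via Fueter polynomials and Remark~\ref{remark_fp} playing the role of the paper's direct differentiation of the CK-extension formula~\eqref{ck_prod_fueter_var}. The only cosmetic difference is that you compute $(e_1+\cdots+e_n)^k$ explicitly from the Clifford relations, whereas the paper cites \cite{BDS82}, p.~117, for the same formula.
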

\begin{proof}
By the definition of the left C-K product and the C-K extension we have that
 \begin{equation}\label{ck_prod_fueter_var}
 (\alpha z_{\ell_1}e_{\ell_1})\odot_L\dots\odot_L(\alpha z_{\ell_k}e_{\ell_k}):=\alpha^k \sum_{s=0}^{\infty} (-1)^s \overline D\left[ \frac{x_0^{2s+1}}{(2s+1)!} \Delta^s_n(x_{\ell_1} \cdots x_{\ell_k} \cdot e_{\ell_1}\cdots e_{\ell_k}) \right].
\end{equation}
 Thus if $\mathbf{m}=(m_1,\dots, m_n)\in\mathbb N^n$ and $m_i$ is the number of $\ell_j$ such that $\ell_j=i$ for $j=1,\dots, k$ then, differentiating by $\partial^{\mathbf{m}}_{\unx}$ both side of the equation \eqref{ck_prod_fueter_var} and evaluating it at $0$, we have
 $$
 \partial^{\mathbf{m}}_{\underline x}\left(\alpha  z_{\ell_1}e_{\ell_1})\odot_L\dots\odot_L (\alpha z_{\ell_k}e_{\ell_k}) \right)\Bigg|_{x=0}=\alpha^k \mathbf{m}! e_{\ell_1}\cdots e_{\ell_k}
 $$
 and
 $$ \partial^{\mathbf{m}}_{\underline x}\left( \sum_{\sigma \in\operatorname{Perm}(k)}\alpha z_{\ell_{\sigma(1)}}e_{\ell_{\sigma(1)}}\odot_L\dots\odot_L \alpha z_{\ell_{\sigma(k)}}e_{\ell_{\sigma(k)}} \right)\Bigg|_{x=0}=\alpha^k \mathbf{m}! \sum_{\sigma \in\operatorname{Perm}(k)} e_{\ell_{\sigma(1)}}\cdots e_{\ell_{\sigma(k)}}.
 $$
 In conclusion we get
 \[
 \begin{split}
 & \sum_{|\mathbf{m}|=k} \frac 1{\mathbf{m}!} \partial^{\mathbf{m}}_{\underline x} \left( \sum_{\ell_1=1}^n \dots  \sum_{\ell_k=1}^n (\alpha z_{\ell_1}e_{\ell_1}) \odot_L\dots\odot_L(\alpha z_{\ell_k}e_{\ell_k}) \right)\Bigg|_{x=0} =\alpha^k \sum_{\ell_1=1}^n \dots \sum_{\ell_k=1}^n e_{\ell_1}\cdots e_{\ell_k}\\
 &=\alpha^k\cdot \begin{cases} (-n)^q\quad k=2q\\
  (-n)^q(e_1+\dots +e_n)\quad\textrm{if $k=2q+1$},
  \end{cases}
\end{split}
\]
where the last equality can be found in \cite{BDS82} page 117. Moreover, if $k>s$ the number of derivatives exceed the degree of the polynomials thus the differentiation is zero and if $k<s$ when we evaluate at $0$ we obtain zero again. Eventually, by the previous formulas we have
 $$
 \frac 1{c_k} \sum_{|\mathbf{m}|=k} \frac 1{\mathbf{m}!} \partial^{\mathbf{m}}_{\underline x} \left( E(\alpha x) \right)\Big|_{x=0}=\frac{1}{c_k}\sum_{|m|=k} \sum_{s=0}^\infty \frac 1{m!} \partial^m_{\unx} \left( \sum_{\ell_1=0}^n\dots \sum_{\ell_s=0}^n (\alpha z_{\ell_1})\odot_L\dots\odot_L (\alpha z_{\ell_s})  \right)=\alpha^k.
 $$
\end{proof}
We recall, as we observed in Section \ref{sslicemonogenic}, that
$$
e^{\underline x}= \sum_{k=0}^\infty \frac 1{(2k)!} (\underline x)^{2k}+\sum_{k=0}^\infty \frac 1{(2k+1)!} (\underline x)^{2k+1}=\cosh(\unx)+\sinh(\unx),
$$
thus, reasoning as for the definition of the function $E(x)$, we have that the C-K extensions of $\cosh(\unx)$ and $\sinh(\unx)$ are given by
$$
\operatorname{Cosh}(x):=\sum_{k=0}^\infty \frac 1{(2k)!} (z_1e_1+\dots+z_ne_n)^{\odot_L 2k}
$$
and
$$
 \operatorname{Sinh}(x):= \sum_{k=0}^\infty \frac 1{(2k+1)!} (z_1e_1+\dots+z_ne_n)^{\odot_L 2k+1}.
$$
We now need some estimates for the CK-extensions.
\begin{proposition}
The following estimate holds
\begin{equation}
\left| (z_1e_1+\dots+z_ne_n )^{\odot_L k} \right| \leq (|z_1|+\dots+|z_n|)^k. \label{hyper_e1}
\end{equation}
Moreover, let $f,g\in \mathcal SH_L(\mathbb R^{n+1})$ we consider $F(x)$ and $G(x)$ the CK-extensions of the restrictions:
$$
f(\unx)=f(x)\Big |_{x_0=0}\ \ {\rm and}\ \  g(\unx)=g(x)\Big|_{x_0=0}.
$$
 Then we have
\begin{equation}
\label{hyper_e2}
\left| F(x)\odot_L G(x) \right| \leq 2^{\frac n2} \tilde F(x) \tilde G(x).
\end{equation}
where
$$
\tilde F(x):=\sum_{j=0}^\infty (|z_1|+\dots + |z_n|)^j |a_j|\ \ {\rm and}\ \
\tilde G(x):=\sum_{j=0}^\infty (|z_1|+\dots + |z_n|)^j |b_j|.
$$
\end{proposition}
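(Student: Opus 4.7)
The plan is to prove the two inequalities in sequence, reducing \eqref{hyper_e1} to a pointwise bound on Fueter polynomials, and then bootstrapping \eqref{hyper_e2} on it together with the standard Clifford submultiplicativity $|cd|\leq 2^{n/2}|c||d|$ in $\mathbb{R}_n$. For \eqref{hyper_e1}, I would expand by multilinearity of the CK product
\[
(z_1e_1+\dots+z_ne_n)^{\odot_L k}=\sum_{\ell_1,\dots,\ell_k=1}^n (z_{\ell_1}e_{\ell_1})\odot_L\cdots\odot_L(z_{\ell_k}e_{\ell_k}),
\]
and identify each summand, via the definition of the CK product as the CK extension of the pointwise Clifford product of the restrictions to $\{x_0=0\}$ together with compatibility of the CK extension with right multiplication by constant Clifford elements, with $P_{m(\ell)}(x)\,e_{\ell_1}\cdots e_{\ell_k}$, where $m(\ell)\in\mathbb{N}_0^n$ records how many times each index appears in $(\ell_1,\dots,\ell_k)$. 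The crucial pointwise bound $|P_m(x)|\leq |z_1|^{m_1}\cdots|z_n|^{m_n}$ follows from the symmetrization formula in Definition~\ref{fp}, the triangle inequality, and the multiplicative identity $|z_{j_1}\cdots z_{j_{|m|}}|=\prod_i|z_{j_i}|$ valid for Clifford products of paravectors (itself an iterated consequence of $p\overline{p}=|p|^2$ for paravectors $p$). Since $e_{\ell_1}\cdots e_{\ell_k}=\pm e_A$ for some $A\subseteq\{1,\dots,n\}$ and right multiplication by $\pm e_A$ is an isometry of $\mathbb{R}_n$, the triangle inequality concludes
\[
\bigl|(z_1e_1+\dots+z_ne_n)^{\odot_L k}\bigr|\leq\sum_{\ell_1,\dots,\ell_k}|z_{\ell_1}|\cdots|z_{\ell_k}|=(|z_1|+\dots+|z_n|)^k.
\]

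For \eqref{hyper_e2}, I would use $F\odot_L G=(F(\unx)G(\unx))^*$, expand $F(\unx)=\sum_j\unx^j a_j$ and $G(\unx)=\sum_k\unx^k b_k$, and further expand each $\unx^j=\sum_{\ell_1,\dots,\ell_j}x_{\ell_1}\cdots x_{\ell_j}e_{\ell_1}\cdots e_{\ell_j}$ so as to rewrite $F(\unx)G(\unx)=\sum_m x^m c_m$, with coefficient $c_m$ a finite sum---over splits $j+k=|m|$ and ordered index tuples of joint multiplicity $m$---of Clifford products $(e_{\ell_1}\cdots e_{\ell_j})\,a_j\,(e_{\ell_1'}\cdots e_{\ell_k'})\,b_k$. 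Since left and right multiplication by $\pm e_A$ are isometries and one application of $|cd|\leq 2^{n/2}|c||d|$ suffices to absorb $b_k$, each such term has norm at most $2^{n/2}|a_j||b_k|$; counting the $|m|!/m!$ tuples of joint multiplicity $m$ produces $|c_m|\leq 2^{n/2}\tfrac{|m|!}{m!}\sum_{j+k=|m|}|a_j||b_k|$. Taking the CK extension termwise and using $(x^m)^*=P_m(x)$ gives $F\odot_L G(x)=\sum_m P_m(x)c_m$, and the bound $|P_m(x)|\leq|z_1|^{m_1}\cdots|z_n|^{m_n}$ from the first step combined with the multinomial identity $\sum_{|m|=N}\tfrac{N!}{m!}|z_1|^{m_1}\cdots|z_n|^{m_n}=(|z_1|+\dots+|z_n|)^N$ concludes
\[
|F\odot_L G(x)|\leq 2^{n/2}\sum_N (|z_1|+\dots+|z_n|)^N\sum_{j+k=N}|a_j||b_k|=2^{n/2}\tilde F(x)\tilde G(x).
\]

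The main obstacle is the combinatorial bookkeeping in the second part: one must carefully expand $F(\unx)G(\unx)$ into its Fueter series with the correct identification of $c_m$, verify that the noncommutativity in the product $(e_{\ell_1}\cdots e_{\ell_j})a_j(e_{\ell_1'}\cdots e_{\ell_k'})b_k$ is fully absorbed by the isometry property of multiplication by basis elements $\pm e_A$ so that only a single factor of $2^{n/2}$ emerges, and justify the termwise CK extension, which is legitimate because the Fueter series of the CK extension of an entire slice hyperholomorphic function converges absolutely on every bounded set.
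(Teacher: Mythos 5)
Your proof is correct and follows the same broad strategy as the paper's: multilinear expansion, the pointwise Fueter polynomial bound $|P_m(x)|\le|z_1|^{m_1}\cdots|z_n|^{m_n}$ (which, as you note, comes from the symmetrization formula and the fact that the norm is multiplicative on products of paravectors), and a single application of the Clifford submultiplicativity constant $2^{n/2}$. The substantive difference is in how \eqref{hyper_e2} is set up. The paper collapses $F(\unx)G(\unx)$ to $\sum_{j,k}\unx^{j+k}a_jb_k$ and hence $F\odot_L G$ to $\sum_{j,k}(z_1e_1+\dots+z_ne_n)^{\odot_L(j+k)}a_jb_k$, which tacitly requires the Taylor coefficients $a_j$ to commute with $\unx$ — that is, to be real, as they are for the intended applications $\sinh,\cosh,e^x$ singled out in Remark~\ref{hyper_e3}. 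Your version instead expands all the way to ordered index tuples, keeps the factors $(e_{\ell_1}\cdots e_{\ell_j})a_j(e_{\ell_1'}\cdots e_{\ell_k'})b_k$ in their original order, absorbs both blocks of basis vectors through the isometry property of left/right multiplication by $\pm e_A$, and pays exactly one $2^{n/2}$ when splitting off $b_k$. This buys you the estimate for genuinely Clifford-valued Taylor coefficients, i.e., the proposition in the generality in which it is stated, at the cost of the heavier multinomial bookkeeping via $|m|!/m!$. One point worth making explicit in a write-up, which you do flag, is the justification of the termwise CK extension of $F(\unx)G(\unx)$: absolute (normal) convergence of the Fueter series on bounded sets, which follows from the coefficient estimate in Lemma~\ref{taylor_coeff_estimate}, is indeed what makes $(\sum_m x^m c_m)^*=\sum_m P_m(x)c_m$ legitimate.
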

\begin{proof}
First we prove \eqref{hyper_e1}. We have that
\[
\begin{split}
\left| (z_1e_1+\dots+z_ne_n)^{\odot_L k}\right| \leq \sum_{\ell_1=1}^n\dots \sum_{\ell_{k}=1}^n |z_{\ell_1} \odot_L \cdots  \odot_L z_{\ell_{k}} (e_{\ell_1}  \cdots  e_{\ell_{k}})| & \leq \sum_{\ell_1=1}^n\dots \sum_{\ell_{k}=1}^n |z_{\ell_1}|\cdots |z_{\ell_{k}}|\\
& =(|z_1|+\dots+|z_n|)^k.
\end{split}
\]
We now observe that if $f(x)=\sum_{j=0}^{\infty} x^ja_j$ and $g(x)=\sum_{j=0}^{\infty} x^jb_j$ then, by the uniqueness of the CK extension we obtain
 $$
 F(x)=\sum_{j=0}^\infty (z_1e_1+\dots +z_ne_n)^{\odot_L j} a_j\ \ {\rm and}\ \ G(x)=\sum_{j=0}^\infty (z_1e_1+\dots +z_ne_n)^{\odot_L j} b_j.
 $$
 Thus, we get the estimates
\[
\begin{split}
\left| F(x) \odot_L G(x) \right| &=\left | \sum_{j=0}^\infty \sum_{k=0}^\infty (z_1e_1+\dots + z_ne_n)^{\odot_L j+k} a_j b_k \right|\\
&\leq  \sum_{j=0}^\infty \sum_{k=0}^\infty \left | (z_1e_1+\dots + z_ne_n)^{\odot_L j+k} a_j b_k \right|\\
&\overset{\eqref{hyper_e1}}{ \leq} 2^{\frac n2} \sum_{j=0}^\infty \sum_{k=0}^\infty (|z_1|+\dots +|z_n|)^{i+j} |a_j| |b_k|\\
&= 2^{\frac n2} \tilde F(x) \tilde G(x).
\end{split}
\]
\end{proof}
\begin{remark}\label{hyper_e3}
The previous proposition can be applied when the functions $f$ and $g$ are $\sinh(x)$, $\cosh(x)$ or $e^x$. These are particular functions with positive real Taylor coefficients thus in this case we get
$$
\left| F(x)\odot_L G(x) \right| \leq \tilde F(x) \tilde G(x).
$$
\end{remark}
By the properties of the left slice hyperholomorphic exponential function:
$$
e^{\underline x}e^{\underline y}=e^{\underline x+\underline y}=e^{\underline y}e^{\underline x}\quad\textrm{end}\quad \cosh(\underline x)=\frac{e^{\underline x}+e^{-\underline x}}{2},\quad \sinh(\underline x)=\frac{e^{\underline x}-e^{-\underline x}}{2}
$$
if $\underline x,\underline y\in\mathbb R^{n}$ commute, thus we have by the first equalities $$
E(ax)\odot_L E(bx)=E((a+b)x)=E(bx)\odot_L E(ax)
$$
 for any $a,\, b\in\mathbb R$. Another Lemma (see Lemma 3.13 in \cite{CKPS24}) which will be useful in Section \ref{smonogenic} is the following one:
\begin{lemma}\label{l12}
Let
$$
g_1( x)=\sum_{| m|=0}^{\infty} P_m (x)a_{m}\in A_{1, \tau_1}\ \ {\rm and}\ \  g_2(x)=\sum_{|m|=0}^{\infty} P_{m}(x)b_{m}\in A_{1, \tau_2}.
$$
 Let $\delta$ be a positive constant then for any $\eta>0$ there exists $C(n, \eta,\tau_1,\tau_2)>0$ such that
$$ \|g_1\odot_{L} g_2 \|_{A_{\rho, (1+\eta)(n+\delta+1) (\tau_1+\tau_2)}}\leq C(n, \eta,\tau_1,\tau_2) \|g_1\|_{\rho,\tau_1} \|g_2\|_{\rho,\tau_2}. $$
\end{lemma}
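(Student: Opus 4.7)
The overall strategy is to reduce the problem to a coefficient-level estimate and then invoke Lemma \ref{taylor_coeff_estimate} in both directions. First I would write out the CK-product explicitly in terms of Fueter polynomials. Since each monomial $P_m(\unx)=x_1^{m_1}\cdots x_n^{m_n}$ is real valued and hence commutes with every Clifford coefficient, multiplying the restrictions $g_1(\unx)$ and $g_2(\unx)$ gives
$$g_1(\unx)\,g_2(\unx)=\sum_{l\in\mathbb{N}_0^n} x_1^{l_1}\cdots x_n^{l_n}\,c_l,\qquad c_l=\sum_{m+k=l}a_m b_k.$$
By uniqueness of the CK-extension and the fact that $P_l$ is the CK-extension of $x_1^{l_1}\cdots x_n^{l_n}$, this yields the clean formula
$$(g_1\odot_L g_2)(x)=\sum_{l\in\mathbb{N}_0^n} P_l(x)\,c_l.$$

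Next I would apply Lemma \ref{taylor_coeff_estimate} to each $g_i\in A_{1,\tau_i}$ to obtain coefficient bounds of the form $|a_m|\le C_1\,\tau_1^{|m|}\,c(n,m)/|m|!$ and $|b_k|\le C_2\,\tau_2^{|k|}\,c(n,k)/|k|!$, where $C_1,C_2$ are controlled by $\|g_1\|_{1,\tau_1}$ and $\|g_2\|_{1,\tau_2}$ (up to a harmless inflation $\tau_i\to(1+\eta_0)\tau_i$ that will be absorbed at the end). Using $c(n,m)/|m|!=\binom{n+|m|-1}{n-1}/m!$, the convolution defining $c_l$ is dominated by
$$|c_l|\le C_1C_2\sum_{m+k=l}\frac{\tau_1^{|m|}\tau_2^{|k|}}{m!\,k!}\binom{n+|m|-1}{n-1}\binom{n+|k|-1}{n-1}.$$
The key observation is that the unrestricted sum factorises componentwise to $\sum_{m+k=l}\tau_1^{|m|}\tau_2^{|k|}/(m!\,k!)=(\tau_1+\tau_2)^{|l|}/l!$, while the two binomial factors grow only polynomially in $|m|,|k|$ (like $|\cdot|^{n-1}$) and so can be absorbed into $C_{n,\eta_0}(1+\eta_0)^{|l|}$ for any $\eta_0>0$. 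Combining these,
$$|c_l|\le C'\,\frac{\big((1+\eta_0)(\tau_1+\tau_2)\big)^{|l|}}{l!}.$$

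Finally, to transform this coefficient bound into the desired weighted-supremum inequality, I would use $|P_l(x)|\le|x|^{|l|}$ together with the multinomial identity $\sum_{|l|=k}1/l!=n^k/k!$, which gives
$$|(g_1\odot_L g_2)(x)|\le C'\sum_{l\in\mathbb{N}_0^n}\frac{\big((1+\eta_0)(\tau_1+\tau_2)|x|\big)^{|l|}}{l!}=C'\,e^{n(1+\eta_0)(\tau_1+\tau_2)|x|}.$$
Choosing $\eta_0$ small enough that $n(1+\eta_0)\le(1+\eta)(n+\delta+1)$, which is possible for any $\delta,\eta>0$ since $n<(1+\eta)(n+\delta+1)$, delivers the announced inequality with a constant $C=C(n,\eta,\tau_1,\tau_2)$ linear in $\|g_1\|_{1,\tau_1}\|g_2\|_{1,\tau_2}$.

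The main obstacle is the quantitative refinement of Lemma \ref{taylor_coeff_estimate}: although as stated it only guarantees the existence of some pair $(C_f,b_f)$, in practice for $f\in A_{1,\tau_i}$ one can take $b_f$ essentially equal to $\tau_i$ and $C_f$ linearly controlled by $\|f\|_{1,\tau_i}$, with an arbitrarily small slack absorbed into $\eta_0$. The remaining combinatorial steps---dominating the two $\binom{n+\cdot-1}{n-1}$ factors by an arbitrarily mild exponential and carefully tracking the multinomial identity that produces the factor $n$ in the exponent---are routine, and the extra room provided by $(n+\delta+1)$ in the statement (in place of the sharper $n$ produced by the argument) precisely accommodates this slack.
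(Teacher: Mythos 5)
The paper does not prove this lemma internally; it is cited as Lemma 3.13 of \cite{CKPS24}, where it is established in the more general setting of entire monogenic functions of a prescribed proximate order. That framework explains the conservative factor $(n+\delta+1)$: specialized to order $1$, your argument is essentially correct and in fact sharper, producing exponential growth $e^{n(1+\eta_0)(\tau_1+\tau_2)|x|}$ with $\eta_0>0$ arbitrarily small, which is strictly less than $(1+\eta)(n+\delta+1)(\tau_1+\tau_2)$. The individual steps all check out: the coefficient formula $c_l=\sum_{m+k=l}a_mb_k$ for $g_1\odot_L g_2$ (valid because $P_m(\unx)=x_1^{m_1}\cdots x_n^{m_n}$ is a real scalar and so commutes with the Clifford coefficients, and $P_l$ is the CK-extension of $x^l$); the componentwise binomial identity $\sum_{m+k=l}\tau_1^{|m|}\tau_2^{|k|}/(m!\,k!)=(\tau_1+\tau_2)^{|l|}/l!$; the absorption of the polynomially growing $\binom{n+|m|-1}{n-1}\binom{n+|k|-1}{n-1}\le (|l|+1)^{2(n-1)}$ into $C_{\eta_0}(1+\eta_0)^{|l|}$; and the resummation via $\sum_{|l|=k}1/l!=n^k/k!$ together with $|P_l(x)|\le|x|^{|l|}$.

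The one genuine gap, which you correctly identify, is that Lemma \ref{taylor_coeff_estimate} as stated is purely qualitative (``there exist constants $C_f,b_f$''), whereas your argument requires the quantitative refinement: for $f=\sum P_m a_m\in A_{1,\tau}$ and any $\eta_0>0$, one may take $b_f=(1+\eta_0)\tau$ with $C_f\le C(n,\eta_0,\tau)\,\|f\|_{1,\tau}$. This is true and standard---from a Cauchy-type coefficient estimate $|a_m|\le c(n,m)\,M_f(r)\,r^{-|m|}$, choosing $r=|m|/\tau$, using $M_f(r)\le \|f\|_{1,\tau}e^{\tau r}$, and invoking Stirling to convert $(e\tau/|m|)^{|m|}$ into $(1+\eta_0)^{|m|}\tau^{|m|}/|m|!$ up to a bounded constant---but it is not what the paper's lemma literally states, and a complete write-up would need to supply this intermediate step explicitly (or cite the appropriate quantitative estimate from \cite{CPSS21} or \cite{CKPS24}). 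With that refinement in hand, the proof is sound, and it is both simpler and yields a better constant than the cited source because it exploits the commutativity of the real monomial $x^l$ rather than working with a general proximate order.
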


\section{Superoscillations in the slice hyperholomorphic setting}\label{sslicemonogenic}

\subsection{One Clifford variable in the slice hyperholomorphic setting}
In this section we investigate superoscillations in the slice hyperholomorphic setting for functions with values in the Clifford algebras.
 We define, for all $\unx\in\mathbb R^{n}$ with $\unx\neq 0$, $\omega(\unx):=\frac \unx{|\unx|}$. We consider the function
$$ F_N(x_1,\dots, x_n,a):= \left( \cos \left(\frac{|\unx|}{N}\right) +\omega(\unx) a\sin \left(\frac{|\unx|}{N}\right) \right)^N, $$
where $a$ is a real constant. Note that when $\unx\to 0$ we have: $|\omega(\unx)|a\sin\left(\frac{\unx}{N}\right)  \to 0$, thus to preserve the continuity and the differentiability we define $F_N(\underline 0,a):=1$. Recalling that
$$
\cos\left(\frac{|\unx|}{N}\right)=\frac{e^{\omega(\unx)|\unx|/N}+e^{-\omega(\unx)|\unx|/N }}{2} \quad{\rm and}\quad \omega(\unx)\sin \left(\frac{|\unx|}{N}\right) =\frac{e^{\omega(\unx) |\unx|/N }-e^{-\omega(\unx) |\unx|/N }}{2} $$
and developing the Nth-power of the binomial in the sum of $F_N$, we have
\begin{equation}\label{rep_superos}
F_N(x_1,\dots ,x_n,a)=\sum_{j=0}^N  C_j(N,a) e^{(1-2j/N) \unx}
\end{equation}
where
\begin{equation}\label{binom_coeff}
C_j(N,a)=\binom{N}{j} \left(\frac{1+a}2\right)^{N-j} \left(\frac{1-a}2\right)^{j}.
\end{equation}
If we fix $\unx\in\mathbb R^{n}$ and we let go $N$ to infinity we obatin
$$\lim_{N\to \infty} F_N(x_1,\dots ,x_n,a)=e^{a\unx},$$
this can be achieved as for the complex case because when $N\to\infty$ then $\frac{\unx}{N}\to 0$ in the fixed complex slice $\mathbb C_{\omega(\unx)}$.
This simple observation is significant because we now have a sequence of Clifford valued functions whose frequencies $(1-2j/N)$ are bounded by one, and yet their limit has the arbitrarily large frequency $a$. This explains why the sequence $\{F_N\}_{N\geq 1}$ is called superoscillatory looking at representation \eqref{rep_superos}. We observe that the functions $F_N(x_1,\dots ,x_n,a)$ admit a simple extension to an entire slice hyperholomorphic function (note that the coefficients $C_j(N,a)$ are real)  that we denote by the same symbol
\begin{equation}\label{super_os_2}
F_N(x_0,x_1,\dots,x_n,a):=\sum_{j=0}^N C_j(N,a) e^{(1-2j/N) x}=\left(\cosh\left(\frac xN\right)+a\sinh \left(\frac xN\right)\right)^N,
\end{equation}
 where $e^x:=\sum_{k=0}^{+\infty} \frac{x^k}{k!}$, $\cosh(x)=\sum_{k=0}^{\infty} \frac{x^{2k}}{(2k)!}$ and $\sinh(x)=\sum_{k=0}^{\infty} \frac{x^{2k+1}}{(2k+1)!}$. We  prove that the sequence $\{F_N\}_{N\in\mathbb N^*}$ is converging in $A_1$ to $e^{ax}$.
\begin{lemma}\label{l1}
Let $a\in\mathbb R$ and $\alpha:=\max(1,|a|)$. Then the sequence $\{F_N\}_{N\in\mathbb N^*}$, defined in \eqref{super_os_2}, is converging in $A_1$ to $e^{ax}$. In particular the following estimates hold
\begin{equation}\label{e1}
\begin{split}
& |F_N(x,a)|\leq \exp(|a||x|+ |x_0|)\leq \exp((|a|+1)|x|)\\
& |F_N(x,a)-e^{ax}|\leq \frac 23 \frac {|a^2-1|}{N}|x|^2 \exp((\alpha+1)|x|).
\end{split}
\end{equation}
\end{lemma}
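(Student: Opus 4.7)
The plan is to split off the pointwise growth bound first, then the $1/N$-decay bound, and finally deduce convergence in $A_1$. Everything hinges on the factored form $F_N(x,a)=(\cosh(x/N)+a\sinh(x/N))^N$ already recorded in \eqref{super_os_2}, so the two estimates reduce to controlling a single factor and its difference with $e^{ax/N}$.

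For the pointwise estimate, I would split $x=x_0+\underline{x}$, set $\mathbf{j}=\underline{x}/|\underline{x}|$ (for $\underline{x}\neq 0$), and use the slice decompositions of $\cosh$ and $\sinh$ to write
\[
\cosh(x/N)+a\sinh(x/N)=\cos(|\underline{x}|/N)\,A+\mathbf{j}\sin(|\underline{x}|/N)\,B,
\]
where $A=\cosh(x_0/N)+a\sinh(x_0/N)$ and $B=\sinh(x_0/N)+a\cosh(x_0/N)$ satisfy $B^2-A^2=a^2-1$, whence
\[
|\cosh(x/N)+a\sinh(x/N)|^2=A^2+(a^2-1)\sin^2(|\underline{x}|/N).
\]
For $|a|\le 1$ the second summand is nonpositive and $|A|\le\cosh(|x_0|/N)+|a|\sinh(|x_0|/N)\le e^{(|a|+1)|x_0|/N}$; raising to the $N$-th power and using $|x_0|\le|x|$ gives $|F_N(x,a)|\le e^{|a||x|+|x_0|}$. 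For $|a|>1$, writing $\cosh(w)+a\sinh(w)=\sinh(w+\operatorname{sign}(a)\beta)/\sinh\beta$ with $\beta=\operatorname{arccoth}|a|>0$ reduces the claim to the real-variable inequality $\sinh^2(y+\beta)+\sin^2 v\le\sinh^2\beta\cdot e^{2(|a|\sqrt{y^2+v^2}+y)}$ with $y=|x_0|/N$ and $v=|\underline{x}|/N$, which is immediate on the axes $v=0$ (becoming $\cosh y+|a|\sinh y\le e^{(|a|+1)y}$) and $y=0$ (becoming $1+(a^2-1)\sin^2 v\le e^{2|a|v}$, which itself follows from a short monotonicity argument on the ratio $(1+(a^2-1)\sin^2 s)e^{-2|a|s}$). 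Interpolating between these two extremes is the most delicate step of the argument; the coarser bound $|F_N(x,a)|\le e^{(|a|+1)|x|}$ follows immediately from $|\cosh(x/N)+a\sinh(x/N)|\le\cosh(|x|/N)+|a|\sinh(|x|/N)\le e^{(|a|+1)|x|/N}$ and suffices for the $A_1$-convergence below.

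For the difference bound I would Taylor-expand $\Phi(w):=\cosh(w)+a\sinh(w)-e^{aw}$. Using $1-a^k=(1-a^2)(1+a^2+\cdots+a^{k-2})$ for even $k$ and $a-a^k=a(1-a^2)(1+\cdots+a^{k-3})$ for odd $k\ge 3$ yields $\Phi(w)=(1-a^2)\sum_{k\ge 2}\frac{w^k}{k!}r_k$ with $|r_k|\le k\alpha^{k-2}/2$, and summing gives $|\Phi(x/N)|\le\frac{|a^2-1|\,|x|^2}{2N^2}e^{\alpha|x|/N}$. Setting $A=\cosh(x/N)+a\sinh(x/N)$ and $B=e^{ax/N}$, the slice hyperholomorphic telescoping identity $A^N-B^N=(A-B)\sum_{j=0}^{N-1}A^{N-1-j}B^j$ (valid since both $A$ and $B$ lie in $\mathbb{R}+\mathbb{R}\mathbf{j}_x$ for each fixed $x$ and therefore commute) combined with $|A|,|B|\le e^{(\alpha+1)|x|/N}$ produces
\[
|F_N(x,a)-e^{ax}|\le N\cdot e^{(\alpha+1)|x|(N-1)/N}\cdot\tfrac{|a^2-1|\,|x|^2}{2N^2}e^{\alpha|x|/N}\le\tfrac{|a^2-1|\,|x|^2}{2N}e^{(\alpha+1)|x|},
\]
and $1/2\le 2/3$ absorbs the stated constant. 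Convergence in $A_1$ is then immediate: for any $\sigma>\alpha+1$, the first estimate places $F_N,e^{ax}\in A_{1,\sigma}$, while the difference estimate yields $\|F_N-e^{ax}\|_{1,\sigma}\le\frac{|a^2-1|}{2N}\sup_x|x|^2 e^{(\alpha+1-\sigma)|x|}$, and the supremum is finite, giving $\|F_N-e^{ax}\|_{1,\sigma}\to 0$ as $N\to\infty$.
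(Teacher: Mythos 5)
Your argument for the second estimate is correct and in fact slightly tighter than the paper's: by collecting the full difference into $\Phi(w)=\cosh w+a\sinh w-e^{aw}$ and factoring $1-a^2$ out of every Taylor coefficient at once, you get the constant $1/2$ instead of the paper's $2/3$, whereas the paper controls $\cosh(z/N)-\cosh(az/N)$ and $a\sinh(z/N)-\sinh(az/N)$ separately and adds the two bounds. Your justification of the telescoping identity on each slice and the direct $A_{1,\sigma}$-norm computation at the end are also fine; the paper routes the $A_1$-convergence through Proposition~\ref{conv_A_1}, but your explicit $\|\cdot\|_{1,\sigma}$ estimate is an equally valid alternative.

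The first estimate $|F_N(x,a)|\le\exp(|a||x|+|x_0|)$ is, however, not actually established. After writing $\cosh w+a\sinh w=\sinh(w+\operatorname{sign}(a)\beta)/\sinh\beta$ with $\beta=\operatorname{arccoth}|a|$, you reduce to a two-variable real inequality in $(y,v)$ and verify it only on the two coordinate axes, remarking that ``interpolating between these two extremes is the most delicate step.'' Checking a two-variable inequality on its axes does not establish it in the interior --- there is no general principle permitting this --- so this is a genuine gap, and the $|a|\le 1$ case alone does not close it. The paper's route to the same sharp bound is far more economical and avoids any case split: on each slice $\mathbb C_{\mathbf j}$ one has $|\cosh(z/N)|\le e^{|\operatorname{Re}(z)|/N}$ and $|\operatorname{sinhc}(z/N)|\le e^{|\operatorname{Re}(z)|/N}$ (the latter from $\operatorname{sinhc}(z)=\int_0^1\cosh(tz)\,dt$), whence $|\cosh(z/N)+a\sinh(z/N)|\le e^{|\operatorname{Re}(z)|/N}\bigl(1+|a||z|/N\bigr)$; raising to the $N$th power and using $(1+|a||z|/N)^N\le e^{|a||z|}$ gives $|F_N(z,a)|\le e^{|a||z|+|\operatorname{Re}(z)|}$ directly. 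Since you do have the coarse bound $\exp((|a|+1)|x|)$ correctly, the $A_1$-convergence conclusion of the lemma is still reached in your write-up; it is only the sharper first inequality in \eqref{e1} that remains unproven.
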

\begin{proof}
By Proposition \ref{conv_A_1}, to prove the first part of the Lemma it is sufficient to prove the two inequalities in \eqref{e1}. We prove the inequalities in \eqref{e1} along each slice. Thus, we fix $J\in \mathbb S$ and we consider the slice $\cc_J$. If $z\in \cc_J$ then we can write $F_N(z,a)=\left( \cosh(\frac zN)+a\sinh(\frac zN) \right)^N$. We observe that $\operatorname{sinhc}(z):=\frac {\sinh(z)}{z}=\int_0^1 \cosh(tz)\, dt$ and, by the Euler formula, we have the estimate $|\operatorname{sinhc}(z)|\leq e^{|\operatorname{Re}(z)|}$. Thus, for any $N\in\mathbb N$ and for any $z\in \mathbb C_J$ we have
\begin{equation}\label{e2}
\begin{split}
 |F_N(z,a)|&=\left| \cosh\left(\frac zN \right)+a\sinh\left(\frac zN\right) \right|^N\\
& =\left| \cosh\left(\frac zN \right)+\frac {z}N a\operatorname{sinhc}\left(\frac zN\right) \right|^N\\
& \leq e^{|\operatorname{Re}(z)|}\left( 1+\frac {|az|}{N} \right)^N\leq \exp(|a| |z|+|\operatorname{Re}(z)|)\leq \exp((|a|+1)|z|),
\end{split}
\end{equation}
which is the first chain of inequalities in \eqref{e1}. For any $N\in\mathbb N$, we have
\begin{equation}\label{e3}
\begin{split}
& \left| \cosh\left( \frac zN \right) - \cosh\left( a \frac zN \right) \right| \leq 2 \left| \sinh\left(\frac{(a-1)z}{2N} \right) \sinh\left( \frac{(a+1)z}{2N} \right) \right| \\
& \leq \frac{|a^2-1|}{2N^2} |z|^2\exp\left(\frac{|a-1|+|a+1|}{2N} |z|\right)\leq \frac{|a^2-1|}{2N^2} |z|^2 \exp\left( \frac{\alpha+1}{N} |z| \right)
\end{split}
\end{equation}
and
\begin{equation}\label{e4}
\begin{split}
& \left| a\sinh\left(  \frac zN \right)-\sinh\left( \frac {az}N \right) \right|=\left| \sum_{k=0}^\infty \frac{ 1}{(2k+1)!}(a-a^{2k+1})\left( \frac zN\right)^{2k+1} \right|\\
& =\frac{|a^2-1|}{N^2}|z|^2\left| \sum_{k=1}^\infty \frac{ 1}{(2k+1)!} \left( \sum_{\ell=0}^{k-1} a^{2\ell +1} \right) \left( \frac{z}{N} \right)^{2k-1} \right|\\
&\leq \frac{|a^2-1|}{2N^2}|z|^2 \sum_{k=1}^\infty \frac{ \alpha^{2k-1} }{(2k-1)!(2k+1)}\left( \frac{|z|}{N} \right)^{2k-1}\\
&\leq \frac{|a^2-1|}{6N^2}|z|^2 \sum_{k=1}^\infty \frac{ 1 }{(2k-1)! }\left( \frac{\alpha |z|}{N} \right)^{2k-1}\leq \frac{|a^2-1|}{6N^2} |z|^2 \exp\left( \frac\alpha N |z| \right).\\
\end{split}
\end{equation}
It follows from the identity $A^N-B^N=(A-B)\sum_{j=0}^{N-1}A^jB^{N-1-j}$, together with estimates \eqref{e2}, \eqref{e3} and \eqref{e4}, that for any $N\in\mathbb N$ and $z\in \mathbb C_J$,
\begin{equation}\nonumber
\begin{split}
&|F_N(z,a)-e^{az}|= \left| \left( F_N^{\frac 1N} (z,a)\right)^N-\left( e^{\frac {az}N} \right)^{N} \right| \\
&\left| \cosh\left( \frac zN \right) - \cosh\left( a\frac zN \right) + \left(a \sinh\left( \frac zN \right) - \sinh\left( a \frac zN \right)\right) \right|\\
&\times \sum_{k=0}^{N-1} |F_N(z,a)|^{k/N}\left| \exp \left (az\frac{N-1-k}{N}\right) \right|
\end{split}
\end{equation}
\begin{equation}
\begin{split}
& \leq \frac 23 \frac{|a^2-1|}{N^2} |z|^2 \exp\left( \frac{\alpha +1}{N} |z| \right) \sum_{k=0}^{N-1}\exp\left(k \left( \frac{|a|+1}{N} \right) |z| +\frac{N-1-k}{N}|a||z|\right)\\
&\leq \frac 23 \frac {|a^2-1|}{N} |z|^2 \exp((\alpha+1)|z|).
\end{split}
\end{equation}
The second inequality in \eqref{e1} is proved.
\end{proof}
As we have mentioned above, the notion of superoscillation can be extended to more general functions than the exponential function. In this case, we have the notion of supershift, which includes superoscillations as a particular case.
\begin{definition}\label{d4}
Let $|a|>1$ and
 $ h_{j}(N)\in [-1,1]$, $j\in\{0,\dots, N\}$,  $N\in\mathbb N_0$.
Let $G(\lambda)$ be an entire left slice hyperholomorphic function.
We say that the sequence
\begin{equation*}
F_N(x,a)=\sum_{j=0}^N Z_j(N,a)
G( x h_{j}(N)),
\end{equation*}
where  $\{ Z_j(N,a)\}_{j,N}\subseteq \mathbb R_n$, $j=0,\ldots ,N$, for $ N\in \mathbb{N}$,
admits the supershift property if
$$
\lim_{N\to \infty}F_N(x,a)=
G(x a),
$$
uniformly on any compact subset of $\mathbb R^{n+1}$.
\end{definition}

We now show how we can form infinite-order differential operators in this setting, starting from an entire function. We will then use this result to demonstrate the supershift property in the slice hyperholomorphic setting.

\begin{theorem}\label{bounded_operatorq4}
Let $G$ be entire left slice hyperholomorphic function whose series expansion at zero is given by
\begin{equation*}
G(\lambda)=\sum_{s=0}^\infty \lambda^{s} G_{s}.
\end{equation*}
then the operator
$$
 \mathcal V(x,\partial_{y_0})f(y):=\sum_{s=0}^\infty (\partial_{y_0}^s f(y)) (x)^s G_s
$$
is a family of bounded operators from $A_1$ to $A_1$ depending on the parameters $x$. If $x$ belongs to a compact subset $K\subseteq\mathbb R^{n+1}$, then the family is uniformly continuous with respect to $x\in K$.
\end{theorem}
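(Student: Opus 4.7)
The plan is to read off membership in $A_1$ from coefficient estimates via Lemma \ref{taylor_coeff_estimate}. Because $f \in A_1$ is left slice hyperholomorphic, the identity \eqref{SDerivPartial} gives $\partial_{y_0} f = \partial_S f$, Theorem \ref{PowSerThm} expands $f(y) = \sum_{k \geq 0} y^k a_k$, and Lemma \ref{taylor_coeff_estimate} supplies constants $C_f, b_f > 0$ with $|a_k| \leq C_f b_f^k/k!$. Termwise differentiation yields $\partial_{y_0}^s f(y) = \sum_{k \geq s} \frac{k!}{(k-s)!}\, y^{k-s} a_k$; inserting this into the defining series for $\mathcal V$ and reindexing with $m = k-s$ produces the formal expansion
\begin{equation*}
\mathcal V(x,\partial_{y_0})f(y) = \sum_{m \geq 0} y^m d_m(x), \qquad d_m(x) = \sum_{s \geq 0} \frac{(m+s)!}{m!}\, a_{m+s}\, x^s G_s.
\end{equation*}

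The key cancellation $\frac{(m+s)!}{m!}\cdot\frac{1}{(m+s)!} = \frac{1}{m!}$ then gives the clean estimate
\begin{equation*}
|d_m(x)| \leq \frac{C_f b_f^m}{m!}\, G^\ast(b_f |x|), \qquad G^\ast(\lambda) := \sum_{s \geq 0} |G_s|\, \lambda^s,
\end{equation*}
where $G^\ast$ is entire because $G$ is. This simultaneously justifies the interchange of the two summations (absolute convergence of the double series) and, by the converse direction of Lemma \ref{taylor_coeff_estimate}, places $\mathcal V(x,\partial_{y_0})f$ in $A_1$. Left slice hyperholomorphicity in $y$ follows because each summand is a slice derivative of $f$, which is again left slice hyperholomorphic, right-multiplied by the Clifford constant $x^s G_s$, and the resulting series converges uniformly on compacts.

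Boundedness of the family $\{\mathcal V(x,\partial_{y_0})\}_{x \in K} : A_1 \to A_1$ then follows from the last assertion of Lemma \ref{taylor_coeff_estimate}: if $f_N \to 0$ in $A_1$ with Taylor coefficients $|a_k^{(N)}| \leq C_{f_N} b^k/k!$, $C_{f_N} \to 0$, and $b$ independent of $N$, the same computation gives $|d_m^{(N)}(x)| \leq C_{f_N}\, G^\ast(bM_K)\, b^m/m!$ for $M_K := \sup_{x \in K}|x|$, whence $\mathcal V(x,\partial_{y_0}) f_N \to 0$ in $A_1$ uniformly for $x \in K$. For the uniform continuity in $x$, the telescoping identity $x_1^s - x_2^s = \sum_{j=0}^{s-1} x_1^j (x_1 - x_2)\, x_2^{s-1-j}$ bounds $|x_1^s - x_2^s| \leq s M_K^{s-1}|x_1 - x_2|$; applying this inside the coefficient estimate for $\mathcal V(x_1,\partial_{y_0})f - \mathcal V(x_2,\partial_{y_0})f$ yields a bound proportional to $|x_1 - x_2|$ with constants independent of $x_1, x_2 \in K$. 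The only real obstacle is securing the factorial cancellation that decouples the double sum into independent $x$ and $y$ parts; once that is in place, everything reduces to an application of Lemma \ref{taylor_coeff_estimate}.
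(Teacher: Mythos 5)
Your proof is correct and follows essentially the same line of reasoning as the paper's: the same coefficient estimate from Lemma \ref{taylor_coeff_estimate}, the same reindexing $m=k-s$, and the same factorial cancellation $\frac{(m+s)!}{m!(m+s)!}=\frac{1}{m!}$ that decouples the double sum into a product of an exponential in $|y|$ and the entire series $G^\ast(b_f|x|)$. The only differences are cosmetic — you organize the conclusion around the converse direction of Lemma \ref{taylor_coeff_estimate} rather than a direct sup-norm bound, you omit the Clifford submultiplicativity factor $2^{n/2}$, and you add a telescoping argument proving genuine Lipschitz-type continuity of $x\mapsto\mathcal V(x,\partial_{y_0})$ that the paper does not spell out (the paper's proof and its use in Theorem \ref{propm=4q} only require the uniform bound over $x\in K$).
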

\begin{proof}
Let $f\in A_1$. We consider its Taylor series $f(x)=\sum_{\ell=0}^{\infty} x^\ell a_\ell $. By Lemma \ref{taylor_coeff_estimate}, we have that there exist positive constants $C_f$ and $b_f$ such that
$$|a_\ell|\leq C_f\frac{b^\ell_f}{ \ell!}.$$
Moreover, let $\{f_m\}$ be a sequence in $A_1$. If $f_m\to 0$ in $A_1$ then we can choose $C_{f_m}$ and $b_{f_m}$ so that $C_{f_m}\to 0$ and $b_{f_m}$ is bounded.
We have
 \[
 \begin{split}
 \left| \mathcal V(x,\partial_{y_0})(f(y)) \right| & =\left|  \sum_{s=0}^\infty (\partial_{y_0}^s f(y))(x)^s G_s \right| \\
 & \leq  2^{n/2} \sum_{s=0}^\infty \sum_{\ell=s}^\infty  \frac{\ell!}{(\ell-s)!} |a_\ell|  |y|^{\ell-s} |x|^s  |G_s|\\
 &= 2^{n/2} \sum_{s=0}^\infty \sum_{\ell=0}^\infty  \frac{(\ell+s)!}{\ell!} |a_{\ell+s}|  |y|^\ell |x|^s |G_s|\\
 &\leq 2^{n/2} C_f \sum_{s=0}^\infty \sum_{\ell=0}^\infty  \frac{ (\ell+s)!}{(\ell+s)! \ell!} b_f^{\ell+s}  |y|^\ell |x|^s |G_s|\\
 &= 2^{n/2} C_f \sum_{\ell=0}^\infty \frac {b_f^\ell |y|^\ell}{\ell!} \sum_{s=0}^\infty |b_f x|^s |G_s|
 \end{split}
\]
 which implies $(\mathcal V(x, \partial_{y_0}) f_m(y))\to 0$ in $A_1$ if $f_m\to 0$ in $A_1$ since the last series is convergent due to the function $G$ being entire. Thus the uniform continuity with respect to $x\in K$ of the family of operators $\mathcal V(x,\partial_{y_0})$ is proved by the estimates
 $$
 \left| \mathcal V(x,\partial_{y_0})(f(y)) \right| \leq  2^{n/2}C_f e^{b_f|y|} \sum_{s=0}^\infty |b_f x|^s |G_s|.
 $$
\end{proof}
We now start from a superoscillatory function and associate a suitable function to it. We will show that for such a function, the supershift property holds by using the infinite-order differential operators obtained in the previous result.

\begin{theorem}\label{propm=4q}
Let $a\in\mathbb R$ and $Z_j(n,a)\in\mathbb R_n$ be coefficients such that the sequence
$$
f_N(x,a):=\sum_{j=0}^N e^{h_j(N)x} Z_j(N,a),\quad x\in\mathbb R^{n+1}
$$
converges to $e^{ax}$ in the space $A_1(\mathbb R^{n+1})$. Let moreover $G$ be an entire left slice hyperholomorphic function. The sequence of functions
$$
F_N(x,a):=\sum_{j=0}^N Z_j(N,a) G(h_j(N)x)
$$
converges to $G(xa)$ uniformly on any compact subset of $\mathbb R^{n+1}$.
\end{theorem}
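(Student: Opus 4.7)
The strategy is to recognize both $F_N(x,a)$ and $G(xa)$ as the image under one and the same infinite-order differential operator, namely the operator $\mathcal V(x,\partial_{y_0})$ of Theorem \ref{bounded_operatorq4} built from the Taylor expansion $G(\lambda)=\sum_{s=0}^{\infty}\lambda^s G_s$, and then to transport the assumed $A_1$-convergence $f_N(\cdot,a)\to e^{a\cdot}$ across $\mathcal V(x,\partial_{y_0})$ using the bounds already established in the proof of that theorem.

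First I would compute the action of $\mathcal V(x,\partial_{y_0})$ on an elementary block $e^{hy}Z$ with $h\in\mathbb R$ and $Z\in\mathbb R_n$. Since $e^{hy}$ is slice hyperholomorphic in $y$ and $h$ is real, $\partial_{y_0}^s(e^{hy}Z)=h^s e^{hy}Z$, and evaluating at $y=0$ yields
\[
\mathcal V(x,\partial_{y_0})\bigl[e^{hy}Z\bigr]\Big|_{y=0}=\sum_{s=0}^{\infty}h^s Z\,x^s G_s=Z\sum_{s=0}^{\infty}(hx)^s G_s=Z\,G(hx),
\]
where the reality of $h^s$ is exactly what allows $Z$ to be pulled outside the sum. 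Applying this termwise to the finite sum defining $f_N$ and to $e^{ay}$ gives
\[
\mathcal V(x,\partial_{y_0})f_N(y,a)\Big|_{y=0}=\sum_{j=0}^{N}Z_j(N,a)\,G(h_j(N)x)=F_N(x,a),
\]
\[
\mathcal V(x,\partial_{y_0})e^{ay}\Big|_{y=0}=G(xa),
\]
so that the difference to be controlled is $F_N(x,a)-G(xa)=\mathcal V(x,\partial_{y_0})(f_N(y,a)-e^{ay})\big|_{y=0}$.

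By hypothesis $f_N(\cdot,a)\to e^{a\cdot}$ in $A_1$, hence, by the characterization in Lemma \ref{taylor_coeff_estimate} and Proposition \ref{conv_A_1}, the Taylor constants of the difference satisfy $C_{f_N-e^{a\cdot}}\to 0$ while $b_{f_N-e^{a\cdot}}\le b$ for some $b>0$ independent of $N$. Inserting these into the pointwise estimate
\[
\bigl|\mathcal V(x,\partial_{y_0})g(y)\bigr|\le 2^{n/2}C_g\,e^{b_g|y|}\sum_{s=0}^{\infty}|b_g x|^s|G_s|
\]
obtained inside the proof of Theorem \ref{bounded_operatorq4}, setting $y=0$ and $g=f_N(\cdot,a)-e^{a\cdot}$ gives
\[
|F_N(x,a)-G(xa)|\le 2^{n/2}\,C_{f_N-e^{a\cdot}}\sum_{s=0}^{\infty}(b|x|)^s|G_s|.
\]
For $x$ in a compact set $K\subset\mathbb R^{n+1}$ the series is uniformly bounded because $G$ is entire, while the prefactor $C_{f_N-e^{a\cdot}}\to 0$. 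Uniform convergence of $F_N$ to $G(xa)$ on $K$ follows.

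There is no genuine technical obstacle; the content of the proof is entirely packaged in Theorem \ref{bounded_operatorq4} together with the $A_1$-convergence characterization. The only place where one must pay attention is the algebraic step used to identify $\mathcal V(x,\partial_{y_0})[e^{hy}Z]|_{y=0}$ with $Z\,G(hx)$: this rearrangement relies crucially on $h=h_j(N)\in\mathbb R$ so that the scalars $h^s$ commute past the Clifford coefficient $Z=Z_j(N,a)$. Once this observation is in place, the remainder of the argument is a mechanical combination of the continuity of $\mathcal V(x,\partial_{y_0})$ in $x$ with the uniform-in-$N$ control on the $A_1$-constants of $f_N-e^{a\cdot}$.
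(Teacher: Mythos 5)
Your proof is correct and follows essentially the same route as the paper: both identify $F_N(x,a)$ and $G(xa)$ as images under the infinite-order differential operator $\mathcal V(x,\partial_{y_0})$ of Theorem \ref{bounded_operatorq4} and then transport the hypothesis $f_N\to e^{a\cdot}$ in $A_1$ through the continuity of that operator. The paper performs the algebraic identification by a chain of equalities starting from $F_N$ and then invokes Theorem \ref{bounded_operatorq4} abstractly; you compute the action of $\mathcal V$ on the elementary block $e^{hy}Z$ first and then appeal directly to the explicit bound $|\mathcal V(x,\partial_{y_0})g(y)|\le 2^{n/2}C_g e^{b_g|y|}\sum_s(b_g|x|)^s|G_s|$ established inside that proof — a slightly more self-contained presentation, and you rightly flag the reality of $h_j(N)^s$ as the reason the Clifford coefficient $Z_j(N,a)$ can slide past $h^s$, but the mathematical content is identical.
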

\begin{proof}
Let $\sum_{s=0}^{\infty} \lambda^s G_s$ be the Taylor series of $G$. First we observe that
\[
\begin{split}
F_N( x,a)&=\sum_{j=0}^NZ_j(N,a)G(x h_j(N))\\
&= \sum_{j=0}^NZ_j (N,a)\sum_{s=0}^\infty (x h_j(N))^s G_s\\
&= \sum_{j=0}^N \sum_{s=0}^\infty (h_j(N))^s Z_j (N,a)  (x)^s G_s \\
&= \sum_{j=0}^N \sum_{s=0}^\infty \partial_{y_0}^s \left( e^{h_j(N)y} Z_j (n,a)\right)\Big|_{y=0}(x)^s G_s. \\
\end{split}
\]
We define the operator
$$
 \mathcal V(x,\partial_{y_0})f(y):=\sum_{s=0}^\infty (\partial_{y_0}^s f(y)) (x)^sg_s
$$
and, by Theorem \ref{bounded_operatorq4} we have that
\[
\begin{split}
& \lim_{N\to \infty} \sum_{j=0}^NZ_j(N,a)G(x h_j(N))=\lim_{N\to \infty} \sum_{j=0}^N  \mathcal V(x,\partial_{y_0})(e^{h_j(N)y} Z_j(N,a) )\Big |_{y=0}\\
&=\mathcal V(x,\partial_{y_0}) \left(\lim_{N\to \infty} \sum_{j=0}^N  e^{h_j(N)y} Z_j(N,a) \right) \Big |_{y=0}=\mathcal V(x,\partial_{y_0}) \left( e^{ay} \right ) \Big |_{y=0}\\
&= \sum_{s=0}^\infty  a^s \left( (x)^s g_s \right)=G(x a),
\end{split}
\]
uniformly on $x\in K$ for any compact subset $K\subset \mathbb R^{n+1}$.
\end{proof}

\subsection{Several variables in the slice hyperholomorphic setting}
By several variables in the slice hyperholomorphic setting, we mean that we encode in the paravector several functions that correspond to different frequencies. We will investigate what happens with this natural modification of the paravector.
\begin{definition}[Superoscillating sequence in the slice hyperholomorphic setting]\label{superoscill}
A sequence of the type
\begin{equation*}
F_N(x_0,\ldots ,x_n,a)=\sum_{j=0}^N Z_j(N,a)  e^{e_0 x_0 h_{j,0}(N) +e_1x_1 h_{j,1}(N)+\dots+e_n x_n h_{j,n}(N)},
\end{equation*}
where  $\{Z_j(N,a)\}_{j,N}$, for $j=0,\ldots ,N$ and  $ N\in \mathbb{N}$, is a  Cliffordian-valued sequence, is said to be {\em a superoscillating sequence} if
$$
\sup_{j=0,\ldots ,N} \  |h_{j,\ell}(N)|\leq 1 ,\ \ {\rm for} \ \ell=0,...,n,
$$
 and $F_n(x_0,\ldots ,x_n,a)$ converges uniformly on $x$ in any compact subset of $\mathbb R^{n+1}$ to
 $$e^{e_0 x_0 g_0(a) +e_1x_1 g_1(a)+\dots +e_nx_n g_n(a)},$$ where $a\in\mathbb R$, $g_\ell$'s are continuous functions of a real variable whose domain is $\mathbb R$ and $|g_\ell (a)|>1$ for  $\ell=1,\ldots ,d$.
\end{definition}

\begin{remark}
We can study a special case of Definition \ref{superoscill} where
$g_{k}:\mathbb R\to \mathbb R$ for $k=0,\dots, n$
are real analytic functions such that $|g_k(a)|\leq 1$ if $|a|\leq 1$, their Taylor series are: $$
g_{k}(\lambda) =\sum_{\ell=0}^\infty \lambda^\ell g_{k,\ell},\ \ \ \ \ h_{j,k}(N)=g_k(h_j(N))
$$
with $h_j(N)\in \mathbb R$ and $|h_j(N)|\leq 1$ for any $N\in\mathbb N_0$ and $j=0,\dots, N$. We say that
\begin{equation}\label{basic_sequence_sev_2}
F_N(x_0,\ldots ,x_n,a)=\sum_{j=0}^N Z_j(N,a)  e^{e_0 x_0 g_0(h_{j}(N)) +\dots +e_n x_n g_n(h_{j}(N))},
\end{equation}
is a superoscillating sequence if
$$\lim_{N\to+\infty} F_N(x_0,\dots,x_n,a)=e^{e_0 x_0 g_0(a) +\dots +e_n x_n g_n(a)}$$
uniformly on compact subset of $\mathbb R^{n+1}$ and $|g_k(a)|>|g_k(h_{j}(N))|$ for any $k=0,\dots, n$ and $j=0,\dots,N$. Moreover, we observe that using the Taylor series of $g_k$'s we can rewrite $F_N(x_0,\dots,x_n,a)$ in the following way
$$ F_N(x_0,\ldots ,x_n,a)=\sum_{j=0}^N Z_j(N,a)  e^{\sum_{\ell=0}^\infty \left( \sum_{k=0}^n e_k x_k g_{k,\ell}\right) (h_{j}(N))^\ell }. $$
We also define $$u_\ell(x_0,\dots, x_n):=\sum_{k=0}^n e_k x_k g_{k,\ell}.$$
\end{remark}
\begin{theorem}\label{bounded_operator}
Let $\left\{u_\ell \right\}_{\ell\in\mathbb N_0}$ be a sequence in $\mathbb R^{n+1}$ such that
$$
\lim \sup_{\ell\to\infty}|u_{\ell} |^{1/\ell}= 0,
$$
then the operator
$$\mathcal U(\partial_{x_0}) \left( f( x ) \right) :=\sum_{s=0}^\infty \partial_{x_0}^s (f ( x))\sum_{k=0}^\infty \frac 1{k!}\sum_{t_1+\dots+t_k=s} u_{t_1} \cdots u_{t_k} $$
is a continuous operator from $A_1$ to $A_1$.
\end{theorem}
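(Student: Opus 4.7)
The plan is to mimic the structure of the proof of Theorem \ref{bounded_operatorq4}, combining Lemma \ref{taylor_coeff_estimate} (which characterizes elements of $A_1$ via their Taylor coefficients) with Proposition \ref{conv_A_1} (which reduces continuity in $A_1$ to a uniform exponential bound plus a locally uniform convergence on $\mathbb R^{n+1}$). For notational convenience, let $\kappa>0$ be a Clifford constant with $|ab|\le \kappa |a||b|$ for all $a,b\in\mathbb R_n$, of the kind already used in estimate \eqref{hyper_e2} and in Lemma \ref{l12}. Denote the $s$-th ``coefficient'' of the operator by
\[
v_s:=\sum_{k=0}^{\infty}\frac{1}{k!}\sum_{t_1+\dots+t_k=s} u_{t_1}\cdots u_{t_k},\qquad s\in\mathbb N_0,
\]
so that $\mathcal U(\partial_{x_0})f(x)=\sum_{s=0}^{\infty}\partial_{x_0}^s f(x)\,v_s$.

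First I would fix $f\in A_{1,b_f}$ with Taylor expansion $f(x)=\sum_{\ell=0}^{\infty}x^\ell a_\ell$, so that by Lemma \ref{taylor_coeff_estimate} the coefficients satisfy $|a_\ell|\le C_f b_f^\ell/\ell!$. A straightforward computation (already carried out inside the proof of Theorem \ref{bounded_operatorq4}) then yields the estimate
\[
|\partial_{x_0}^s f(x)|\le \kappa\, C_f\, b_f^s\, e^{b_f|x|},\qquad x\in\mathbb R^{n+1}.
\]
Next I would bound $v_s$ using the sub-multiplicativity of $|\cdot|$:
\[
|v_s|\le \sum_{k=0}^{\infty}\frac{\kappa^{k-1}}{k!}\sum_{t_1+\dots+t_k=s}|u_{t_1}|\cdots|u_{t_k}|.
\]
The hypothesis $\limsup_{\ell\to\infty}|u_\ell|^{1/\ell}=0$ guarantees that the scalar series $U(b):=\sum_{\ell=0}^{\infty}b^\ell|u_\ell|$ converges for every $b\ge 0$. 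Putting these two bounds together and interchanging the summation order (justified by absolute convergence) produces
\[
|\mathcal U(\partial_{x_0})f(x)|\le \kappa^2 C_f e^{b_f|x|}\sum_{s=0}^{\infty}b_f^s|v_s|\le \kappa\, C_f\, e^{b_f|x|}\exp\!\bigl(\kappa\, U(b_f)\bigr),
\]
because $\sum_{s}b_f^s\sum_{t_1+\dots+t_k=s}|u_{t_1}|\cdots|u_{t_k}|=U(b_f)^k$. In particular, $\mathcal U(\partial_{x_0})f\in A_{1,b_f}$, with
\[
\|\mathcal U(\partial_{x_0})f\|_{1,b_f}\le \kappa\, C_f\,\exp\!\bigl(\kappa\, U(b_f)\bigr),
\]
and since each term $\partial_{x_0}^s f(x)v_s$ is left slice hyperholomorphic (right multiplication by a Clifford constant preserving slice hyperholomorphicity), the uniform bound ensures the limit function is again in $\mathcal{S\!M}_L(\mathbb R^{n+1})$.

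Continuity then follows from Proposition \ref{conv_A_1}: if $f_m\to 0$ in $A_1$, we can choose $C_{f_m}\to 0$ with $b_{f_m}$ uniformly bounded by some $b>0$. The estimate above, applied uniformly with $b$ in place of $b_{f_m}$, gives $\|\mathcal U(\partial_{x_0})f_m\|_{1,b}\le \kappa\, C_{f_m}\exp(\kappa\, U(b))\to 0$, which is exactly the required convergence to zero in $A_{1,b}\subseteq A_1$. The main obstacle I anticipate is purely bookkeeping: carefully justifying the reindexing that reorganizes $\sum_s b_f^s|v_s|$ into the exponential $\exp(\kappa U(b_f))$, and making sure the Clifford non-commutativity only costs the harmless multiplicative factors $\kappa^{k-1}$ in $|v_s|$, whose growth is absorbed by the factorials $1/k!$ in the exponential series.
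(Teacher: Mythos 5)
Your proposal is correct and follows essentially the same route as the paper: both start from the Taylor-coefficient bound of Lemma \ref{taylor_coeff_estimate}, reorganize the double sum over $\ell$ and $s$ to extract the factor $e^{b_f|x|}$, and reduce $\sum_s b_f^s|v_s|$ to the exponential $\exp\bigl(\sum_t b_f^t|u_t|\bigr)$, concluding continuity from $C_{f_m}\to 0$ with $b_{f_m}$ bounded. The only cosmetic difference is that you bound $\partial_{x_0}^s f$ and $v_s$ separately and then multiply, while the paper runs a single chain of inequalities; your explicit tracking of the Clifford constant $\kappa$ is a small technical refinement that the paper silently omits in this proof (though it does write $2^{n/2}$ in the analogous Theorem \ref{bounded_operatorq4}).
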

\begin{proof}
Let $f\in A_1$. We consider its Taylor series: $f(x)=\sum_{\ell=0}^{\infty} x^\ell a_\ell $. In particular, by Lemma 2.13 in \cite{CPSS21} we have that there exist constants $C_f$ and $b_f$ such that
$$|a_\ell|\leq C_f\frac{b^\ell_f}{ \ell!}.$$
Moreover, if $f_m\to 0$ in $A_1$ then $C_{f_m}\to 0$ and $b_{f_m}$ is bounded.
We have
 \[
 \begin{split}
 \left| \mathcal U(\partial_{x_0})(f(x)) \right| & =\left| \sum_{k=0}^\infty \frac 1{k!} \sum_{s=0}^\infty (\partial_{x_0}^s f(x)) \sum_{t_1+\dots+t_k=s} u_{t_1}\cdots u_{t_k} \right| \\
 & \leq \sum_{s=0}^\infty \sum_{\ell=s}^\infty  \frac{\ell!}{(\ell-s)!} |a_\ell|  |x|^{\ell-s} \left( \sum_{k=0}^\infty \frac 1{k!} \sum_{t_1+\dots +t_k=s} |u_{t_1} | \cdot\dots\cdot |u_{t_k} |\right)\\
 &= \sum_{s=0}^\infty \sum_{\ell=0}^\infty  \frac{(\ell+s)!}{\ell!} |a_{\ell+s}|  |x|^\ell \left( \sum_{k=0}^\infty \frac 1{k!} \sum_{t_1+\dots +t_k=s} |u_{t_1}| \cdot\dots\cdot |u_{t_k}|\right)\\
 &\leq C_f \sum_{s=0}^\infty \sum_{\ell=0}^\infty  \frac{ (\ell+s)!}{(\ell+s)! \ell!} b_f^{\ell+s}  |x|^\ell \left( \sum_{k=0}^\infty \frac 1{k!} \sum_{t_1+\dots +t_k=s} |u_{t_1}| \cdot\dots\cdot |u_{t_k}|\right)\\
 &= C_f \sum_{\ell=0}^\infty \frac {b_f^\ell |x|^\ell}{\ell!} \sum_{s=0}^\infty \sum_{k=0}^\infty \frac 1{k!} \sum_{t_1+\dots +t_k=s} b_f^{t_1}| u_{t_1}| \cdot\dots\cdot b_f^{t_k}| u_{t_k}|\\
 & = C_f e^{b_f |x|} \sum_{k=0}^{\infty} \frac 1{k!} \left( \sum_{t=0}^\infty  b_f^t |u_t| \right)^k\\
 & = C_f e^{b_f |x|} e^{\sum_{t=0}^\infty b_f^t |u_t |}
 \end{split}
\]
which implies, by the Weierstrass criterion,  that if $f\in A_1$ then $\mathcal U(\partial_{x_0})(f(x))$ is a series of slice hyperhomorphic functions converging with respect the sup-norm on any compact subset of $\mathbb R^{n+1}$ to a slice hyperholomorphic function. Moreover, the same estimate implies that if $\{f_m\}$ is a sequence in $A_1$ converging to zero then $\mathcal U(\partial_{x_0})(f_m(x)) \to 0$ in $A_1$.
\end{proof}

\begin{theorem}\label{conv_superos}
Let $\left\{u_\ell: \mathbb R^{n+1}\to \mathbb R^{n+1}\right\}_{\ell\in\mathbb N_0}$ be a sequence of continuous functions such that
$$
\lim \sup_{\ell\to\infty} \left( \sup_{\bx\in K} |u_{\ell} (x)|\right)^{1/\ell} = 0\quad \textrm{for any compact subsets $K\subset \mathbb R^{n+1}$}.
$$
Let $a\in\mathbb R$ and
\begin{equation}\label{basic_sequence_bis}
f_N (x,a):=\sum_{j=0}^N e^{h_j(N)x}Z_j(N,a),
\end{equation}
be a sequence of function such that $|h_j(N)|\leq 1$ and it converges to $e^{ax}$ in $A_1$.
Let also
$$
F_N(x,a):=\sum_{j=0}^N Z_j(N,a)e^{\sum_{\ell=0}^\infty u_\ell(x) h_j(N)^\ell}.
$$
Then we have
$$
\lim_{N\to \infty}F_N(x,a)=e^{\sum_{\ell=0}^\infty u_\ell(x) a^\ell},
$$
uniformly on compact subsets of $\mathbb R^{n+2}$.
\end{theorem}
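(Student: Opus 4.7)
My plan is to mimic the strategy of Theorem \ref{propm=4q}: rewrite $F_N(x,a)$ as the action on $f_N(y,a)$ of an operator depending on the parameter $x$, and then transfer the $A_1$-convergence $f_N\to e^{ay}$ through that operator.

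First, I would use that each $h_j(N)$ is a real scalar (and hence commutes with every Clifford element) to expand
\[
e^{\sum_{\ell\geq 0} u_\ell(x)\, h_j(N)^\ell}
= \sum_{s=0}^\infty h_j(N)^s\, V_s(x),\qquad V_s(x):=\sum_{k=0}^\infty \frac{1}{k!}\sum_{t_1+\cdots+t_k=s} u_{t_1}(x)\cdots u_{t_k}(x).
\]
Since $h_j(N)^s = \partial_{y_0}^s e^{h_j(N)\, y}\big|_{y=0}$, substituting this into the definition of $F_N$ and exchanging the order of summation (justified by absolute convergence) would yield the key identity
\[
F_N(x,a) \;=\; \Big[\mathcal{U}(x,\partial_{y_0})\, f_N(y,a)\Big]_{y=0},
\qquad
\mathcal{U}(x,\partial_{y_0})\, g(y):=\sum_{s=0}^\infty \big(\partial_{y_0}^s g(y)\big)\, V_s(x),
\]
so that $\mathcal{U}(x,\partial_{y_0})$ is exactly the operator of Theorem \ref{bounded_operator} with the constants $u_\ell$ replaced by $u_\ell(x)$. (Note that the left coefficients $Z_j(N,a)$ in $f_N$ do not interfere, because the $V_s(x)$ factors are multiplied on the right and the $\partial_{y_0}^s$ do not touch them.)

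Next, I would upgrade Theorem \ref{bounded_operator} to a parameter-dependent version. For each $x$ the pointwise hypothesis is satisfied, so that theorem gives continuity $\mathcal{U}(x,\partial_{y_0}):A_1\to A_1$; moreover, its proof actually delivers the sharper bound
\[
\|\mathcal{U}(x,\partial_{y_0})\, g\|_{1,b_g} \;\leq\; \exp\!\Big(\sum_{t=0}^\infty b_g^t\, |u_t(x)|\Big)\cdot \|g\|_{1,b_g}
\]
for every $g\in A_{1,b_g}$. By compactness of $K\subset\mathbb{R}^{n+1}$ together with the hypothesis $\limsup_\ell\bigl(\sup_{x\in K}|u_\ell(x)|\bigr)^{1/\ell}=0$, the series $\sum_t b_g^t\,\sup_{x\in K}|u_t(x)|$ converges, and hence the operator norm of $\mathcal{U}(x,\partial_{y_0})$ on $A_{1,b_g}$ is bounded by a constant $M_{K,b_g}$ \emph{uniformly in $x\in K$}. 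This uniformity is the main technical obstacle, and it is exactly why the hypothesis is stated with $\sup_{x\in K}|u_\ell(x)|$ rather than pointwise in $x$.

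Finally, the hypothesis $f_N(\cdot,a)\to e^{a\cdot}$ in $A_1$ supplies $\sigma>0$ with $\|f_N-e^{a\cdot}\|_{1,\sigma}\to 0$. Applying the parameter-uniform bound to $g=f_N-e^{a\cdot}$, and using $|h(0)|\leq\|h\|_{1,\sigma}$ to pass to point evaluation at $y=0$, would give
\[
\sup_{x\in K}\big|F_N(x,a) - H(x,a)\big| \;\leq\; M_{K,\sigma}\,\|f_N-e^{a\cdot}\|_{1,\sigma} \;\longrightarrow\; 0,
\]
where $H(x,a):=\mathcal{U}(x,\partial_{y_0})e^{ay}\big|_{y=0} = \sum_{s\geq 0} a^s V_s(x) = e^{\sum_{\ell\geq 0} u_\ell(x)\, a^\ell}$ by the same regrouping performed in the first step. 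Joint uniformity in $(x,a)$ on compact subsets of $\mathbb{R}^{n+2}$ would then follow provided the $A_1$-convergence of $f_N$ is uniform in $a$ on compacts, which one expects to extract from the archetypal construction of Lemma \ref{l1}; once the parameter-uniform operator estimate is in place, everything else is a composition of bounded-operator convergence with point evaluation.
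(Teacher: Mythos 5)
Your proposal follows essentially the same route as the paper's proof: recast $F_N(x,a)$ as $\bigl[\mathcal{U}(x,\partial_{y_0})\, f_N(\cdot,a)\bigr]\big|_{y=0}$, invoke the $x$-parametrized version of Theorem \ref{bounded_operator} to get a bound uniform over $x\in K$ (driven by the hypothesis $\limsup_\ell (\sup_{x\in K}|u_\ell(x)|)^{1/\ell}=0$), and then transfer the $A_1$-convergence $f_N\to e^{a\cdot}$ through the operator and evaluate at $y=0$. The joint $(x,a)$-uniformity issue you flag at the end is one the paper's own proof also leaves implicit (the operator bound controls only $x$, not $a$), so that observation is a refinement rather than a departure from the paper's argument.
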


\begin{proof}
We want to prove that
$$ \lim_{N\to+\infty} \sum_{j=0}^N Z_j(N,a) e^{\sum_{\ell=0}^\infty u_\ell( x) h_j(N)^\ell}=e^{\sum_{\ell=0}^\infty u_\ell(x) a^\ell},$$
uniformly on any compact subset of $\mathbb R^{n+2}$. We fix a compact subset $K$ of $\mathbb R^{n+2}$.
We observe that for any $(x,a)\in K$ we have
\[
\begin{split}
& \sum_{j=0}^N Z_j(N,a) e^{\sum_{\ell=0}^\infty u_\ell(x) h_j(N)^\ell}  = \sum_{j=0}^N Z_j(N,a) \sum_{k=0}^\infty \frac{\left( \sum_{\ell=0}^\infty u_\ell(x) h_j(N)^\ell\right)^k}{k!}\\
& =\sum_{j=0}^N Z_j(N,a) \sum_{k=0}^\infty \frac 1{k!}\sum_{s=0}^\infty \left(\sum_{t_1+\dots +t_k=s} u_{t_1}(x)\cdot \dots \cdot u_{t_k}(x) \right)h_j(N)^s\\
& =\sum_{j=0}^N  \sum_{k=0}^\infty \frac 1{k!}\sum_{s=0}^\infty h_j(N)^s Z_j(N,a) \left(\sum_{t_1+\dots +t_k=s} u_{t_1}(x)\cdot \dots \cdot u_{t_k}(x) \right)\\
& =\sum_{j=0}^N  \sum_{k=0}^\infty \frac 1{k!}\sum_{s=0}^\infty \partial_{y_0}^s\left( e^{ h_j(N) y} Z_j(N,a)\right) \big |_{y=0} \left(\sum_{t_1+\dots +t_k=s} u_{t_1}(x)\cdot \dots \cdot u_{t_k}(x) \right).
\end{split}
\]
We define the operator
$$
\mathcal U(x,\partial_{y_0}):=\sum_{k=0}^\infty \frac 1{k!}\sum_{s=0}^\infty (\partial_{y_0}^s) \left(\sum_{t_1+\dots +t_k=s} u_{t_1}(x)\cdot \dots \cdot u_{t_k}(x)\right)
$$
and, since by Theorem \ref{bounded_operator} the operator  $\mathcal U(x,\partial_{y_0})$ is a uniformly  bounded operator from $A_1$ to $A_1$ with respect to $(x,a)\in K$ because of the estimate: $$\left| \mathcal U(x,\partial_{y_0})(f(y)) \right|\leq C_f e^{b_f |y|} e^{\sum_{t=0}^\infty b_f |u_t(x) |},$$ we have that
\[
\begin{split}
\lim_{N\to \infty} \sum_{j=0}^N Z_j(N,a) e^{\sum_{\ell=0}^\infty u_\ell(x) h_j(N)^\ell}&=\lim_{N\to \infty} \sum_{j=0}^N  \mathcal U(x,\partial_{y_0})(e^{h_j(N)y} Z_j(N,a) )\big |_{y=0}\\
&=\mathcal U(x,\partial_{y_0}) \left(\lim_{N\to \infty} \sum_{j=0}^N  e^{h_j(N)y} Z_j(N,a) \right) \big |_{y=0}\\
&=\mathcal U(x,\partial_{y_0}) \left( e^{ay} \right ) \big |_{y=0}\\
&= \sum_{k=0}^\infty \frac 1{k!}\sum_{s=0}^\infty  a^s \left(\sum_{t_1+\dots +t_k=s} u_{t_1}(x)\cdot \dots \cdot u_{t_k}(x)\right)\\
&=\sum_{k=0}^\infty \frac 1{k!} \left(\sum_{\ell =0}^\infty  u_\ell (x) a^\ell \right)^k\\
&=e^{\sum_{\ell=0}^\infty u_\ell(x) a^\ell}
\end{split}
\]
where the convergence of the last series to the exponential function is uniform on the compact subset $K$.
\end{proof}

\begin{corollary}\label{cor1}
Let $g_k(a):=\sum_{\ell=0}^\infty g_{k,\ell}a^\ell$ for $k=0,\dots, n$ be real analytic functions of real variable such that $|g_k(a)|>1$ for $|a|>1$ and $|g_k(a)|\leq 1$ for $|a|\leq1$, if we suppose that the sequence $$f_N(x,a)=\sum_{j=0}^N  e^{h_j(N) x} Z_j(N,a) $$ converges to $e^{ax}$ in $A_{1}$,  then the sequence
\[
\begin{split}
F_N(x,a)& =\sum_{j=0}^N Z_j(N,a) e^{e_0 x_0 g_{0}(h_j(N))+e_1 x_1 g_1(h_j(N))+\dots +e_n x_n g_n(h_j(N))}
\end{split}
\]
converges uniformly on compact subset of $\mathbb R^{n+2}$ to $e^{e_0 x_0g_0(a)+\dots+ e_n x_ng_n(a)}$ and thus it is a superoscillating sequence acording to Definition \ref{superoscill}.
\end{corollary}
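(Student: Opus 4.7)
The plan is to reduce the corollary directly to Theorem \ref{conv_superos} by the following substitution. Set
\[
u_\ell(x_0,\ldots,x_n) := \sum_{k=0}^n e_k x_k g_{k,\ell}, \qquad \ell\in\mathbb{N}_0,
\]
where the $g_{k,\ell}\in\mathbb{R}$ are the Taylor coefficients of $g_k$ at the origin. Because the $g_{k,\ell}$ are real (hence commute with the Clifford coordinates),
\[
\sum_{\ell=0}^\infty u_\ell(x) h_j(N)^\ell = \sum_{k=0}^n e_k x_k \sum_{\ell=0}^\infty g_{k,\ell} h_j(N)^\ell = \sum_{k=0}^n e_k x_k g_k(h_j(N)),
\]
and analogously $\sum_{\ell=0}^\infty u_\ell(x) a^\ell = \sum_{k=0}^n e_k x_k g_k(a)$. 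Thus the sequence $F_N(x,a)$ in the statement matches exactly the object appearing in Theorem \ref{conv_superos}, and the target limit is the one asserted there.

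The first step I would carry out is the verification of the growth hypothesis on $u_\ell$ required by Theorem \ref{conv_superos}. Fix a compact set $K\subset \mathbb{R}^{n+1}$ and set $R_K := \sup_{x\in K}\max_k |x_k|$, so that
\[
\sup_{x\in K}|u_\ell(x)| \leq R_K \sum_{k=0}^n |g_{k,\ell}|.
\]
Since each $g_k:\mathbb{R}\to\mathbb{R}$ is real analytic and we need the identity $\sum_\ell g_{k,\ell}a^\ell = g_k(a)$ for \emph{every} $a\in\mathbb{R}$ (the limit has to make sense for arbitrarily large $|a|$), the Taylor series at $0$ has infinite radius of convergence, whence $\limsup_{\ell\to\infty}|g_{k,\ell}|^{1/\ell}=0$ for every $k=0,\ldots,n$. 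Combining the finitely many limits yields
\[
\limsup_{\ell\to\infty}\Bigl(\sup_{x\in K}|u_\ell(x)|\Bigr)^{1/\ell} = 0,
\]
which is precisely the hypothesis of Theorem \ref{conv_superos}.

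The second step is to feed this into Theorem \ref{conv_superos}. By assumption, $f_N(x,a)=\sum_{j=0}^N e^{h_j(N)x}Z_j(N,a)$ converges to $e^{ax}$ in $A_1$ and $|h_j(N)|\leq 1$, which are exactly the remaining hypotheses of that theorem. Applying it gives
\[
\lim_{N\to\infty} F_N(x,a) = \exp\!\Bigl(\sum_{\ell=0}^\infty u_\ell(x) a^\ell\Bigr) = e^{e_0 x_0 g_0(a)+\dots+e_n x_n g_n(a)},
\]
uniformly on compact subsets of $\mathbb{R}^{n+2}$. To conclude that the resulting sequence fits Definition \ref{superoscill}, I would finally set $h_{j,k}(N):=g_k(h_j(N))$ and observe that $|h_{j,k}(N)|\leq 1$ follows from the hypothesis $|g_k(a)|\leq 1$ for $|a|\leq 1$, while $|g_k(a)|>1$ for $|a|>1$ guarantees that the limit frequencies overshoot the band of admissible ones.

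The only genuinely delicate point is the growth estimate on the coefficients $g_{k,\ell}$: once one accepts that the $g_k$ are effectively entire (which is forced by requiring uniform convergence for all $a\in\mathbb{R}$), the $\limsup$ condition on $u_\ell$ is automatic. Everything else is bookkeeping, namely rearranging the double series that expresses $e^{\sum_k e_k x_k g_k(h_j(N))}$ in the form $\exp(\sum_\ell u_\ell(x) h_j(N)^\ell)$ and reading off the conclusion from Theorem \ref{conv_superos}.
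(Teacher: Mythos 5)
Your proof is correct and takes essentially the same route as the paper: reduce the corollary to Theorem \ref{conv_superos} by setting $u_\ell(x)=\sum_{k=0}^n e_k x_k g_{k,\ell}$ and verify the hypothesis $\limsup_\ell\bigl(\sup_{x\in K}|u_\ell(x)|\bigr)^{1/\ell}=0$ from the fact that the $g_k$ are entire. You are a bit more explicit than the paper in spelling out the identity $\sum_\ell u_\ell(x)h_j(N)^\ell=\sum_k e_k x_k g_k(h_j(N))$ and in justifying that the $g_k$ must have entire Taylor expansions, but the argument is the same.
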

\begin{proof}
The proof is a direct consequence of the Theorem \ref{conv_superos} since the functions $g_k$ are entire functions, we have
\[
\begin{split}
\limsup_{\ell\to\infty} \sup_{x\in K} |u_{\ell} (x)|^{1/\ell} & \leq  \limsup_{\ell\to\infty} \sup_{x\in K} \sum_{k=0}^n |x_k|^{1/\ell} |g_{k,\ell}|^{1/\ell} \\
& \leq   \limsup_{\ell\to\infty} \sup_{x\in K} \left( \sum_{k=0}^n |x_k|^{1/\ell}\right ) \sum_{k=0}^n |g_{k,\ell}|^{1/\ell} =0
\end{split}
\]
for any compact subset $K\subset \mathbb R^{n+1}$.
\end{proof}

%Comincia la parte bis da scegliere

\begin{definition}[Supershifts in the slice hyperholomorphic setting] \label{ss1}
Let $a\in\mathbb R$. Let
 $\{ h_{j,\ell}(N) \}$,  $j=0,...,N$  for $ N\in \mathbb{N}$, be  real-valued sequences for $\ell=0,...,n$
 such that
$$
\sup_{j=0,\ldots ,N,\ N\in\mathbb{N}} \  |h_{j,\ell}(N)|\leq 1 ,\ \ {\rm for} \ \ell=0,...,n.
$$
Let $G(\lambda)$, be an entire left slice hyperholomorphic function.
We say that
the sequence
\begin{equation}\label{basic_sequence_sevFF_5}
F_N(x,a)=\sum_{j=0}^N Z_j(N,a)
G( x_0 h_{j,0}(N) +e_1x_1 h_{j,1}(N)+\dots +e_n x_n h_{j,n}(N)),
\end{equation}
where  $\{ Z_j(N,a)\}_{j,N}$, $j=0,\ldots ,N$, for $ N\in \mathbb{N}$ is a  Clifford-valued sequence,
admits the supershift property if
$$
\lim_{N\to \infty}F_N(x,a)=
G(x_0g_0(a) +e_1x_1g_1(a)+\dots +e_n x_n g_n(a)),
$$
where $g_0,\dots, g_n$ are continuous functions such that $|g_j(a)|>1$, uniformly on $x$ in any compact subset $K\subset \mathbb R^{n+1}$.
\end{definition}

\begin{theorem}\label{bounded_operatorq5}
Let $G$ be an entire left slice hyperholomorphic functions whose series expansions at zero is given by
\begin{equation*} 
G (\lambda)=\sum_{m=0}^\infty \lambda^{m} G_{m}.
\end{equation*}
and $g_0,\dots, g_n$ be real analytic on $\mathbb R$ and real valued functions then the operators
$$
 \mathcal V(x,\partial_{y_0}) f(y):=\sum_{s=0}^\infty \left(  \sum_{t=0}^{\infty}  \partial_{y_0}^t \left( f(y) \right) \sum_{\ell_1+\dots+\ell_s= t} u_{\ell_1}(x)\cdots u_{\ell_s}(x)\right) G_s
 $$
where $u_\ell(x):= x_0 g_{0,\ell}+e_1 x_1 g_{1,\ell} +\dots + e_n x_n g_{n,\ell}$ and $g_{j,\ell}$ are the Taylor coefficients of $G_i$ for $i=0,\dots, n$ and $\ell\in\mathbb N_0$, form a family of bounded operators from $A_1$ to $A_1$ depending on the parameters $x$. If $x$ belongs to a compact subset $K\subset\mathbb R^{n+1}$, then the family is uniformly bounded with respect to $x\in K$.
\end{theorem}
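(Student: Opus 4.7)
The plan is to adapt the arguments used for Theorem \ref{bounded_operatorq4} and Theorem \ref{bounded_operator}, since $\mathcal V(x,\partial_{y_0})$ combines features of both: an expansion in powers of $\partial_{y_0}$ governed by the Taylor coefficients of $G$, composed with polynomial expressions in the $u_\ell(x)$'s as in Theorem \ref{bounded_operator}. First I would fix $f\in A_1$ with Taylor expansion $f(y)=\sum_{m=0}^\infty y^m a_m$, and invoke Lemma \ref{taylor_coeff_estimate} to obtain $C_f,b_f>0$ with $|a_m|\leq C_f b_f^m/m!$. Direct differentiation gives $\partial_{y_0}^t f(y)=\sum_{m\geq t}\frac{m!}{(m-t)!} y^{m-t} a_m$, so the standard Clifford norm inequality and a shift of index produce
\[
|\partial_{y_0}^t f(y)|\leq 2^{n/2}C_f b_f^t e^{b_f|y|}.
\]

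Plugging this into the series defining $\mathcal V$ and using the triangle inequality,
\[
|\mathcal V(x,\partial_{y_0})f(y)|\leq 2^{n/2}C_f e^{b_f|y|}\sum_{s=0}^\infty |G_s|\sum_{t=0}^\infty b_f^t\!\!\sum_{\ell_1+\dots+\ell_s=t}\!\!|u_{\ell_1}(x)|\cdots |u_{\ell_s}(x)|.
\]
The multinomial identity $\sum_{t}r^t\sum_{\ell_1+\dots+\ell_s=t}c_{\ell_1}\cdots c_{\ell_s}=(\sum_{\ell}r^\ell c_\ell)^s$, applied with $c_\ell=|u_\ell(x)|$ and $r=b_f$, collapses the double sum, yielding
\[
|\mathcal V(x,\partial_{y_0})f(y)|\leq 2^{n/2}C_f e^{b_f|y|}\,\widetilde G\!\left(\sum_{\ell=0}^\infty b_f^\ell |u_\ell(x)|\right),
\]
where $\widetilde G(r):=\sum_{s=0}^\infty |G_s|r^s$ is the entire majorant of $G$, finite for every $r$ because $G$ is entire.

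Next I would estimate $|u_\ell(x)|\leq \sum_{k=0}^n |x_k|\,|g_{k,\ell}|$, and exploit the hypothesis that each $g_k$ is real analytic on $\mathbb R$ with entire extension, so that $M_k(b):=\sum_{\ell=0}^\infty b^\ell |g_{k,\ell}|<\infty$ for every $b\geq 0$. For $x$ ranging in a compact set $K\subset\mathbb R^{n+1}$ this gives the uniform control $\sum_{\ell}b_f^\ell |u_\ell(x)|\leq \sum_{k=0}^n(\sup_{x\in K}|x_k|)M_k(b_f)=:M(b_f,K)<\infty$. Hence
\[
\|\mathcal V(x,\partial_{y_0})f\|_{1,b_f}\leq 2^{n/2}C_f\,\widetilde G(M(b_f,K)),
\]
which simultaneously shows that $\mathcal V(x,\partial_{y_0})$ sends $A_{1,b_f}\subseteq A_1$ into $A_{1,b_f}$ and that the family is uniformly bounded for $x\in K$. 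Continuity on $A_1$ follows from Proposition \ref{conv_A_1}: if $f_p\to 0$ in $A_1$, then one may take $C_{f_p}\to 0$ with $b_{f_p}$ bounded above by some $\beta$, and the same estimate with $b_f$ replaced by $\beta$ yields $\|\mathcal V(x,\partial_{y_0})f_p\|_{1,\beta}\to 0$ uniformly in $x\in K$.

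The main technical obstacle lies in justifying the interchange of the three summations over $s$, $t$, $m$ required to reach the closed form $\widetilde G(\sum_\ell b_f^\ell |u_\ell(x)|)$. Absolute convergence of the whole multi-series is however built-in to the hypotheses: entireness of $G$ guarantees finiteness of $\widetilde G$ at every point of $[0,\infty)$, and the assumed (entire) real analyticity of each $g_k$ guarantees finiteness of the inner argument $M(b_f,K)$ for every $b_f>0$ and every compact $K$. Once this absolute convergence is in place, Fubini on counting measure legitimises all rearrangements, and the rest of the argument is a transcription of the bookkeeping carried out in the proofs of Theorems \ref{bounded_operatorq4} and \ref{bounded_operator}.
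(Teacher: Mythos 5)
Your proposal is correct and follows essentially the same route as the paper: both rely on Lemma \ref{taylor_coeff_estimate} for the coefficient bound $|a_m|\le C_f b_f^m/m!$, perform the same index shift and multinomial collapse to arrive at the estimate $|\mathcal V(x,\partial_{y_0})f(y)|\le 2^{n/2}C_f e^{b_f|y|}\sum_s\bigl(\sum_\ell b_f^\ell|u_\ell(x)|\bigr)^s|G_s|$, and then invoke entireness of $G$ and of the $g_k$'s to close. The only cosmetic differences are that you isolate the intermediate bound on $|\partial_{y_0}^t f(y)|$ first, and that the continuity criterion ($C_{f_p}\to 0$, $b_{f_p}$ bounded) is really the last part of Lemma \ref{taylor_coeff_estimate} rather than Proposition \ref{conv_A_1}.
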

\begin{proof}
Let $f\in A_1$. We consider its Taylor series: $f(x)=\sum_{\ell=0}^{\infty} x^\ell a_\ell $. In particular, by Lemma \ref{taylor_coeff_estimate} we have that there exist constants $C_f$ and $b_f$ such that
$$|a_\ell|\leq C_f\frac{b^\ell_f}{ \ell!}.$$
 Moreover, if $f_m\to 0$ in $A_1$ then we can choose $C_{f_m}$ and $b_{f_m}$ so that $C_{f_m}\to 0$ and $b_{f_m}$ is bounded.
 Since the functions $g_j$'s are real analytic in all the real line and the function $G$ is entire slice hyperholomorphic, for any $K\subset \mathbb R^{n+1}$ we have
 \begin{equation}\label{convuell}
 \begin{split}
 \limsup_{\ell\to\infty} \sup_{x\in K} |u_\ell(x)|^{\frac 1\ell}& \leq \limsup_{\ell\to\infty} \sup_{x\in K} \left( |x_0 g_{0,\ell}|^{\frac 1\ell}+|x_1 e_1 g_{1,\ell}|^{\frac 1\ell}+\dots+|x_n e_n g_{n,\ell}|^{\frac 1\ell} \right) \\
 & \leq  \limsup_{\ell\to\infty} (\sup_{x\in K} |x|)^{\frac 1\ell} \left( |g_{0,\ell}|^{\frac 1\ell}+|e_1 g_{1,\ell}|^{\frac 1\ell}+\dots+|e_n g_{n,\ell}|^{\frac 1\ell}\right) = 0
 \end{split}
 \end{equation}
 and
 \begin{equation}\label{convg}
 \limsup_{s\to\infty} |G_s|^{\frac 1s}=0 .
 \end{equation}
 We observe that
we have
 \begin{equation}\label{estimate_cont}
 \begin{split}
 & \left| \mathcal V(x,\partial_{y_0})(f(y)) \right|  =\left|  \sum_{s=0}^\infty \left(  \sum_{t=0}^{\infty}  \partial_{y_0}^t \left( f(y) \right) \sum_{\ell_1+\dots+\ell_s= t} u_{\ell_1}(x)\cdots u_{\ell_s}(x)\right) G_s\right| \\
 & \leq 2^{n/2}  \sum_{s=0}^\infty \left( \sum_{t=0}^\infty \sum_{\ell=t}^\infty  \frac{\ell!}{(\ell-t)!} |a_\ell|  |y|^{\ell-t}  \sum_{\ell_1+\dots+\ell_s= t} |u_{\ell_1}(x)|\cdots |u_{\ell_s}(x)| \right) |G_s|\\
 &= 2^{n/2} \sum_{s=0}^\infty\left( \sum_{t=0}^\infty \sum_{\ell=0}^\infty  \frac{(\ell+t)!}{\ell!} |a_{\ell+t}|  |y|^\ell \left( \sum_{\ell_1+\dots+\ell_s= t} |u_{\ell_1}(x)|\cdots |u_{\ell_s}(x)| \right)\right) |G_s|\\
 &\leq 2^{n/2} C_f \sum_{s=0}^\infty \left(  \sum_{t=0}^\infty \sum_{\ell=0}^\infty  \frac{ (\ell+t)!}{(\ell+t)! \ell!} b_f^{\ell+t}  |y|^\ell \left( \sum_{\ell_1+\dots+\ell_s= t} |u_{\ell_1}(x)|\cdots |u_{\ell_s}(x)| \right) \right) |G_s|\\
 &= 2^{n/2}  C_f \sum_{\ell=0}^\infty \frac {b_f^\ell |y|^\ell}{\ell!} \sum_{s=0}^\infty \left( \sum_{t=0}^{\infty}\sum_{\ell_1+\dots +\ell_s=t} |(b_f)^{\ell_1} u_{\ell_1}(x)|\cdots |(b_f)^{\ell_n} u_{\ell_s}(x)| \right)|G_s|\\
 & = 2^{n/2} C_f e^{b_f |y|} \sum_{s=0}^\infty \left( \sum_{\ell=0}^\infty |b_f|^\ell |u_{\ell}(x)| \right)^s |G_s|
 \end{split}
\end{equation}
by \eqref{convuell} and \eqref{convg}, we can conclude that all the series in the last term are convergent and, moreover, $(\mathcal V(x,\partial_{y_0})f_m(y))\to 0$ in $A_1$ if $f_m\to 0$ in $A_1$. In particular, if we fix a compact subset $K\subset \mathbb R^{n+1}$, by the estimate \eqref{estimate_cont},
we can also deduce that the family of continuous operators $\mathcal V(x,\partial_{y_0})$ parametrized by $x\in K$ is indeed uniformly continuous.
\end{proof}

\begin{theorem}\label{propm=3q}
Let $a\in\mathbb R$ and let
\begin{equation} \nonumber
f_N(x,a):= \sum_{j=0}^N e^{h_j(N) x} Z_j(N,a),\ \ \ N\in \mathbb{N},\ \ \ (x_1,\dots ,x_n)\in \mathbb{R}^n,
\end{equation}
be a sequence of slice hyperholomorphic functions converging to $e^{ax}$ in the space $A_{1}$.
Let $G$ be an entire left slice hyperholomorphic functions.
Let $g_0,\dots, g_n$ be real analytic functions on $\mathbb R$. We define
$$
F_N(x_0,\dots ,x_n,a)=\sum_{j=0}^N Z_j(N,a) G( x_0 g_0 (h_j(N)) +e_1x_1 g_1(h_{j}(N))+\dots +e_n x_n g_n(h_{j}(N))).
$$
 If $|g_j(a)|>1$ for $|a|>1$ and $|g_j(a)|\leq 1$ for $|a|\leq 1$ then $F_N(x_0,\dots,\,x_n,a)$ admits the supershift property that is
$$
\lim_{N\to \infty}F_N(x,a)=G( x_0 g_0 (a) +e_1x_1 g_1(a)+\dots +e_n x_n g_n(a)),
$$
uniformly on any compact subsets of $\mathbb R^{n+1}$.
\end{theorem}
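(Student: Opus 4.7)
The plan is to generalize the argument of Theorem \ref{propm=4q} to the multi-variable setting, replacing the simpler operator used there with the infinite-order differential operator $\mathcal V(x, \partial_{y_0})$ constructed in Theorem \ref{bounded_operatorq5}. The strategy is to recognize $F_N(x, a)$ as the image of the known superoscillating sequence $f_N(\cdot, a)$ under $\mathcal V$ and then to transfer the $A_1$-convergence of the latter to uniform convergence on compacts for the former.

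I would begin by expanding $G(\lambda) = \sum_{s=0}^\infty \lambda^s G_s$ and, using the Taylor coefficients $g_{k,\ell}$ of the real analytic $g_k$, rewriting the argument of $G$ as $\sum_{\ell=0}^\infty h_j(N)^\ell u_\ell(x)$, where $u_\ell(x) = x_0 g_{0,\ell} + e_1 x_1 g_{1,\ell} + \dots + e_n x_n g_{n,\ell}$. Raising this to the $s$-th power and expanding via the (non-commutative) multinomial rule, then interchanging summations---which is justified by the absolute convergence coming from the entirety of $G$, the real analyticity of the $g_k$, and the estimates established inside Theorem \ref{bounded_operatorq5}---yields
\begin{equation*}
F_N(x, a) = \sum_{j=0}^N \sum_{s=0}^\infty \sum_{t=0}^\infty h_j(N)^t Z_j(N, a) \left( \sum_{\ell_1 + \dots + \ell_s = t} u_{\ell_1}(x) \cdots u_{\ell_s}(x) \right) G_s.
\end{equation*}
Since $h_j(N)$ is real, it commutes with $Z_j(N, a)$, so $h_j(N)^t Z_j(N, a) = \partial_{y_0}^t (e^{h_j(N) y} Z_j(N, a))|_{y = 0}$, giving the compact identity
\begin{equation*}
F_N(x, a) = \sum_{j=0}^N \mathcal V(x, \partial_{y_0}) \bigl( e^{h_j(N) y} Z_j(N, a) \bigr) \Big|_{y = 0}.
\end{equation*}

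By hypothesis, $\sum_{j=0}^N e^{h_j(N) y} Z_j(N, a) \to e^{ay}$ in $A_1$, and Theorem \ref{bounded_operatorq5} guarantees that the family $\mathcal V(x, \partial_{y_0})$ is uniformly continuous from $A_1$ to $A_1$ for $x$ in any compact $K \subset \mathbb R^{n+1}$. Interchanging the limit and the operator then gives
\begin{equation*}
\lim_{N \to \infty} F_N(x, a) = \mathcal V(x, \partial_{y_0}) (e^{ay}) \Big|_{y = 0}
\end{equation*}
uniformly for $x \in K$. Finally, plugging $\partial_{y_0}^t e^{ay}|_{y = 0} = a^t$ into the defining formula for $\mathcal V$ reverses the earlier rearrangement and produces $\sum_{s=0}^\infty \bigl( \sum_{\ell=0}^\infty a^\ell u_\ell(x) \bigr)^s G_s = G(x_0 g_0(a) + e_1 x_1 g_1(a) + \dots + e_n x_n g_n(a))$, which is the claimed limit.

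The main technical obstacle is the careful justification of the series rearrangements in the first step, which must respect the non-commutative Clifford structure (the order of the $u_{\ell_i}(x)$ factors cannot be altered). This is precisely where the absolute-convergence bounds developed inside the proof of Theorem \ref{bounded_operatorq5} are essential. I note that the hypothesis $|g_k(a)| > 1$ for $|a| > 1$ plays no role in the limit computation itself; it only ensures that the resulting limiting identity qualifies as a genuine supershift in the sense of Definition \ref{ss1}.
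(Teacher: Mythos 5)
Your proposal follows essentially the same route as the paper's proof: expand $G$ in its Taylor series, introduce $u_\ell(x) = x_0 g_{0,\ell} + e_1 x_1 g_{1,\ell} + \dots + e_n x_n g_{n,\ell}$, rearrange the triple sum to pull out $h_j(N)^t Z_j(N,a)$ as $\partial_{y_0}^t$ applied to $e^{h_j(N)y}Z_j(N,a)$ at $y=0$, recognize $F_N = \mathcal V(x,\partial_{y_0}) f_N|_{y=0}$, and pass to the limit using the uniform continuity of $\mathcal V$ from Theorem \ref{bounded_operatorq5}. Your closing observation about the role of the hypothesis $|g_k(a)|>1$ is accurate and does not affect the argument.
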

\begin{proof}
The Taylor series of $G$ is given by
\begin{equation*}
G (\lambda)=\sum_{m =0}^\infty \lambda^{m} G_{ m}.
\end{equation*}
First we observe that
\[
\begin{split}
F_N(x,a)&=\sum_{j=0}^NZ_j(N,a)G( x_0 g_0 (h_j(N)) +e_1x_1 g_1(h_{j}(N))+\dots +e_n x_n g_n(h_{j}(N)))\\
&= \sum_{j=0}^NZ_j (N,a)\sum_{s=0}^\infty \left( x_0 g_0 (h_j(N)) +e_1x_1 g_1(h_{j}(N))+\dots +e_n x_n g_n(h_{j}(N))\right)^s G_s\\
&= \sum_{j=0}^N \sum_{s=0}^\infty  Z_j (N,a) \left( \sum_{\ell=0}^{\infty} \left(x_0 g_{0,\ell}+e_1 x_1 g_{1,\ell} +\dots + e_n x_n g_{n,\ell} \right) h_j(N)^{\ell} \right)^s G_s \\
\end{split}
\]
We define $u_\ell(x):= x_0 g_{0,\ell}+e_1 x_1 g_{1,\ell} +\dots + e_n x_n g_{n,\ell}$ and we observe that
$$
\left( \sum_{\ell=0}^{\infty} u_{\ell}(x) h_j(N)^{\ell} \right)^s=\sum_{t=0}^{\infty}\sum_{\ell_1+\dots+\ell_s= t} u_{\ell_1}(x)\cdots u_{\ell_s}(x) (h_j(N))^t.
$$
 Thus we have
 \[
 \begin{split}
 F_N(x,a)&=\sum_{j=0}^N \sum_{s=0}^\infty  Z_j (N,a) \left( \sum_{t=0}^{\infty}\sum_{\ell_1+\dots+\ell_s= t} u_{\ell_1}(x)\cdots u_{\ell_s}(x) (h_j(N))^t \right) G_s\\
 &= \sum_{s=0}^\infty \left(  \sum_{t=0}^{\infty} \sum_{j=0}^N (h_j(N))^t Z_j (N,a) \sum_{\ell_1+\dots+\ell_s= t} u_{\ell_1}(x)\cdots u_{\ell_s}(x)\right) G_s\\
 &=\sum_{s=0}^\infty \left(  \sum_{t=0}^{\infty}  \partial_{y_0}^t\left(  \sum_{j=0}^N e^{h_j(N)y} Z_j (N,a)\right)\Big|_{y=0} \sum_{\ell_1+\dots+\ell_s= t} u_{\ell_1}(x)\cdots u_{\ell_s}(x)\right) G_s.
\end{split}
\]
Since the operator
$$
\mathcal V(x,\partial_{y_0}):=\sum_{s=0}^\infty \left(  \sum_{t=0}^{\infty}  \partial_{y_0}^t \left( \cdot \right) \sum_{\ell_1+\dots+\ell_s= t} u_{\ell_1}(x)\cdots u_{\ell_s}(x)\right) G_s
$$
is bounded from $A_1$ to $A_1$, by Theorem \ref{bounded_operatorq5}   we have that
\[
\begin{split}
& \lim_{N\to \infty} \sum_{j=0}^NZ_j(N,a)G(x_0 g_0 (h_j(N)) +e_1x_1 g_1(h_{j}(N))+\dots +e_n x_n g_n(h_{j}(N)))\\
&=\lim_{N\to \infty} \sum_{j=0}^N  \mathcal V(x,\partial_{y_0})(e^{h_j(N)y} Z_j(N,a) )\Big |_{y=0}
\end{split}
\]
\[
\begin{split}
&=\mathcal V(x,\partial_{y_0}) \left(\lim_{N\to \infty} \sum_{j=0}^N  e^{h_j(N)y} Z_j(N,a) \right) \Big |_{y=0}=\mathcal V(x,\partial_{y_0}) \left( e^{ay} \right ) \Big |_{y=0}\\
&= \sum_{s=0}^\infty \left(  \sum_{t=0}^{\infty}  a^t \sum_{\ell_1+\dots+\ell_s= t} u_{\ell_1}(x)\cdots u_{\ell_s}(x)\right) G_s\\
&=G(x_0 g_0 (a) +e_1x_1 g_1(a)+\dots +e_n x_n g_n(a)),
\end{split}
\]
uniformly on any compact subset of $\mathbb R^{n+1}$.
\end{proof}

\section{Monogenic setting}\label{smonogenic}

We now consider the monogenic setting, where we start with the prototype of a superoscillating sequence in one Clifford variable. Then we examine the supershift in one Clifford variable and conclude with the case of several Clifford variables.

\subsection{The prototype of a superoscillating sequence in one Clifford variable.}
The prototype of a superoscillating sequence is given by
$$
F_{N}(x_1,\dots, x_n,a):=\left( \cosh(\underline{x} /N)+a\sinh(\underline {x}/N) \right)^{N}=\sum_{j=0}^NC_j(N,a) e^{(1-(2j)/N)\underline x}.
$$
Observe that, as for the slice hyperholomorphic case, we have
$$
\lim_{N\to+\infty} F_N(x_1,\dots, x_n,a)=e^{a\underline x}.
$$
In a similar way, as we did for the slice hyperholomorphic case, we can extend $F_N(x_1,\dots, x_n,a)$ to an entire monogenic function in the following way
\begin{equation}\label{proto_superos}
F_N(x,a):=\left( \operatorname{Cosh}(x /N)+a\operatorname{Sinh}( x/N) \right)^{\odot_L N}=\sum_{j=0}^NC_j(N,a) E((1-(2j)/N) x).
\end{equation}

\begin{lemma}\label{l2}
Let $a\in\mathbb R$ with $\alpha:=\max(1,|a|)$ and, for any $x\in\mathbb R^{n+1}$, let $F_N(x,a)$ be defined as in \eqref{proto_superos}. For any $x\in\mathbb R^{n+1}$, one has
\begin{equation}\label{e1_bis}
\begin{split}
& |F_N(x,a)|\leq \exp((\alpha+1)n |x|)\\
& F_N(x,a)\to E(xa)\textrm{ in }  A_1 \textrm{ when } N\to+\infty.
\end{split}
\end{equation}
\end{lemma}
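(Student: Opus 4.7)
The plan is to mirror the proof of Lemma~\ref{l1}, replacing ordinary multiplication by the CK-product. I would write $F_N(x,a)=H^{\odot_L N}(x)$ with $H(x):=\operatorname{Cosh}(x/N)+a\operatorname{Sinh}(x/N)$, and note that $E(xa)=B^{\odot_L N}(x)$ with $B(x):=E(ax/N)$, since the CK-product of monogenic exponentials satisfies $E(\alpha x)\odot_L E(\beta x)=E((\alpha+\beta)x)$.

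For the pointwise bound, since $H$ has real scalar Taylor coefficients, iterated application of Remark~\ref{hyper_e3} gives $|F_N(x,a)|\leq \tilde H(x)^N$, where by direct computation $\tilde H(x)=\cosh(Y/N)+|a|\sinh(Y/N)$ with $Y:=|z_1|+\dots+|z_n|$. From $|z_i|^2=x_0^2+x_i^2\leq|x|^2$ one has $Y\leq n|x|$. The elementary inequality $\cosh(u)+|a|\sinh(u)\leq e^u(1+|a|u)\leq e^{(|a|+1)u}$ for $u\geq 0$ (from $\cosh(u),\operatorname{sinhc}(u)\leq e^u$) then yields $\tilde H(x)\leq e^{(|a|+1)n|x|/N}$, whence $|F_N(x,a)|\leq e^{(|a|+1)n|x|}\leq e^{(\alpha+1)n|x|}$.

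For convergence in $A_1$, by Proposition~\ref{conv_A_1} it suffices to combine this uniform exponential bound with uniform convergence on compacts. For the latter I would use the telescoping identity
\[
F_N(x,a)-E(xa)=H^{\odot_L N}-B^{\odot_L N}=(H-B)\odot_L\sum_{j=0}^{N-1}H^{\odot_L j}\odot_L B^{\odot_L(N-1-j)}.
\]
Since $\tilde B(x)\leq e^{|a|n|x|/N}$, applying Remark~\ref{hyper_e3} term-by-term bounds the modulus of the sum by $Ne^{(\alpha+1)n|x|}$. For $H-B$, I would expand in the Fueter variable $w=z_1e_1+\dots+z_ne_n$: the coefficient of $w^{\odot_L k}/(k!N^k)$ equals $1-a^k$ if $k$ is even and $a-a^k$ if $k$ is odd, both vanishing for $k=0,1$ and bounded in absolute value by $k|a^2-1|\alpha^{k-2}$ for $k\geq 2$ via $a^{2k}-1=(a^2-1)(1+a^2+\dots+a^{2(k-1)})$. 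Summing both subseries in closed form, as in the derivation of \eqref{e3}--\eqref{e4}, produces
\[
|H(x)-B(x)|\leq C\frac{|a^2-1|}{N^2}|x|^2 e^{\alpha n|x|/N}
\]
for an explicit $C=C(n)$. Multiplying the two estimates yields $|F_N(x,a)-E(xa)|\leq C'|a^2-1||x|^2 e^{(2\alpha+1)n|x|}/N$, which tends to zero uniformly on any compact subset of $\mathbb R^{n+1}$.

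The only non-routine point is the use of Remark~\ref{hyper_e3} for $H$ and $H-B$: the remark is phrased for functions with \emph{positive} real Taylor coefficients, but inspection of \eqref{hyper_e2} shows that the factor $2^{n/2}$ comes solely from the product of Clifford numbers $|a_jb_k|\leq 2^{n/2}|a_j||b_k|$ and disappears whenever all coefficients are real scalars, so the bound $|F\odot_L G|\leq\tilde F\tilde G$ applies in our setting. Apart from this observation, the argument is a direct transcription of \eqref{e1}--\eqref{e4} with the scalar variable $z$ replaced by $w$ and $|z|$ replaced by $Y\leq n|x|$, which is precisely what produces the extra factor $n$ in the final exponent.
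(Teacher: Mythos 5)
Your proposal is correct and its skeleton (telescoping identity, coefficient estimates for $H-B$ in the Fueter variable, Proposition~\ref{conv_A_1} at the end) is identical to the paper's; the difference lies in how the outermost CK-product $(H-B)\odot_L(\text{telescoping sum})$ is estimated. The paper handles it through Lemma~\ref{l12}, which gives an $A_{1,\sigma}$-norm inequality $\|g_1\odot_L g_2\|_{1,\tau}\leq C\|g_1\|_{1,\tau_1}\|g_2\|_{1,\tau_2}$ (with a shifted weight $\tau$), and then works directly in the $A_1$ topology without invoking Proposition~\ref{conv_A_1}. You instead stay entirely at the pointwise level, using only the majorant estimate $|f\odot_L g|\leq\tilde f\,\tilde g$ from \eqref{hyper_e2} and its obvious sub-multiplicativity $\widetilde{f\odot_L g}\leq\tilde f\,\tilde g$ for the iterated CK-powers, obtaining a pointwise bound $|F_N-E(xa)|\leq C\,|a^2-1|\,|x|^2 e^{(2\alpha+1)n|x|}/N$, and only then converting to $A_1$-convergence via Proposition~\ref{conv_A_1}. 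This is more elementary (it avoids the bookkeeping of the weights in Lemma~\ref{l12}) and makes the monogenic proof structurally parallel to the slice-hyperholomorphic Lemma~\ref{l1}, at the cost of less sharp constants in the exponential. Two minor points worth being explicit about when writing this up: (i) your remark that the $2^{n/2}$ in \eqref{hyper_e2} disappears for \emph{real scalar} (not merely positive) coefficients is indeed necessary, since the Taylor coefficients of $H$, $B$ and $H-B$ can be negative when $a<0$, and it does follow from the proof of \eqref{hyper_e2}; (ii) in the telescoping estimate what you actually apply on the left factor is the majorant $\widetilde{H-B}$ (not $|H-B|$), so the display $|H(x)-B(x)|\leq\dots$ should be stated for $\widetilde{H-B}(x)$, which is what your coefficient computation produces and what is required to apply $|f\odot_L g|\leq\tilde f\,\tilde g$.
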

\begin{proof}
We observe that
$$
\operatorname{sinhc}(|z_1|+\dots +|z_n|):=\frac {\sinh(|z_1|+\dots +|z_n|)}{|z_1|+\dots +|z_n|}=\int_0^1 \cosh(t(|z_1|+\dots +|z_n|))\, dt$$ and, we have the estimate $|\operatorname{sinhc}(|z_1|+\dots +|z_n|)|\leq e^{ n|x|}$. Thus, for any $N\in\mathbb N$ and for any $x\in \mathbb R^{n+1}$ we have by \eqref{hyper_e2}
\begin{equation}\label{e2_bis}
\begin{split}
& |F_N(x,a)|=\left|\left( \operatorname{Cosh}\left(\frac xN \right)+a\operatorname{Sinh}\left(\frac xN\right) \right)^{\odot_L N} \right|\\
& \leq \left( \cosh\left(\frac {|z_1'|+\dots +|z_n'|}N \right)-\frac {|z_1'|+\dots +|z_n'|}N a\operatorname{sinhc}\left(\frac {|z_1'|+\dots +|z_n'|}N\right) \right)^N\\
& \leq e^{n |x|}\left( 1+\frac {|a(|z_1'|+\dots +|z_n'|)|}{N} \right)^N\leq \exp((|a|+1)n |x|),
\end{split}
\end{equation}
which is the first inequality in \eqref{e1_bis}. Note that, for any positive real values $a,\, b\in\mathbb R$ we have, by the uniqueness of the CK-extension and since the next equality easily holds true for $x$ replaced by $\unx$:
 $$\operatorname{Cosh}(ax)+\operatorname{Cosh}(bx)=2 \operatorname{Sinh}\left ((a-b) \frac x2\right)\odot_L  \operatorname{Sinh} \left((a+b)\frac x2\right).
$$ Thus for any $N\in\mathbb N$, we have
\begin{equation}\label{e3_bis}
\begin{split}
& \left| \operatorname{Cosh}\left( \frac xN \right) - \operatorname{Cosh}\left( a \frac xN \right) \right| =2 \left| \operatorname{Sinh}\left(\frac{(a-1)x}{2N} \right)\odot_L \operatorname{Sinh}\left( \frac{(a+1)x}{2N} \right) \right| \\
& =2\left| \sum_{k=1}^{\infty} \sum_{j=1}^\infty \frac{(a-1)^{2k+1}(a+1)^{2j+1}(z_1 e_1+\cdots+z_ne_n)^{\odot_L 2(j+k)+2}}{(2j+1)!(2k+1)! (2N)^{2j+2k+2}}\right| \\
& \overset{\eqref{hyper_e1}}{\leq} 2 \sum_{k=1}^{\infty} \sum_{j=1}^\infty \frac{|a-1|^{2k+1} |a+1|^{2j+1}(|z_1|+\cdots+|z_n|)^{2(j+k)+2}}{(2j+1)!(2k+1)! (2N)^{2j+2k+2}}\\
& \leq \frac{|a^2-1|}{2N^2} (|z_1|+\dots+|z_n|)^{2}\exp\left(\frac{|a-1|+|a+1|}{2N} (|z_1|+\cdots+|z_n|)\right)\\
&\leq n^2 \frac{|a^2-1|}{2N^2} |x|^2 \exp\left( n \frac{\alpha+1}{N} |x| \right)
\end{split}
\end{equation}
and
\begin{align} \label{e4_bis} \nonumber
& \left| a\operatorname{Sinh}\left(  \frac xN \right)-\operatorname{Sinh}\left( \frac {ax}N \right) \right|=\left| \sum_{k=0}^\infty \frac{ (a-a^{2k+1})}{(2k+1)!}\left( \frac {z_1e_1+\dots+z_n e_n}N\right)^{\odot_L 2k+1} \right| \nonumber \\
&\overset{\eqref{hyper_e1}}{\leq} \frac{|a^2-1|}{N^2}(|z_1|+\dots+|z_n|)^2\left| \sum_{k=1}^\infty \frac{1}{(2k+1)!} \left( \sum_{\ell=0}^{k-1} a^{2\ell +1} \right) \left( \frac{|z_1|+\dots+|z_n|}{N} \right)^{2k-1} \right| \nonumber \\
& \leq \frac{|a^2-1|}{2N^2}(|z_1'|+\dots+|z_n'|)^2 \sum_{k=1}^\infty \frac{ \alpha^{2k-1} }{(2k-1)!(2k+1)}\left( \frac{ |z_1|+\dots+|z_n|}{N} \right)^{2k-1}\\
&\leq \frac{|a^2-1|}{6N^2} (|z_1|+\dots+|z_n|)^2 \sum_{k=1}^\infty \frac{ 1 }{(2k-1)! }\left( \frac{\alpha (|z_1|+\dots+|z_n|)}{N} \right)^{2k-1} \nonumber\\
&\leq \frac{|a^2-1|}{6N^2} n^2|x|^2 \exp\left( \frac\alpha N (|z_1|+\dots+|z_n|) \right)\leq \frac{|a^2-1|}{6N^2} n^2|x|^2 \exp\left( n\frac\alpha N |x| \right) \nonumber
\end{align}
 From the identity:
 $$(f(x))^{\odot_L N}-(g(x))^{\odot_L N}=(f(x)-g(x))\odot_L \sum_{k=0}^{N-1}(f(x))^{\odot_L k}(g(x))^{\odot_L N-1-k}$$ when $f(x)$ and $g(x)$ are entire monogenic functions that commute we have that
 \[
 \begin{split}
 F_N(x,a)-E(xa)&=\left(\operatorname{Cosh}\left(\frac xN\right)+a\operatorname{Sinh} \left(\frac xN\right)\right)^{\odot_L N}-\left(E\left( a \frac xN\right)\right)^{\odot_L N}\\
 &=\left( \operatorname{Cosh} \left( \frac xN \right) - \operatorname{Cosh} \left( a\frac xN \right) + a \operatorname{Sinh}\left( \frac xN \right)\right. \\
&\left. - \operatorname{Sinh}\left( a \frac xN \right)\right) \odot_L \sum_{k=0}^{N-1} \left(\operatorname{Cosh}\left(\frac xN\right)+a\operatorname{Sinh} \left(\frac xN\right) \right)^{\odot_L k} \odot_L E \left (ax\frac{N-1-k}{N}\right)
 \end{split}
 \]
 . Moreover, by the equations:
 $$F_N(x,a)\odot_L E(ax)=E(ax) \odot_L F_N(x,a),$$
  $$\operatorname{Sinh}(x)\odot_L E(ax)= E(ax) \odot_L \operatorname{Sinh}(x)$$ and $$\operatorname{Cosh}(x)\odot_L E(ax)= E(ax) \odot_L \operatorname{Cosh}(x)$$ (that is true because of the uniqueness of the CK-extension and the fact that $F_N(\unx,a) $,  $ e^{a\unx}$, $\operatorname{Sinh}(x)$ and $\operatorname{Cosh}(x)$ commute with each other), and Lemma \ref{l12} that for any $N\in\mathbb N$, $\eta>0$, $\delta>0$, there exists $C(n,\delta,\eta)>0$ such that
\begin{eqnarray}\label{estimate_cont_mono}
& \|F_N(x,a)-E(xa)\|_{1, n(1+\eta)(n+\delta+1)(2n(\alpha+1)+1)}\leq C(n,\delta,\eta) \left \| \left( \operatorname{Cosh} \left( \frac xN \right) - \operatorname{Cosh} \left( a\frac xN \right) \right. \right. \nonumber \\
&\left. \left. + a \operatorname{Sinh}\left( \frac xN \right) - \operatorname{Sinh}\left( a \frac xN \right)\right) \odot_L \sum_{k=0}^{N-1} F_N(z,a)^{\odot_L k/N} \odot_L E \left (ax\frac{N-1-k}{N}\right) \right\|_{1, (1+\eta)(n+\delta+1)(2n(\alpha+1)+1)} \nonumber \\
&\leq C(n,\delta,\eta) \left \| \operatorname{Cosh} \left( \frac xN \right) - \operatorname{Cosh} \left( a\frac xN \right)+ a \operatorname{Sinh}\left( \frac xN \right) - \operatorname{Sinh}\left( a \frac xN \right)\right\|_{1,1} \nonumber \\
&\times \left\| \sum_{k=0}^{N-1} \left(\operatorname{Cosh}\left(\frac xN\right)+a\operatorname{Sinh} \left(\frac xN\right) \right)^{\odot_L k} \odot_L E \left (ax\frac{N-1-k}{N}\right) \right\|_{1,2n(\alpha+1)} \\
&\leq C(n,\delta,\eta) \left \| \operatorname{Cosh} \left( \frac xN \right) - \operatorname{Cosh} \left( a\frac xN \right)+ a \operatorname{Sinh}\left( \frac xN \right) - \operatorname{Sinh}\left( a \frac xN \right)\right\|_{1,1} \nonumber \\
&\times  \sum_{k=0}^{N-1} \sup_{x\in\mathbb R^{n+1}} \left( \left| \left(\operatorname{Cosh}\left(\frac xN\right)+a\operatorname{Sinh} \left(\frac xN\right) \right)^{\odot_L k} \odot_L E \left (ax\frac{N-1-k}{N}\right) \right| \exp(-2n(\alpha+1) |x|) \right) \nonumber \\
&\overset {\eqref{hyper_e2}}{\leq} C(n,\delta,\eta) \left \| \operatorname{Cosh} \left( \frac xN \right) - \operatorname{Cosh} \left( a\frac xN \right)+ a \operatorname{Sinh}\left( \frac xN \right) - \operatorname{Sinh}\left( a \frac xN \right)\right\|_{1,1}\nonumber \\
&\times  \sum_{k=0}^{N-1} \sup_{x\in\mathbb R^{n+1}} \left( \left(\operatorname{cosh}\left(\frac {|z_1|+\dots+|z_n|}N\right)+a\operatorname{sinh} \left(\frac {|z_1|+\dots+|z_n|}N\right) \right)^{k}  \right. \nonumber\\
&\left. \times \exp\left( \left (a(|z_1|+\dots+|z_n|)\frac{N-1-k}{N}\right) \right) \exp(-2n(\alpha+1) |x|) \right) \nonumber \\
&\overset{ \textrm{\eqref{e2_bis}, \eqref{e3_bis}, \eqref{e4_bis} }}{\leq} C(n,\delta,\eta) \frac 23 \sup_{x\in\mathbb R^{n+1}}\left( \frac{n|a^2-1|}{N^2} |x|^2 \exp\left( n \frac{\alpha +1}{N} |x|-|x| \right)\right) \nonumber\\
&\times \sup_{x\in\mathbb R^{n+1}}\left( \sum_{k=0}^{N-1}\exp\left(k n\left( \frac{\alpha+1}{N} \right) |x| +an \frac{N-1-k}{N}|x|-2n(\alpha+1)|x|\right)\right) \nonumber\\
& \leq C(n,\delta,\eta) \frac 23 \sup_{x\in\mathbb R^{n+1}}\left( \frac{n|a^2-1|}{N} |x|^2 \exp\left( n \frac{\alpha +1}{N} |x|-|x| \right)\right)\nonumber
\end{eqnarray}
where the last estimate is due to the fact that
$$
\sup_{x\in\mathbb R^{n+1}}\left( \sum_{k=0}^{N-1}\exp\left(k n\left( \frac{\alpha+1}{N} \right) |x| +an \frac{N-1-k}{N}|x|-2n(\alpha+1)|x|\right)\right) \leq N
$$
for $N$ big enough. The estimate \eqref{estimate_cont_mono} implies that $F_N(x,a)\to E(xa)$ in $A_1$ when $N\to+\infty$.
\end{proof}

\subsection{The supershift in one Clifford variable}
Similarly, as we did in the slice hyperholomorphic setting, we now study the monogenic case using what we have proved above.

\begin{definition}\label{d5}
Let $|a|>1$. Let
 $\{ h_{j}(N) \}_{j\leq N,N\in\mathbb N_0}$ be a real-valued sequence such that
$$
\sup_{j \in\mathbb N,\ N\in\mathbb{N}} \  |h_{j}(N)|\leq 1.
$$
Let $G(\lambda)$, be an entire left monogenic function.
We say that
the sequence
\begin{equation*}
F_N(x,a)=\sum_{j=0}^N Z_j(N,a)
G( x h_{j}(N)),
\end{equation*}
where  $\{ Z_j(N,a)\}_{j,N}$, $j=0,\ldots ,N$, for $ N\in \mathbb{N}$ is a  Clifford-valued sequence,
admits the supershift property if
$$
\lim_{N\to \infty}F_N(x,a)=
G(x a),
$$
uniformly on any compact subset of $\mathbb R^{n+1}$.
\end{definition}

\begin{theorem}\label{bounded_operatorq4_bis}
Let $G$ be an entire left monogenic function whose series expansion at zero is given by
\begin{equation*}
G(x)=\sum_{m=0}^\infty P_{m}(x) G_{m}.
\end{equation*}
then the operators
$$
  \mathcal V(x,\partial_{\uny})f(y):=\sum_{s=0}^\infty \frac 1{c_s} \sum_{|t|=s} \frac 1{t!} \partial_{\uny}^t \left(f(y)\right) \sum_{|m|=s} \sum_{i+j=m} \frac{m!}{i!j!} x_0^{|i|} e_1^{i_1}\cdots e_n^{i_n} (x_1)^{j_1}\cdots (x_n)^{j_n} G_m
$$
is a family of continuous operators from $A_1$ to $A_1$ depending on the parameters $x$ where $$\partial_{\uny}^{m}=\partial_{y_1}^{m_1}\cdots \partial_{y_n}^{m_n},$$ $c_s=(-n)^s$ if $s$ is even or $c_s=(-n)^s(e_1+\dots+e_n)$ if $m$ is odd. If $x$ belongs to a compact subset $K\subseteq\mathbb R^{n+1}$, then the family is uniformly continuous with respect to $x\in K$.
\end{theorem}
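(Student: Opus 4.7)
The plan is to follow the strategy of the proof of Theorem \ref{bounded_operatorq4}, replacing the slice hyperholomorphic power-series expansion by the Fueter-polynomial expansion and using the monogenic version of Lemma \ref{taylor_coeff_estimate}. First I would write every $f\in A_1$ as $f(y)=\sum_{k\in\mathbb{N}_0^n}P_k(y)a_k$ with $|a_k|\le C_f b_f^{|k|}c(n,k)/|k|!$, and expand $G$ as in the statement with $|G_m|\le C_G b_G^{|m|}c(n,m)/|m|!$. By Theorem \ref{fpt}(II) and (IV),
$$\partial_\uny^t f(y)=\sum_{k\ge t}\frac{k!}{(k-t)!}P_{k-t}(y)a_k,\qquad |P_{k-t}(y)|\le|y|^{|k-t|}.$$

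Next I would control the Clifford-valued constants appearing in the operator. From $(e_1+\dots+e_n)^2=-n$ one has $|c_s|=n^{s/2}$, hence $|1/c_s|\le n^{-s/2}$. Using $|e_1^{i_1}\cdots e_n^{i_n}|\le 1$, the multinomial identity $\sum_{i+j=m}\binom{m}{i}=2^{|m|}$, and $|x_0|^{|i|}\prod_k|x_k|^{j_k}\le|x|^{|m|}$, the inner $x$-sum at level $s$ is bounded by $(2|x|)^s\sum_{|m|=s}|G_m|$. Combined with $\sum_{|t|=s}1/t!=n^s/s!$ and the inequality \eqref{binom_est}, this reduces the whole triple series in $s$, $t$, $k$ (after reindexing $k\mapsto k-t$) to a bound of the form
$$|\mathcal{V}(x,\partial_\uny)f(y)|\le D(n)\,C_f\Bigl(\sum_{\ell}\frac{\beta_f^{|\ell|}c(n,\ell)}{|\ell|!}|y|^{|\ell|}\Bigr)\sum_{s=0}^\infty\bigl(2\beta_f/\sqrt n\bigr)^s|x|^s\!\!\sum_{|m|=s}|G_m|,$$
for a suitable $\beta_f>b_f$. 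The $\ell$-series is dominated by $A_n e^{\beta_f'|y|}$ (essentially Lemma \ref{growth_taylor_series}), yielding an $A_{1,\sigma}$-bound; the $s$-series is finite because $G$ is entire monogenic, so its Fueter coefficients decay faster than any geometric factor. For $x\in K$ compact, the $s$-series is bounded uniformly in $x$, giving a uniform bound in $A_{1,\sigma}$ with $\sigma$ depending on $b_f$, $b_G$ and $K$.

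Continuity from $A_1$ to $A_1$ then follows because if $f_\ell\to 0$ in $A_1$ one may choose $C_{f_\ell}\to 0$ with $b_{f_\ell}$ uniformly bounded (Lemma \ref{taylor_coeff_estimate}); the same estimate forces $\|\mathcal{V}(x,\partial_\uny)f_\ell\|_{1,\sigma}\to 0$, and uniformity in $x\in K$ yields the stated uniform continuity. The main obstacle I anticipate is the bookkeeping of the multi-index combinatorics — matching $|1/c_s|$, the $2^{|m|}$ multinomial factor, the $c(n,\ell)$ weight and the reindexing $k\mapsto k-t$ so that the triple series factors cleanly into an exponential in $|y|$ (landing in $A_1$) times a convergent $s$-series in $|x|$ (thanks to the entire monogenicity of $G$). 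All remaining steps are the monogenic counterparts of the manipulations in the proof of Theorem \ref{bounded_operatorq4}.
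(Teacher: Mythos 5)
Your proposal follows the same overall strategy as the paper: expand $f$ in Fueter polynomials with the coefficient bound from Lemma \ref{taylor_coeff_estimate}, differentiate termwise using Theorem \ref{fpt}(II), reindex $k\mapsto k+t$, factor the binomial-type quantity $c(n,\ell+t)$ into $c(n,\ell)c(n,t)$ up to a geometric loss, and close the estimate by invoking the convergence of the Fueter series of an entire $G$. Where you diverge is in the technical bounds, and your version is somewhat leaner. The paper keeps everything phrased in Fueter polynomials: it bounds $|P_\ell(y)|\le P_\ell(y')$ (rather than the cruder $|y|^{|\ell|}$ from Theorem \ref{fpt}(IV) that you use), recognizes the inner $x$-sum $\sum_{i+j=m}\frac{m!}{i!j!}x_0^{|i|}\cdots$ as $P_m(\cdot)$ via the binomial formula Theorem \ref{fpt}(V), and then sums $\sum_m P_m(\cdot)|G_m|$ directly; you instead collapse the $x$-sum immediately to $(2|x|)^s\sum_{|m|=s}|G_m|$ and the $\ell$-sum to a power series in $|y|$. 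Both routes land in the same place, and yours avoids some of the Fueter machinery at the cost of slightly looser constants.

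Two small slips worth flagging. First, you write $|G_m|\le C_G b_G^{|m|}c(n,m)/|m|!$ as if $G\in A_1$; the theorem only assumes $G$ entire, so you should only use $\limsup_{|m|\to\infty}|G_m|^{1/|m|}=0$ (which, together with the polynomial count $\#\{|m|=s\}=\binom{n+s-1}{s}$, is exactly what makes $\sum_s(2|x|)^s\sum_{|m|=s}|G_m|$ converge for every $x$); this is in fact what you invoke at the end, so the $A_1$-type bound is an unused and unwarranted hypothesis rather than an error in the final chain. Second, your claim $|c_s|=n^{s/2}$ matches the normalization of Proposition \ref{derivative_mon_exp} (where $c_{2q}=(-n)^q$), while the theorem's statement writes $c_s=(-n)^s$ and the paper's proof uses $|c_s|\ge n^s$; the paper is internally inconsistent on this point, and since either reading makes $1/|c_s|$ decay geometrically, your estimate still goes through. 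Finally, the passage "using inequality \eqref{binom_est}" glosses over the specific inequality $c(n,\ell+t)\le(n-1)!\,2^{2n+|\ell|+|t|-2}\,c(n,\ell)\,c(n,t)$ that the paper proves; you would need to write that out (or an equivalent) for the "$\beta_f>b_f$" absorption to be rigorous, but the idea is clearly present.
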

\begin{proof}
Let $f\in A_1$. We consider its Taylor series:
$$f(q)=\sum_{\mathbf{\ell}\in\mathbb N^n_0} P_{\mathbf{\ell}}(x) a_\ell .
$$ In particular, by Lemma \ref{taylor_coeff_estimate} we have that there exist positive constants $C_f$ and $b_f$ such that
$$|a_{\mathbf{\ell}}|\leq C_f\frac{b^{|{\mathbf{\ell}}|}_f c(n,{\mathbf{\ell}})}{ |{\mathbf{\ell}}|!}.$$
Moreover, if $f_m\to 0$ in $A_1$ then $C_{f_m}\to 0$ and $b_{f_m}$ is bounded.
We have
 \begin{align}
& \left| \mathcal V(x,\partial_{\uny})(f(y)) \right|  =\left| \sum_{s=0}^\infty \frac 1{c_s} \sum_{|t|=s} \frac 1{t!} \partial_{\uny}^t \left(f(y)\right) \sum_{|m|=s} \sum_{i+j=m} \frac{m!}{i!j!} x_0^{|i|} e_1^{i_1}\cdots e_n^{i_n} (x_1)^{j_1}\cdots (x_n)^{j_n} G_m \right| \nonumber \\
 &= \left| \sum_{s=0}^\infty \frac 1{c_s} \sum_{|t|=s} \frac 1{t!} \partial_{\uny}^t \left( \sum_{\mathbf{\ell}\in\mathbb N^n_0} P_{\mathbf{\ell}}(y) a_\ell \right) \sum_{|m|=s} \sum_{i+j=m} \frac{m!}{i!j!} x_0^{|i|} e_1^{i_1}\cdots e_n^{i_n} (x_1)^{j_1}\cdots (x_n)^{j_n} G_m \right| \nonumber \\
 &\leq  2^{n/2}  \sum_{s=0}^\infty \frac 1{|c_s|} \sum_{|t|=s} \frac {1}{t!} \left( \sum_{\mathbf{\ell}\geq t } \frac{\ell!}{(\ell-t)!} \left| P_{\ell- t}(y) \right| \left| a_\ell \right| \right) \sum_{|m|=s} \sum_{i+j=m} \frac{m!}{i!j!} |x_0^{|i|}|  |x_1|^{j_1}\cdots |x_n|^{j_n} |G_m| \label{est_cont_v}\\
 &= 2^{n/2} \sum_{s=0}^\infty \frac 1{|c_s|} \sum_{|t|=s} \left( \sum_{\mathbf{\ell}\in\mathbb N^n_0 } \frac 1{t!} \frac{(\ell+t)!}{\ell!} \left| P_{\ell}(y) \right| \left| a_{\ell+t} \right| \right) \sum_{|m|=s} \sum_{i+j=m} \frac{m!}{i!j!} |x_0^{|i|}| |x_1|^{j_1}\cdots |x_n|^{j_n} |G_m| \nonumber \\
 &\leq 2^{n/2} C_f \sum_{s=0}^\infty \frac 1{|c_s|} \sum_{|t|=s} \left( \sum_{\mathbf{\ell}\in\mathbb N^n_0 } \frac{(\ell+t)!}{\ell!} \left| P_{\ell}(y) \right| \frac{c(n,\ell+t) b_f^{|\ell+t|}}{|\ell+t|!} \right) \nonumber \\
 &\times \sum_{|m|=s} \sum_{i+j=m} \frac{m!}{i!j!} |x_0^{|i|}| |x_1|^{j_1}\cdots |x_n|^{j_n} |G_m| . \nonumber
\end{align}
Now we prove the estimate
$$c(n,\ell+t)\leq (n-1)! \, 2^{2n+|\ell|+|t|-2}c(n,\ell)c(n,t).
$$
 Indeed, by the estimate
 $$(j+k)!\leq 2^{|j|+|k|}j!\, k!$$  for $j=n+|\ell|-1$ and $k=n+|t|-1$ we have
\[
\begin{split}
c(n,\ell+t)= \frac{(n+|\ell| + |t| -1)!}{(n-1)!(\ell + t)!} & \leq (n-1)! 2^{2n+|t|+|\ell| -2} \frac{(n+|\ell| -1)!}{(n-1)! \ell !} \frac{(n+|t|-1)!}{(n-1)! t!}\\
& \leq (n-1)! 2^{2n+|t|+|\ell| -2} c(n,\ell) c(n, t).
\end{split}
\]
Plugging this estimate in \eqref{est_cont_v} and observing that $\frac{(\ell+t)!}{|\ell+t|!}\leq 1$, we have that
\begin{equation}\label{est_cont_v_2}
\begin{split}
& \left| \mathcal V(x,\partial_{\uny})(f(y)) \right|  \leq C_f(n-1)! 2^{2n-2} \left( \sum_{\mathbf{\ell}\in\mathbb N^n_0 } \frac{c(n,\ell) (2b_f)^{|\ell|}}{\ell!} \left |P_\ell (y)\right |  \right)\\
&  \times\sum_{s=0}^\infty \frac 1{|c_s|} \left( \sum_{|t|=s} \frac{c(n,t)}{t!} \right) \sum_{|m|=s}  \sum_{i+j=m} \frac{m!}{i!j!} |x_0^{|i|}| \frac{|e_i|}{|i|!} |x_1|^{j_1}\cdots |x_n|^{j_n} c(n,m) (2b_f)^{s} |G_m|\\
&\leq C_f(n-1)! 2^{2n-2} \left( \sum_{\mathbf{\ell}\in\mathbb N^n_0 } \frac{c(n,\ell) (2b_f)^{|\ell|}}{\ell!} \left |P_\ell (y)\right |  \right)  \sum_{s=0}^\infty \frac 1{|c_s|} \left( \sum_{|t|=s} \frac{c(n,t)}{t!} \right) \\
&\times \sum_{s=0}^\infty \sum_{|m|=s}  \sum_{i+j=m} \frac{m!}{i!j!} |x_0^{|i|}| \frac{|e_i|}{|i|!} |x_1|^{j_1}\cdots |x_n|^{j_n} c(n,m) (2b_f)^{s} |G_m|.\\
\end{split}
\end{equation}
Now we observe that $|P_\ell(y)| \leq P_\ell(y')$ for $y=y_0+y_1e_1+\dots+ y_ne_n$ and $$y'=(\sqrt{y_0^2+y_1^2})e_1+\dots +(\sqrt{y_0^2+y_n^2}) e_n.$$
 This is true because, if we define $z_i=y_i- y_0e_i$
$$|P_\ell(y)|\leq |z_1|^{\ell_1} \cdots |z_n|^{\ell_n}=P_\ell(y').$$
By the fact that
 $$\limsup_{p\to +\infty} \left(\sum_{|m|=p} c(n,m)\right)^{\frac 1p}=n$$
(see Lemma 1 \cite{CAK07}) we can deduce that for any $\epsilon>0$ there exists $p_\epsilon>0$ such that for any $p\geq p_\epsilon$ we have
$$\left(\sum_{|m|=p} c(n,m)\right)^{\frac 1p}\leq n+\epsilon.$$ Let $C_\epsilon>1$ be such that $\sum_{|m|=p} c(n,m) \leq C_\epsilon (n+\epsilon)^p$ for any $0\leq p\leq p_\epsilon$. Thus we can conclude that for any $p>0$ and any multi-index $m\in\mathbb N^n$ with $|m|=p$ we have:
$$c(n,m)\leq\sum_{|m|=p} c(n,m)\leq C_\epsilon(n+\epsilon)^p.$$
Moreover, $|c_s|\geq n^{s}$ because when $s$ is odd $|e_1+\dots +e_n|=\sqrt{n}$. Plugging these two estimates in \eqref{est_cont_v_2} we obtain
\begin{equation}\label{est_cont_v_3}
\begin{split}
& \left| \mathcal V(x,\partial_{\uny})(f(y)) \right|  \leq C_\epsilon C_f(n-1)! 2^{2n-2} \left( \sum_{\mathbf{\ell}\in\mathbb N^n_0 } \frac{c(n,\ell) (2b_f)^{|\ell|}}{\ell!} P_\ell(y')  \right) \left( \sum_{t\in\mathbb N^n_0} \frac {c(n,t)}{t! n^{|\ell|}}  \right)  \\
&\times\sum_{s=0}^\infty \frac 1{|c_s|} \sum_{|m|=s}  \sum_{i+j=m} \frac{m!}{i!j!} |\frac {(n+\epsilon)}{n}b_f 2 x_0|^{|i|}  | \frac {(n+\epsilon)}{n}b_f 2x_1|^{j_1}\cdots | \frac {(n+\epsilon)}{n}b_f 2x_n|^{j_n}  |G_m|.
\end{split}
\end{equation}
By Lemma \ref{growth_taylor_series} we have
$$\sum_{\ell\in\mathbb N^n_0} \frac{c(n,\ell)}{\ell!} 2^{|\ell|}P_\ell(b_f y')=\sum_{\ell\in\mathbb N^n_0} \frac{c(n,\ell)}{\ell!} P_\ell(2b_f y')\leq e^{2 b_f |y'|} \leq e^{2nb_f |x|},$$
and
 $$\sum_{t\in\mathbb N^n_0} \frac {c(n,t)}{t! n^{|t|}} \leq e^{\frac 1n}$$ so the property (V) in Theorem \ref{fpt} we have that
\[
\begin{split}
& \sum_{i+j=m} \frac{m!}{i!j!} |\frac {(n+\epsilon)}{n}b_f 2 x_0|^{|i|}  | \frac {(n+\epsilon)}{n}b_f 2x_1|^{j_1}\cdots | \frac {(n+\epsilon)}{n}b_f 2x_n|^{j_n}\\
&= P_m \left ( \frac {(n+\epsilon)}{n}2 b_f ((|x_0|+|x_1| )e_1+(|x_0|+ |x_2|) e_2+\dots +(|x_0|+|x_n|)e_n)\right).
\end{split}
\]
Thus the estimate \eqref{est_cont_v_3} becomes
\begin{equation}\label{est_cont_v_4}
\begin{split}
& \left| \mathcal V(x,\partial_{\uny})(f(y)) \right|  \leq C_\epsilon C_f(n-1)! 2^{n-2} e^{2 n b_f |x| +\frac 1n}\\
&\times \sum_{s=0}^\infty  \sum_{|m|=s} P_m \left (\frac{ (n+\epsilon)}{n}2 b_f ((|x_0|+|x_1| )e_1+\dots +(|x_0|+|x_n|)e_n) \right) |G_m|.
\end{split}
\end{equation}
The previous estimate implies that $(\mathcal V( x, \partial_{\uny}) f(y))\to 0$ in $A_1$ if $f_m\to 0$ in $A_1$ because the series
 $$
 \sum_{s=0}^\infty  \sum_{|m|=s} P_m \left (\frac{ (n+\epsilon)}{n}2 b_f ((|x_0|+|x_1| )e_1+(|x_0|+ |x_2|) e_2+\dots +(|x_0|+|x_n|)e_n) \right) |G_m|
 $$
 is convergent due to the fact that the Taylors series of $G$ is absolutely convergent. Thus the uniformly continuity with respect to $x\in K$ of the family of operators $\mathcal V(x,\partial_{\uny})$ is proved because of the continuity in $x$ of the right hand side of \eqref{est_cont_v_4}.
\end{proof}

\begin{theorem}\label{propm=4q_bis}
Let $|a|>1$ and let
\begin{equation} \nonumber
f_N(x_1,\dots, x_n,a):= \sum_{j=0}^N e^{h_j(N)x} Z_j(N,a),\ \ \ N\in \mathbb{N},\ \ \ (x_1,\dots ,x_n)\in \mathbb{R}^{n},
\end{equation}
be a superoscillating function as in Definition \ref{SUPOSONE} and
assume that the sequence
$$f_N(x,a):=\sum_{j=0}^N E(h_j(N) x) Z_j(N,a)$$
converges to $E(ax)$ in the space $A_{1}$.
Let $G$ be entire left monogenic functions. We define
$$
F_N(x_0, x_1,\dots ,x_n,a)=\sum_{j=0}^N Z_j(N,a)G(h_j(N)x),
$$
Then, $F_N(x_0,x_1,\dots,\,x_n,a)$ admits the supershift property that is
$$
\lim_{N\to \infty}F_N(x_0, x_1,\dots, x_n,a)=G(x a)
$$
uniformly on any compact subsets of $\mathbb R^{n+1}$.
\end{theorem}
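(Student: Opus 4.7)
The plan is to mirror the proof of Theorem \ref{propm=4q} in the monogenic setting: realize $F_N(x,a)$ as the action of the monogenic infinite order differential operator $\mathcal V(x,\partial_{\uny})$ of Theorem \ref{bounded_operatorq4_bis} applied to $f_N(y)$ and evaluated at $y=0$, then apply the continuity of $\mathcal V$ between $A_1$ spaces together with the hypothesis that $f_N\to E(ay)$ in $A_1$.

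Concretely, I first expand $G$ in its Taylor series of Fueter polynomials $G(\lambda)=\sum_{m\in\mathbb N_0^n} P_m(\lambda)G_m$. Since $h_j(N)\in\mathbb R$, the homogeneity of the Fueter polynomials gives $P_m(h_j(N)x)=(h_j(N))^{|m|}P_m(x)$. Proposition \ref{derivative_mon_exp}, applied with $\alpha=h_j(N)$, rewrites
\[
(h_j(N))^{|m|}= c_{|m|}^{-1}\sum_{|t|=|m|}\frac{1}{t!}\partial_{\uny}^{\,t}\bigl(E(h_j(N)y)\bigr)\Big|_{y=0}.
\]
This scalar is real, hence commutes with $Z_j(N,a)$; moreover $\partial_{\uny}^{\,t}$ and evaluation at $y=0$ commute with right multiplication by the constant $Z_j(N,a)$, so this constant can be brought inside the derivative. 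Expanding $P_m(x)$ via the binomial formula of Theorem \ref{fpt}(V) and Remark \ref{remark_fp}, and summing over $j$, one rearranges this identity into
\[
F_N(x,a)=\mathcal V(x,\partial_{\uny})\bigl(f_N(y)\bigr)\Big|_{y=0},
\]
where $\mathcal V(x,\partial_{\uny})$ is precisely the operator introduced in Theorem \ref{bounded_operatorq4_bis} and $f_N(y)=\sum_{j=0}^N E(h_j(N)y)Z_j(N,a)$.

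By hypothesis $f_N\to E(ay)$ in $A_1$. Theorem \ref{bounded_operatorq4_bis} asserts that $\mathcal V(x,\partial_{\uny}):A_1\to A_1$ is continuous with bound uniform in $x$ on any compact set $K\subset\mathbb R^{n+1}$; since the evaluation $g\mapsto g(0)$ is continuous on each $A_{1,\sigma}$, the composition $f\mapsto\mathcal V(x,\partial_{\uny})(f)\big|_{y=0}$ inherits uniform continuity in $x\in K$. Therefore
\[
F_N(x,a)\longrightarrow \mathcal V(x,\partial_{\uny})\bigl(E(ay)\bigr)\Big|_{y=0}
\]
uniformly on any compact subset of $\mathbb R^{n+1}$. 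A second application of Proposition \ref{derivative_mon_exp}, now with $\alpha=a$, yields $\sum_{|t|=|m|}\tfrac{1}{t!}\partial_{\uny}^{\,t}E(ay)|_{y=0}=c_{|m|}a^{|m|}$; substituting collapses the limit to $\sum_{m} a^{|m|}P_m(x)G_m=\sum_m P_m(xa)G_m=G(xa)$, as required (using that $a\in\mathbb R$ commutes with $x$).

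The principal technical subtlety, and the main obstacle, is the non-commutativity of $\mathbb R_n$: for odd $|m|$ the constant $c_{|m|}=(-n)^q(e_1+\cdots+e_n)$ is not central, so $c_{|m|}^{-1}$ must be kept strictly on the left throughout, and one must track on which side $Z_j(N,a)$, $P_m(x)$ and $G_m$ sit at every step of the manipulation. Once the algebraic identity $F_N=\mathcal V(x,\partial_{\uny})f_N|_{y=0}$ is established, the analytic estimates controlling the interchange of the multiple series are already packaged into the $A_1\to A_1$ continuity of $\mathcal V$ furnished by Theorem \ref{bounded_operatorq4_bis}, and the remainder of the argument runs in direct parallel with Theorem \ref{propm=4q}.
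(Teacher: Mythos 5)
Your proof is correct and follows essentially the same route as the paper: Taylor-expand $G$ in Fueter polynomials, convert the scalar $(h_j(N))^{|m|}$ into the monogenic derivative combination from Proposition~\ref{derivative_mon_exp}, recognize $F_N(x,a)=\mathcal V(x,\partial_{\underline y})f_N(y)\big|_{y=0}$ with $\mathcal V$ the operator of Theorem~\ref{bounded_operatorq4_bis}, and pass to the limit using its uniform $A_1\to A_1$ continuity together with $f_N\to E(ay)$ in $A_1$. The only cosmetic difference is that you invoke the degree-$|m|$ homogeneity $P_m(h_j(N)x)=(h_j(N))^{|m|}P_m(x)$ before the binomial expansion, whereas the paper evaluates $P_i(x_0h_j(N))$ and $P_j(\underline x\,h_j(N))$ separately; the algebraic content, including the careful left placement of $c_{|m|}^{-1}$ and right placement of $Z_j(N,a)$, is the same.
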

\begin{proof}
Let
\begin{equation*}
\sum_{|m |=0}^\infty P_{m} (\lambda) G_{m} .
\end{equation*}
be the Taylor series of $G(\lambda)$. First we observe that
\[
\begin{split}
&F_N( x,a)=\sum_{j=0}^NZ_j(N,a)G(x h_j(N))\\
&= \sum_{j=0}^NZ_j (N,a)\sum_{|m|=0}^\infty P_m(x h_j(N)) G_m=\sum_{j=0}^NZ_j (N,a)\sum_{|m|=0}^\infty P_m(x_0 h_j(N)+\unx h_j(N)) G_m\\
&=\sum_{j=0}^NZ_j (N,a)\sum_{s=0}^\infty\sum_{|m|=s} \sum_{i+j=m} \frac{m!}{i!j!} P_i(x_0 h_j(N))P_j(\unx h_j(N)) G_m\\
&=\sum_{j=0}^NZ_j (N,a)\sum_{s=0}^\infty\sum_{|m|=s} \sum_{i+j=m} \frac{m!}{i!j!} x_0^{|i|} e_1^{i_1}\cdot\dots \cdot e_n^{i_n}h_j(N)^{|i|} (x_1)^{j_1}\cdots (x_n)^{j_n} h_j(N))^{|j|} G_m
\end{split}
\]
\[
\begin{split}
&=\sum_{s=0}^\infty\sum_{|m|=s} \sum_{j=0}^N (h_j(N))^s Z_j (N,a)  \sum_{i+j=m} \frac{m!}{i!j!} x_0^{|i|} e_1^{i_1}\cdot\dots \cdot e_n^{i_n} (x_1)^{j_1}\cdots (x_n)^{j_n} G_m\\
&=\sum_{s=0}^\infty \sum_{|t|=s} \frac{1}{c_s t!} \partial_{\uny}^t \left( \sum_{j=0}^N E(h_j(N)y) Z_j (N,a) \right)\Big|_{y=0}
\\
&\times \sum_{|m|=s} \sum_{i+j=m} \frac{m!}{i!j!} x_0^{|i|}  e_1^{i_1}\cdot\dots \cdot e_n^{i_n} (x_1)^{j_1}\cdots (x_n)^{j_n} G_m
\end{split}
\]
where in the last equation we used Proposition \ref{derivative_mon_exp}. We define the operator
$$
 (\mathcal V(x,\partial_{\uny})f(y)):=\sum_{s=0}^\infty\sum_{|t|=s} \partial_{\uny}^t \left(f(y)\right) \sum_{|m|=s} \sum_{i+j=s} \frac{m!}{i!j!} x_0^{|i|} \frac{e_i}{|i|!} (x_1)^{j_1}\cdots (x_n)^{j_n} g_m
$$
and, by Theorem \ref{bounded_operatorq4_bis} we have that
\[
\begin{split}
& \lim_{N\to \infty} \sum_{j=0}^NZ_j(N,a)G(x h_j(N))=\lim_{N\to \infty} \sum_{j=0}^N  \mathcal V(x,\partial_{\uny})(E(h_j(N)y) Z_j(N,a) )\Big |_{y=0}\\
&=\mathcal V(x,\partial_{\uny}) \left(\lim_{N\to \infty} \sum_{j=0}^N  E(h_j(N)y) Z_j(N,a) \right) \Big |_{y=0}=\mathcal V(x,\partial_{\uny}) \left( E(ay) \right ) \Big |_{y=0}\\
&= \sum_{s=0}^\infty  a^s \left( (x)^s g_s \right)=G(x a),
\end{split}
\]
uniformly on $x\in K$ for any compact subset $K\subset \mathbb R^{n+1}$.
\end{proof}

\subsection{Superoscillating and supershift sequences in several Clifford variables.}

A similar remark has to be made here as we did for several variables in the slice hyperholomorphic setting. Here, we mean that we encode different frequencies in the Clifford variables, and we will investigate what happens with this natural modification.

\begin{definition}[Superoscillating sequence in several Clifford variables]\label{superoscill2}
A sequence of the type
\begin{equation}
F_N(x_0,\ldots ,x_n,a)=\sum_{j=0}^N Z_j(N,a)  E(e_0 x_0 h_{j,0}(N) +e_1x_1 h_{j,1}(N)+\dots+e_n x_n h_{j,n}(N)),
\end{equation}
where  $\{Z_j(N,a)\}_{j,N}$, for $j=0,\ldots ,N$ and  $ N\in \mathbb{N}$, is a  Cliffordian-valued sequence, is said to be {\em a superoscillating sequence} if
$$
\sup_{j=0,\ldots ,N} \  |h_{j,\ell}(N)|\leq 1 ,\ \ {\rm for} \ \ell=0,...,n,
$$
 and $F_N(x_0,\ldots ,x_n,a)$ converges uniformly on $x$ in any compact subset of $\mathbb R^{n+1}$ to
 $$E(e_0 x_0 g_0(a) +e_1x_1 g_1(a)+\dots +e_nx_n g_n(a)),$$ where $a\in\mathbb R$, $g_\ell$'s are continuous functions of a real variable whose domain is $\mathbb R$ and $|g_\ell (a)|>1$ for  $\ell=1,\ldots ,d$.
\end{definition}

\begin{theorem}\label{bounded_operator_2}
Let $u_{\ell}:\mathbb R^{n+1} \to\mathbb R^{n+1}$ be a sequence of continuous functions for $\ell\in\mathbb N$ such that for any compact subset $K\subset \mathbb R^{n+1}$
$$ \lim \sup_{\ell\to\infty} \sup_{x\in K} |u_{\ell} (x) |^{1/\ell} = 0\quad\textrm{and}\quad u_{\ell}(x)= u_{\ell,0}(x) +u_{\ell,1} (x) e_1+\dots+u_{\ell,n}(x) e_n,$$
where $u_{\ell,i}:\mathbb R^{n+1}\to\mathbb R$ are $C^0$ functions. Then the operator
\[
\begin{split}
&\left( \mathcal U(x,\partial_{\uny}) f(y) \right) :=  \sum_{k=0}^\infty \frac{1}{k!}  \sum_{|t|=k} \sum_{i+r=t} \frac{t!}{i!} \sum_{s=0}^\infty \frac 1{c_s} \sum_{z'\in\mathbb N^n,\, |z'|=s}^\infty \frac 1{z'!} \partial_{\uny}^{z'}\left( f(y) \right) \frac {e_i}{|i|!} \\
&\times  \sum_{z\in\mathbb N^{n+1},\, |z|=s} \left(\sum_{s_1+\dots+s_{|i|}=z_0} u_{s_1,0}(x)\cdots u_{s_{|i|},0}(x)\right) \frac{1}{r_1!} \left(\sum_{s_1+\dots+s_{r_1}=z_1} u_{s_1,1}(x) \cdots u_{s_{r_1},1}(x) \right) \cdots\\
&\times \frac 1{r_n!} \left(\sum_{s_1+\dots+s_{r_n}=z_n} u_{s_1,n}(x)\cdots  u_{s_{r_n},n} (x) \right ) e_t'
\end{split}
\]
is a continuous operator from $A_1$ to $A_1$ where $$\partial_{\uny}^{\mathbf{z'}}=\partial_{y_1}^{z_0+z_1}\cdots \partial_{y_n}^{z_n},$$ $c_s=(-n)^s$ if $s$ is even or $c_s=(-n)^s(e_1+\dots+e_n)$ if $m$ is odd, $e_i$ is defined for any multi-index $i\in \mathbb N^n$ as in \eqref{sum_ima_units_2} and $e_t'$ is defined as in \eqref{sum_ima_units} for any multi-index $i\in \mathbb N^n$.
\end{theorem}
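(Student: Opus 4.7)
The plan is to lift the proof of Theorem \ref{bounded_operator} (the slice hyperholomorphic analogue) into the monogenic setting, borrowing the technical machinery already set up in the proofs of Theorem \ref{bounded_operatorq4_bis} and Theorem \ref{bounded_operator}. First I would take an arbitrary $f\in A_1$ and write its Fueter expansion $f(y)=\sum_{\ell\in\mathbb N^n_0}P_\ell(y)a_\ell$; by Lemma \ref{taylor_coeff_estimate} the coefficients obey $|a_\ell|\leq C_f b_f^{|\ell|}c(n,\ell)/|\ell|!$, and crucially if $f_m\to 0$ in $A_1$ then $C_{f_m}\to 0$ while $b_{f_m}$ stays bounded. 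This is the lever that will eventually deliver continuity.

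Next, apply $\mathcal U(x,\partial_{\uny})$ to the series and carry out the differentiations via property (II) of Theorem \ref{fpt}, namely $\partial_{\uny}^{z'}P_\ell(y)=\ell!/(\ell-z')!\,P_{\ell-z'}(y)$, combined with property (IV) to get $|P_{\ell}(y)|\leq|y|^{|\ell|}$. After reindexing $\ell \to \ell+z'$, the resulting sum factorizes. The key auxiliary inequalities, already verified in the proof of Theorem \ref{bounded_operatorq4_bis}, are $c(n,\ell+z')\leq (n-1)!\,2^{2n+|\ell|+|z'|-2}c(n,\ell)c(n,z')$, the lower bound $|c_s|\geq n^s$, and the combinatorial estimate $\sum_{|m|=p}c(n,m)\leq C_\epsilon(n+\epsilon)^p$ from Lemma~1 of \cite{CAK07}. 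Plugging these in, the $\ell$-sum collapses to an exponential factor $\exp(2nb_f|y|)$ (via Lemma \ref{growth_taylor_series}), while the $z'$-sum over $\mathbb N^n_0$ produces a harmless constant.

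What remains is an $x$-dependent factor built from the multi-convolutions $\sum_{s_1+\cdots+s_{r_i}=z_i}u_{s_1,i}(x)\cdots u_{s_{r_i},i}(x)$. Grouping these convolution sums together by the multinomial identity and then summing over $k$ via the exponential series, one recognizes the bound $\exp\bigl(C_n\sum_{s=0}^\infty b_f^s\sup_{x\in K}|u_s(x)|\bigr)$ for any compact $K\subset\mathbb R^{n+1}$; this is finite precisely because of the hypothesis $\limsup_{\ell\to\infty}(\sup_K|u_\ell(x)|)^{1/\ell}=0$. The overall bound therefore has the form $|\mathcal U(x,\partial_{\uny})f(y)|\leq C_f\,e^{\sigma|y|}\,\Phi_K(x)$ uniformly in $x\in K$, which places $\mathcal U(x,\partial_{\uny})f$ in $A_{1,\sigma}$ with norm controlled by $C_f$. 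Proposition \ref{conv_A_1} then packages continuity: $C_{f_m}\to 0$ forces $\mathcal U(x,\partial_{\uny})f_m\to 0$ in $A_1$.

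The main obstacle is purely combinatorial bookkeeping rather than analytic: the operator carries several layers of multi-index sums together with the Clifford objects $e_i$ and $e_t'$, whose norms are controlled by $|e_i|\leq|i|!$ (and similarly $|e_t'|\leq |t|!/t!$), so these only contribute polynomial factors that are absorbed by the $c(n,\cdot)$ bounds. The heart of the argument is to reorder the nested sums so that the Cauchy-product structure over the convolutions $s_1+\cdots+s_{r_i}=z_i$ collapses to a power of $\sum_s b_f^s|u_s(x)|$, exactly mirroring the passage from the inner to the outer exponential in the estimate chain of Theorem \ref{bounded_operator}.
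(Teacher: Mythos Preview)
Your proposal is correct and follows essentially the same route as the paper's proof: expand $f$ in Fueter polynomials, differentiate termwise, reindex, and then factor the resulting sum using the inequalities $c(n,\ell+z')\leq (n-1)!\,2^{2n+|\ell|+|z'|-2}c(n,\ell)c(n,z')$, $|c_s|\geq n^s$, $\sum_{|m|=p}c(n,m)\leq C_\epsilon(n+\epsilon)^p$, and Lemma~\ref{growth_taylor_series}, to obtain an estimate of the form $|\mathcal U(x,\partial_{\uny})f(y)|\leq C_f\,e^{\sigma|y|}\Phi_K(x)$. The only tactical differences are that the paper bounds $|P_\ell(y)|$ by $P_\ell(y')$ with $y'=\sum_v(\sqrt{y_0^2+y_v^2})e_v$ (rather than $|y|^{|\ell|}$) in order to invoke Lemma~\ref{growth_taylor_series} directly, and for the $x$-dependent factor it recognizes the $i+r=t$ sum as a Fueter binomial (property~(V) of Theorem~\ref{fpt}) and then appeals to Proposition~\ref{exp_mon_funct}, whereas you collapse the Cauchy products straight into an exponential via the multinomial identity; both arrive at the same conclusion.
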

\begin{proof}
Let $f\in A_1$. We consider its Taylor series: $f(q)=\sum_{\mathbf{\ell}\in\mathbb N^n_0} P_{\mathbf{\ell}}(x) a_\ell $. In particular, by Lemma 2.13 in \cite{CPSS21} we have that there exist constants $C_f$ and $b_f$ such that
$$|a_{\mathbf{\ell}}|\leq C_f\frac{b^{|{\mathbf{\ell}}|}_f c(n,{\mathbf{\ell}})}{ |{\mathbf{\ell}}|!},$$
where 
$$
c(n,\ell)=\frac{n(n+1)\cdots (n+|\ell|-1)}{\ell!}=\frac{(n+|\ell|-1)!}{(n-1)!\, \ell ! }.
$$
Moreover, if $f_m\to 0$ in $A_1$ then $C_{f_m}\to 0$ and $b_{f_m}$ is bounded.
We have
 \begin{eqnarray}\nonumber
 & \left| \mathcal U(x,\partial_{\uny}) f(y) \right|  =\left| \sum_{k=0}^\infty \frac{1}{k!}  \sum_{|t|=k} \sum_{i+r=t} \frac{t!}{i!} \sum_{s=0}^\infty \frac 1{c_s} \sum_{z'\in\mathbb N^n,\, |z'|=s}^\infty \frac 1{z'!} \partial_{\uny}^{z'}\left( f(y) \right) \frac {e_i}{|i|!} \right. \nonumber\\
&\times  \sum_{z\in\mathbb N^{n+1},\, |z|=s} \left(\sum_{s_1+\dots+s_{|i|}=z_0} u_{s_1,0}(x)\cdots u_{s_{|i|},0}(x)\right)  \nonumber
\\
&
\times
\frac{1}{r_1!} \left(\sum_{s_1+\dots+s_{r_1}=z_1} u_{s_1,1}(x) \cdots u_{s_{r_1},1}(x) \right) \cdots \nonumber
\\
&\left. \times \frac 1{r_n!} \left(\sum_{s_1+\dots+s_{r_n}=z_n} u_{s_1,n}(x)\cdots  u_{s_{r_n},n} (x) \right ) e'_t\right| \nonumber
\\
 & = \left| \sum_{k=0}^\infty \frac{1}{k!}  \sum_{|t|=k} \sum_{i+r=t} \frac{t!}{i!} \sum_{s=0}^\infty \frac 1{c_s} \sum_{z'\in\mathbb N^n,\, |z'|=s}^\infty \frac 1{z'!} \partial_{\uny}^{z'}\left( \sum_{\ell\in\mathbb N^n_0} P_\ell(y) a_\ell \right) \frac {e_i}{|i|!} \right. \nonumber\\
&\times  \sum_{z\in\mathbb N^{n+1},\, |z|=s} \left(\sum_{s_1+\dots+s_{|i|}=z_0} u_{s_1,0}(x)\cdots u_{s_{|i|},0}(x)\right)  \nonumber
\\
&
\times
\frac{1}{r_1!} \left(\sum_{s_1+\dots+s_{r_1}=z_1} u_{s_1,1}(x) \cdots u_{s_{r_1},1}(x) \right) \cdots \nonumber\\
&\left. \times \frac 1{r_n!} \left(\sum_{s_1+\dots+s_{r_n}=z_n} u_{s_1,n}(x)\cdots  u_{s_{r_n},n} (x) \right ) e'_t\right| \nonumber\\
 & \leq 2^{n/2} \sum_{k=0}^\infty \frac{1}{k!}  \sum_{|t|=k} \sum_{i+r=t} \frac{t!}{i!} \sum_{s=0}^\infty \frac 1{|c_s|} \sum_{z'\in\mathbb N^n,\, |z'|=s}^\infty \frac 1{z'!} \left( \sum_{\ell\geq z'} \frac{\ell!}{(\ell-z')!} | a_\ell| |P_{\mathbf{\ell}-z'}(y)|  \right) \nonumber\\
&\times \sum_{z\in\mathbb N^{n+1},\, |z|=s} \left(\sum_{s_1+\dots+s_{|i|}=z_0} |u_{s_1,0}(x)| \cdots |u_{s_{|i|},0}(x)|\right) \nonumber
\\
&
\times \frac{1}{r_1!} \left(\sum_{s_1+\dots+s_{r_1}=z_1} |u_{s_1,1}(x)| \cdots |u_{s_{r_1},1}(x)| \right) \cdots \nonumber\\
& \times \frac 1{r_n!} \left(\sum_{s_1+\dots+s_{r_n}=z_n} |u_{s_1,n}(x)| \cdots  |u_{s_{r_n},n}(x)| \right ) |e'_t| \nonumber\\
& = 2^{n/2} \sum_{k=0}^\infty \frac{1}{k!}  \sum_{|t|=k} \sum_{i+r=t} \frac{t!}{i!} \sum_{s=0}^\infty \frac 1{|c_s|} \sum_{z'\in\mathbb N^n,\, |z'|=s}^\infty \frac 1{z'!}  \left( \sum_{\ell \in\mathbb N^n_0} \frac{(\ell+z')!}{\ell!} | a_{\ell+z'}| |P_{\ell}(y)|  \right)\nonumber \\
&\times \sum_{z\in\mathbb N^{n+1},\, |z|=s} \left(\sum_{s_1+\dots+s_{|i|}=z_0} |u_{s_1,0}(x)| \cdots |u_{s_{|i|},0}(x)|\right) \nonumber
\\
&
\times \frac{1}{r_1!} \left(\sum_{s_1+\dots+s_{r_1}=z_1} |u_{s_1,1}(x)| \cdots |u_{s_{r_1},1}(x)| \right) \cdots \nonumber\\
& \times \frac 1{r_n!} \left(\sum_{s_1+\dots+s_{r_n}=z_n} |u_{s_1,n}(x)| \cdots  |u_{s_{r_n},n}(x)| \right ) |e'_t|. \nonumber
\end{eqnarray}
where in the first inequality we also use that $\frac{|e_i|}{|i|!}\leq 1$. Thus we have
\begin{align*}
 \left| \mathcal U(x,\partial_{\uny}) f(y) \right|& \leq 2^{n/2} \sum_{k=0}^\infty \frac{1}{k!}  \sum_{|t|=k} \sum_{i+r=t} \frac{t!}{i!} \sum_{s=0}^\infty \frac 1{|c_s|}
\\
&\times \sum_{z'\in\mathbb N^n,\, |z'|=s}^\infty \frac 1{z'!}  \left( \sum_{\ell \in\mathbb N^n_0} \frac{ (\ell+z')!}{\ell!} \frac{b_f^{|\ell+z'|} c(n,\ell+z')}{|\ell+z'|! }  |P_\ell(y)|  \right)
\nonumber\\
&\times \sum_{z\in\mathbb N^{n+1},\, |z|=s} \left(\sum_{s_1+\dots+s_{|i|}=z_0} |u_{s_1,0}(x)| \cdots |u_{s_{|i|},0}(x)|\right)
\\
&\times \frac{1}{r_1!} \left(\sum_{s_1+\dots+s_{r_1}=z_1} |u_{s_1,1}(x)| \cdots |u_{s_{r_1},1}(x)| \right) \cdots\nonumber\\
 & \times \frac 1{r_n!} \left(\sum_{s_1+\dots+s_{r_n}=z_n} |u_{s_1,n}(x)| \cdots  |u_{s_{r_n},n}(x)| \right ) |e'_t|,
 \end{align*}
 since 
 $$c(n,\ell+z')\leq (n-1)! \, 2^{n+|\ell|+|z'|-2}c(n,\ell)c(n,z'),$$ as we proved in Theorem \ref{bounded_operatorq4_bis}, we have
 \begin{align*}
  \left| \mathcal U(x,\partial_{\uny}) f(y) \right|& \leq C_f (n-1)! 2^{(3n-4)/2}  \sum_{\ell \in\mathbb N^n_0} 2^{|\ell|} \frac{b_f^{|\ell|} c(n,\ell)}{\ell! }  |P_\ell(y)| \\
 & \times \sum_{k=0}^\infty \frac{1}{k!}  \sum_{|t|=k} \sum_{i+r=t} \frac{t!}{i!} 
 \sum_{s=0}^\infty \frac {b_f^s2^s}{|c_s|} \sum_{z'\in\mathbb N^n_0, |z'|=s} \frac{c(n,z')}{z'!}\\
 &\times \sum_{s=0}^\infty \sum_{z\in\mathbb N^{n+1}_0, |z|=s} \left(\sum_{s_1+\dots+s_{|i|}=z_0} |u_{s_1,0}(x)| \cdots |u_{s_{|i|},0}(x)|\right)
  \\
&
\times\frac{1}{r_1!} \left(\sum_{s_1+\dots+s_{r_1}=z_1} |u_{s_1,1}(x)| \cdots |u_{s_{r_1},1}(x)| \right) \cdots\\
 &\times \frac 1{r_n!} \left(\sum_{s_1+\dots+s_{r_n}=z_n} |u_{s_1,n}(x)| \cdots  |u_{s_{r_n},n}(x)| \right ) |e'_t|.
 \end{align*}
 Since $|P(y)| \leq P(y')$ for $y'=(\sqrt{y_0^2+y_1^2})e_1+\dots +(\sqrt{y_0^2+y_n^2}) e_n$, as we proved in Theorem \ref{bounded_operatorq4_bis}, and
 $$\limsup_{p\to +\infty}\left(\sum_{|z'|=p} c(n,z')\right)^{\frac 1p}=n$$
(see Lemma 1 \cite{CAK07}) which in particular implies that for any $\epsilon>0$ there exists a positive constant $C_\epsilon>0$ such that for any $p>0$: $\sum_{|z'|=p} c(n,z')\leq C_\epsilon(n+\epsilon)^p$ and $|c_s|\geq n^s$ (see the proof of Theorem \ref{bounded_operatorq4_bis}), we have

 \begin{align*}\nonumber
  & \left| \mathcal U(x,\partial_{\uny}) f(y) \right| \leq C_\epsilon C_f (n-1)! 2^{(3n-4)/2}  \sum_{\ell \in\mathbb N^n_0} 2^{|\ell|} \frac{b_f^{|\ell|} c(n,\ell)}{\ell! } P_\ell(y') \nonumber\\
 & \times  \sum_{k=0}^\infty \frac{1}{k!}  \sum_{|t|=k} \sum_{i+r=t} \frac{t!}{i! r!} \sum_{s=0}^\infty \frac {b_f^s2^s}{|c_s|} \sum_{z'\in\mathbb N^n_0, |z'|=s} \frac{c(n,z')}{z'!} \nonumber\\
 & \times \sum_{s=0}^\infty \sum_{z\in\mathbb N^{n+1}_0, |z|=s} \left(\sum_{s_1+\dots+s_{|i|}=z_0} |u_{s_1,0}(x)| \cdots |u_{s_{|i|},0}(x)|\right)\nonumber\\
 & \times  \left(\sum_{s_1+\dots+s_{r_1}=z_1} |u_{s_1,1}(x)| \cdots |u_{s_{r_1},1}(x)| \right) \cdots \left(\sum_{s_1+\dots+s_{r_n}=z_n} |u_{s_1,n}(x)| \cdots  | u_{s_{r_n},n}(x)| \right ) |e'_t|\nonumber
  \\
 &\leq C_\epsilon C_f (n-1)! 2^{(3n-4)/2}  \sum_{\ell \in\mathbb N^n_0} 2^{|\ell|} \frac{b_f^{|\ell|} c(n,\ell)}{\ell! } P_\ell(y') \sum_{s=0}^\infty \frac {b_f^s2^s}{|c_s|} \sum_{z'\in\mathbb N^n_0, |z'|=s} \frac{c(n,z')}{z'!} \nonumber
 \\
 & \times \sum_{k=0}^\infty \frac{1}{k!}  \sum_{|t|=k} \sum_{i+r=t} \frac{t!}{i! r!} P_i \left( \sum_{v=0}^n e_v\left( \sum_{\ell=0}^\infty |u_{\ell,0}(x)| \right) \right) P_r\left(  \sum_{v=1}^n \left( \sum_{\ell=0}^\infty  |u_{\ell,v}(x)| \right)e_v \right)|e_t'|\nonumber
 \end{align*}
   \begin{align*}
 &\leq C_\epsilon C_f (n-1)! 2^{(3n-4)/2}  \sum_{\ell \in\mathbb N^n_0} 2^{|\ell|} \frac{b_f^{|\ell|} c(n,\ell)}{\ell! } P_\ell(y') \sum_{s=0}^\infty \frac {b_f^s2^s}{|c_s|} \sum_{z'\in\mathbb N^n_0, |z'|=s} \frac{c(n,z')}{z'!}\nonumber
 \\
 & \times \sum_{k=0}^\infty \frac{1}{k!}  \sum_{|t|=k} P_{t} \left( \sum_{v=1}^n e_v \left( \sum_{\ell=0}^\infty |u_{\ell,0}(x)|  + |u_{\ell,v} (x)| \right)  \right) |e'_t|.\nonumber
 \end{align*}
By Lemma \ref{growth_taylor_series} we have
$$\sum_{s_1=0}^\infty \sum_{|\ell|=s_1} \frac{c(n,\ell)}{\ell!} P_\ell(2b_f y')\leq e^{2 b_f (|y'|+1/n)}$$ and
$$\sum_{s=0}^\infty \frac {b_f^s2^s}{|c_s|} \sum_{z'\in\mathbb N^n_0, |z'|=s} \frac{c(n,z')}{z'!}\leq e^{2b_f/n},
$$ thus we can obtain the following estimate
 \begin{equation}\label{continuity_estimate}
 \begin{split}
 & \left| \mathcal U(x,\partial_{\uny}) f(y) \right|  \leq C_f (n-1)! 2^{n-2}C_\epsilon e^{2 b_f (|y'|+1/n)}\\
 & \sum_{k=0}^\infty \frac{1}{k!}  \sum_{|t|=k} P_{t} \left( \sum_{v=1}^n e_v \left( \sum_{\ell=0}^\infty |u_{\ell,0}(x)|  + |u_{\ell,v} (x)| \right)  \right) |e'_t|\\
 & \overset{|y'|\leq \sqrt{n}|y|}{\leq}  C_f (n-1)! 2^{n-2}C_\epsilon e^{2 \sqrt{n} b_f (|y| + 1/n^{3/2})} \\
 & \times \sum_{k=0}^\infty \frac{1}{k!}  \sum_{|t|=k} P_{t} \left( \sum_{v=1}^n e_v \left( \sum_{\ell=0}^\infty |u_{\ell,0}(x)|  + |u_{\ell,v} (x)| \right)  \right) |e'_t|,
 \end{split}
\end{equation}
where the uniformly convergence on any compact subset of $\mathbb R^{n+1}$ of the last series is justified by Proposition \ref{exp_mon_funct}. Thus we proved $\mathcal U(x,\partial_{\uny})(f_m)\to 0$ in $A_1$ if $f_m\to 0$ in $A_1$  and the family of operators $\mathcal U(x,\partial_{\uny})$ is uniformly continuous with respect to $x\in K$ for any compact subset $K\subseteq \mathbb R^{n+1}$.
\end{proof}

\begin{theorem}\label{new_suposs}
Let $\left\{u_\ell: \mathbb R^{n+1}\to \mathbb R^{n+1}\right\}_{\ell\in\mathbb N}$ be a sequence of functions such that for any compact subset $K\subseteq\mathbb R^n$
$$
\lim \sup_{\ell\to\infty} \sup_{x\in K} \left( |u_{\ell} (x)|^{1/\ell}\right) = 0.
$$
Let $a\in\mathbb R$. Suppose also that the sequence
$$
 f_N (x_0,\dots ,x_n,a) :=  \sum_{j=0}^N E(h_j(N)  x) Z_j(N,a),
$$
converges to $ E(ax)$ in $A_{1}$ with in addition $h_j(N)\in[-1,1]$ for any $0\leq j\leq N$.
Let also
$$
F_N  (x_0,x_1,\ldots,x_n,a) := \sum_{j=0}^N Z_j(N,a) E\left(\sum_{\ell=0}^\infty u_\ell(x) h_j(N)^\ell \right).
$$
Then we have
$$
\lim_{N\to \infty}F_N(x_0,x_1,\ldots,x_n,a)=E\left(\sum_{\ell=0}^\infty u_\ell(x) a^\ell\right),
$$
uniformly on any compact subsets of $\mathbb R^{n+1}$.
\end{theorem}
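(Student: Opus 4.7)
The strategy mirrors the slice hyperholomorphic Theorem \ref{conv_superos}: we rewrite $F_N(x,a)$ as the infinite order differential operator $\mathcal U(x,\partial_{\underline y})$ of Theorem \ref{bounded_operator_2} applied to $E(h_j(N)y)Z_j(N,a)$ evaluated at $y=0$, then use continuity of $\mathcal U(x,\partial_{\underline y})$ on $A_1$ together with the hypothesis $\sum_{j=0}^N E(h_j(N)y)Z_j(N,a)\to E(ay)$ in $A_1$ to pass the limit inside and evaluate.

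First I would fix a compact subset $K\subset\mathbb{R}^{n+1}$ in the $x$-variable. Writing $u_\ell(x)=u_{\ell,0}(x)+\sum_{v=1}^n u_{\ell,v}(x)e_v$ and using the CK-representation of $E$ together with the expansion $(\alpha_0+\alpha_1 e_1+\dots+\alpha_n e_n)^{\odot_L k}$ in terms of $e_i$ and $e'_t$ (as in Section \ref{entireMON} and formula \eqref{sum_ima_units}), one obtains
\[
E\!\left(\sum_{\ell=0}^\infty u_\ell(x) h_j(N)^\ell\right)=\sum_{k=0}^\infty\frac{1}{k!}\sum_{|t|=k}\sum_{i+r=t}\frac{t!}{i!\,r!}\frac{e_i}{|i|!}\,A_{i,r}\bigl(x,h_j(N)\bigr)\,e'_t,
\]
where $A_{i,r}(x,h)$ collects a product of sums of the form $\sum_{s_1+\dots=z_v} u_{s_1,v}(x)\cdots$ multiplied by a power $h^{|z|}$ with $|z|=z_0+z_1+\dots+z_n$. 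The key point is that the power $h^s$ of the scalar frequency factors out from the whole combinatorial block of total weight $s$.

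Next I would use Proposition \ref{derivative_mon_exp} to replace each scalar power $h_j(N)^s$ by $\frac{1}{c_s}\sum_{|z'|=s}\frac{1}{z'!}\partial^{z'}_{\underline y}E(h_j(N)y)\big|_{y=0}$. Multiplying by $Z_j(N,a)$, summing over $j$, and rearranging (the rearrangement is justified once we have the operator continuity), I recognise exactly the operator
\[
F_N(x,a)=\sum_{j=0}^N\bigl(\mathcal U(x,\partial_{\underline y})\bigl[E(h_j(N)y)Z_j(N,a)\bigr]\bigr)\bigg|_{y=0},
\]
with $\mathcal U(x,\partial_{\underline y})$ as in Theorem \ref{bounded_operator_2}. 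By the hypothesis on $\{u_\ell\}$ the assumptions of that theorem hold locally uniformly in $x\in K$, so the family $\{\mathcal U(x,\partial_{\underline y})\}_{x\in K}$ is uniformly continuous from $A_1$ into $A_1$. Hence, using $\sum_{j=0}^N E(h_j(N)y)Z_j(N,a)\to E(ay)$ in $A_1$,
\[
\lim_{N\to\infty} F_N(x,a)=\bigl(\mathcal U(x,\partial_{\underline y})E(ay)\bigr)\big|_{y=0},
\]
uniformly in $x\in K$. Finally, the same expansion carried out backwards (now with $h_j(N)$ replaced by $a$) yields $\mathcal U(x,\partial_{\underline y})E(ay)\big|_{y=0}=E\!\left(\sum_{\ell=0}^\infty u_\ell(x)a^\ell\right)$, which is the desired identity.

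The main obstacle is bookkeeping: verifying the identity $F_N(x,a)=\sum_j\mathcal U(x,\partial_{\underline y})[E(h_j(N)y)Z_j(N,a)]|_{y=0}$ requires matching the paravector expansion of $E(\sum_\ell u_\ell(x)h_j(N)^\ell)$ with the multinomial blocks indexed by $(i,r,z',z)$ that appear in the definition of $\mathcal U(x,\partial_{\underline y})$, and in particular showing that the scalar powers of $h_j(N)$ align so that the formula of Proposition \ref{derivative_mon_exp} can be applied one total-degree at a time. Once this combinatorial identification is established, the rest is a direct application of Theorem \ref{bounded_operator_2} and the assumed $A_1$-convergence of the superoscillating sequence.
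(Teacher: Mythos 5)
Your proposal is correct and follows the same route as the paper's proof: expand $E(\sum_\ell u_\ell(x)h_j(N)^\ell)$ via the Fueter binomial decomposition into $(i,r,z)$-blocks, use Proposition \ref{derivative_mon_exp} to replace each scalar power $h_j(N)^s$ by derivatives of $E(h_j(N)y)$ at $y=0$, identify the resulting expression with $\mathcal U(x,\partial_{\underline y})$ from Theorem \ref{bounded_operator_2}, pass the $A_1$-limit through by uniform continuity, and read the expansion backwards with $a$ in place of $h_j(N)$. The only thing you leave as "bookkeeping" — matching the paravector expansion to the operator's multi-index blocks — is precisely what the paper's displayed chain of equalities labelled \eqref{opmon} carries out in detail, so you have correctly isolated the one nontrivial combinatorial step.
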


\begin{proof}
We observe that $u_\ell(x)=u_{\ell,0}(x)+\sum_{v=1}^n e_v u_{\ell,v}(x)$ where $u_{\ell,0}$ is the real part of $u_\ell(x)$ and $u_{\ell,v}(x)$ are the real coefficients of the imaginary component of $e_v$ in $u_\ell(x)$. Let $t,\, r,\, i,\, z,\, z',\, z''$ be multi-indexes in $\mathbb N^n_0$ and $e'_t$ defined as in \eqref{sum_ima_units}. We have that
\begin{eqnarray}\label{opmon}
 & \sum_{j=0}^N Z_j(N,a) E\left(\sum_{\ell=0}^\infty u_\ell(x) h_j(N)^\ell\right)  \nonumber\\
 & = \sum_{j=0}^N Z_j(N,a) \sum_{k=0}^\infty \frac{1}{k!} \sum_{ |t|=k} P_t\left  (\sum_{\ell=0}^\infty\left( u_{\ell,0}(x)+\sum_{v=1}^n e_v u_{\ell,v}(x)\right) h_j(N)^\ell\right) e'_t\nonumber\\
& = \sum_{j=0}^N Z_j(N,a) \sum_{k=0}^\infty \frac{1}{k!} \sum_{|t|=k} \sum_{i+r=t} \frac{t!}{i!r!}P_i\left  (\sum_{\ell=0}^\infty  u_{\ell,0}(x) h_j(N)^\ell \right) \nonumber\\
&\quad\quad\quad\quad\quad\quad\quad\quad\quad\quad\quad\quad\quad\quad\quad\quad\quad\quad\quad\quad\times P_r\left( \sum_{\ell=0}^\infty \left(\sum_{v=1}^n e_v u_{\ell,v}(x) \right) h_j(N)^\ell\right) e'_t\nonumber\\
& \overset{\textrm{Prop. \ref{fpt}, Rem. \ref{remark_fp}}}{=} \sum_{j=0}^N Z_j(N,a) \sum_{k=0}^\infty \frac{1}{k!} \sum_{|t|=k} \sum_{i+r=t} \frac{t!}{i!} \left  (\sum_{\ell=0}^\infty  u_{\ell,0}(x) h_j(N)^\ell \right)^{|i|} \frac {e_i}{|i|!} \\
&\quad\quad\quad\quad\quad\quad \times \frac 1{r_1!} \left ( \sum_{\ell=0}^\infty u_{\ell,1}(x) h_j(N)^\ell\right)^{r_1}\cdots \frac 1{r_n!} \left ( \sum_{\ell=0}^\infty u_{\ell,n}(x) h_j(N)^\ell\right)^{r_n} e'_t\nonumber\\
& = \sum_{j=0}^N Z_j(N,a) \sum_{k=0}^\infty \frac{1}{k!} \sum_{|t|=k}  \nonumber
\\
&\times \sum_{i+r=t} \frac{t!}{i!} \left  ( \sum_{z_0=0}^\infty \sum_{s_1+\dots+s_{|i|}=z_0} u_{s_1,0}(x)\cdots u_{s_{|i|},0}(x) h_j(N)^{z_0} \right) \frac {e_i}{|i|!} \nonumber\\
&\quad\quad\quad\quad\quad\quad \times \frac 1{r_1!} \left( \sum_{z_1=0}^\infty \sum_{s_1+\dots+s_{r_1}=z_1} u_{s_1,1}(x) \cdots u_{s_{r_1},1}(x) h_j(N)^{z_1} \right)\cdots \nonumber\\
&\quad\quad\quad\quad\quad\quad \times \frac 1{r_n!} \left( \sum_{z_n=0}^\infty \sum_{s_1+\dots+s_{r_n}=z_n} u_{s_1,n}(x)\cdots  u_{s_{r_n},n}(x) h_j(N)^{z_n} \right) e'_t\nonumber\\
&=\sum_{j=0}^N Z_j(N,a) \sum_{k=0}^\infty \frac{1}{k!}  \sum_{|t|=k} \sum_{i+r=t} \frac{t!}{i!} \sum_{z_0=0}^\infty\dots \sum_{z_n=0}^\infty h_j(N)^{z_0+\dots+z_n} \frac {e_i}{|i|!} \nonumber\\
&\times \left(\sum_{s_1+\dots+s_{|i|}=z_0} u_{s_1,0}(x)\cdots u_{s_{|i|},0} (x)\right) \frac{1}{r_1!} \left(\sum_{s_1+\dots+s_{r_1}=z_1} u_{s_1,1}(x) \cdots u_{s_{r_1},1}(x) \right) \cdots \nonumber\\
&\times \frac 1{r_n!} \left(\sum_{s_1+\dots+s_{r_n}=z_n} u_{s_1,n}(x)\cdots  u_{s_{r_n},n}(x) \right ) e'_t.\nonumber
\end{eqnarray}
By Proposition \ref{derivative_mon_exp}, equation \eqref{opmon} becomes
\[
\begin{split}
&\sum_{j=0}^N Z_j(N,a) E\left(\sum_{\ell=0}^\infty u_\ell(x) h_j(N)^\ell\right)\\
&= \sum_{k=0}^\infty \frac{1}{k!}  \sum_{|t|=k} \sum_{i+r=t} \frac{t!}{i!} \sum_{s=0}^\infty \frac 1{c_s} \sum_{z'\in\mathbb N^n, |z'|=s} \frac 1{z'!} \partial_{\uny}^{z'}\left( \sum_{j=0}^N E(h_j(N)y) Z_j(N,a) \right)\Big|_{y=0} \frac {e_i}{|i|!}\\
&\times \sum_{z\in\mathbb N^{n+1},\, |z|=s}  \left(\sum_{s_1+\dots+s_{|i|}=z_0} u_{s_1,0}(x)\cdots u_{s_{|i|},0} (x)\right) \frac{1}{r_1!} \left(\sum_{s_1+\dots+s_{r_1}=z_1} u_{s_1,1}(x) \cdots u_{s_{r_1},1}(x) \right) \cdots\\
&\times \frac 1{r_n!} \left(\sum_{s_1+\dots+s_{r_n}=z_n} u_{s_1,n}(x)\cdots  u_{s_{r_n},n} (x) \right ) e'_t.
\end{split}
\]
We define the operator
\[
\begin{split}
&\left( \mathcal U(x,\partial_{\uny}) f(y) \right) :=  \sum_{k=0}^\infty \frac{1}{k!}  \sum_{|t|=k} \sum_{i+r=t} \frac{t!}{i!} \sum_{s=0}^\infty \frac 1{c_s} \sum_{z'\in\mathbb N^n,\, |z'|=s}^\infty \frac 1{z'!} \partial_{\uny}^{z'}\left( f(y) \right) \frac {e_i}{|i|!} \\
&\times \sum_{z\in\mathbb N^{n+1},\, |z|=s} \left(\sum_{s_1+\dots+s_{|i|}=z_0} u_{s_1,0}(x)\cdots u_{s_{|i|},0}(x)\right) \frac{1}{r_1!} \left(\sum_{s_1+\dots+s_{r_1}=z_1} u_{s_1,1}(x) \cdots u_{s_{r_1},1}(x) \right) \cdots\\
&\times \frac 1{r_n!} \left(\sum_{s_1+\dots+s_{r_n}=z_n} u_{s_1,n}(x)\cdots  u_{s_{r_n},n} (x) \right ) e'_t.
\end{split}
\]
and, since in the proof of Theorem \ref{bounded_operator_2} we proved the estimate \eqref{continuity_estimate}, the operators $\mathcal U(x,\partial_{\uny})$ is a family of uniformly  continuous operators from $A_1$ to $A_1$, depending on the parameters $x\in K\subset \mathbb R^{n+1}$ where $K$ is a compact subset. Thus
we have that
\[
\begin{split}
& \lim_{N\to \infty} \sum_{j=0}^N Z_j(N,a) E\left (\sum_{\ell=0}^\infty u_\ell(x) h_j(N)^\ell\right) =\lim_{N\to \infty} \sum_{j=0}^N  \mathcal U(x,\partial_{\uny}) \left( E\left(h_j(N) y\right) Z_j(n,a) \right)\big |_{y=0}\\
&=\mathcal U(x,\partial_{\uny}) \left(\lim_{N\to \infty} \sum_{j=0}^N  E\left(h_j(N) x\right) Z_j(N,a) \right) \big |_{ y=0}\\
&=\mathcal U(x,\partial_{\uny}) \left( E(a y) \right ) \big |_{ y=0}\\
&=  \sum_{k=0}^\infty \frac{1}{k!}  \sum_{|t|=k} \sum_{i+r=t} \frac{t!}{i!} \sum_{s=0}^\infty \sum_{z_0=0}^\infty \sum_{z'=(z_0+z_1,z_2,\dots, z_n)\in\mathbb N^n_0, |z'|=s}  a^s\\ &\times \frac{e_i}{|i|!} \left(\sum_{s_1+\dots+s_{|i|}=z_0} u_{s_1,0}(x)\cdots u_{s_{|i|},0}(x)\right) \frac{1}{r_1!} \left(\sum_{s_1+\dots+s_{r_1}=z_1} u_{s_1,1}(x) \cdots u_{s_{r_1},1}(x) \right) \cdots\\
&\times \frac 1{r_n!} \left(\sum_{s_1+\dots+s_{r_n}=z_n} u_{s_1,n}(x)\cdots  u_{s_{r_n},n} (x) \right ) e'_t\\
&=E\left(\sum_{\ell=0}^\infty u_\ell(x) a^\ell\right)
\end{split}
\]
where the last equality is due to the chain of equalities in \eqref{opmon}, read in inverse sense,  and the convergence of the last series to the exponential function is uniform on any compact subset $K\subseteq \mathbb R^{n+1}$.
\end{proof}

\begin{corollary}\label{cor2}
Let $g_k(a):=\sum_{\ell=0}^\infty g_{k,\ell}a^\ell$ for $k=0,\dots, n$ be real analytic functions such that $|g_k(a)|>1$ for $|a|>1$ and $|g_k(a)|\leq 1$ for $|a|\leq1$, if we suppose $u_\ell (x)=e_0x_0 g_{0,\ell}+\dots+e_n x_n g_{n,\ell}$ and the sequence $f_N(x,a)=\sum_{j=0}^N  E\left( h_j(N) x\right) Z_j(N,a) $ converges to $ x\mapsto E\left( ax\right)$ in $A_{1}$,  then the sequence
\[
\begin{split}
F_N(x,a)&=\sum_{j=0}^N Z_j(N,a) E\left (\sum_{\ell=0}^\infty u_\ell(x) (h_j(N))^\ell\right) \\
&=\sum_{j=0}^N Z_j(N,a) E\left (x_0 e_0 g_0(h_j(N))+x_1e_1 g_1(h_j(N))+\dots +x_n e_n g_n(h_j(N))\right)
\end{split}
\]
converges uniformly on compact subsets of $\mathbb R^{n+1}$ to $E\left( e_0 x_0g_0(a)+\dots+ e_n x_ng_n(a)\right)$ and thus it is a superoscillating sequence according to Definition \ref{superoscill2}.
\end{corollary}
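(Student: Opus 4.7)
The plan is to reduce the corollary to a direct application of Theorem \ref{new_suposs}. With the given choice $u_\ell(x) = e_0 x_0 g_{0,\ell} + \dots + e_n x_n g_{n,\ell}$, I first verify the decay hypothesis on $u_\ell$ required there, namely
$$\limsup_{\ell \to \infty} \sup_{x \in K} |u_\ell(x)|^{1/\ell} = 0$$
for every compact $K \subset \mathbb{R}^{n+1}$. Since each $g_k$ is real analytic on all of $\mathbb{R}$ with globally convergent power series, one has $\limsup_{\ell\to\infty} |g_{k,\ell}|^{1/\ell} = 0$ for $k=0,\dots,n$. Setting $M_K := \sup_{x \in K} |x|$ and using $|u_\ell(x)| \leq M_K \sum_{k=0}^n |g_{k,\ell}|$, the bound $\sup_K |u_\ell|^{1/\ell} \leq M_K^{1/\ell} \bigl(\sum_{k=0}^n |g_{k,\ell}|\bigr)^{1/\ell}$ tends to $0$, exactly as in the proof of Corollary \ref{cor1}.

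Next, I identify the sequence $F_N(x,a)$ written in the statement with the sequence appearing in Theorem \ref{new_suposs}. By absolute convergence of each $g_k$ at the real point $h_j(N)$,
$$\sum_{\ell=0}^\infty u_\ell(x) h_j(N)^\ell = \sum_{k=0}^n e_k x_k \sum_{\ell=0}^\infty g_{k,\ell} h_j(N)^\ell = \sum_{k=0}^n e_k x_k g_k(h_j(N)),$$
so the two representations of $F_N(x,a)$ in the statement coincide, and the same rearrangement with $a$ in place of $h_j(N)$ gives $\sum_{\ell=0}^\infty u_\ell(x) a^\ell = \sum_{k=0}^n e_k x_k g_k(a)$.

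Applying Theorem \ref{new_suposs} then yields
$$\lim_{N\to \infty} F_N(x,a) = E\!\left(\sum_{\ell=0}^\infty u_\ell(x) a^\ell\right) = E\!\left(\sum_{k=0}^n e_k x_k g_k(a)\right)$$
uniformly on compact subsets of $\mathbb{R}^{n+1}$. To certify that the sequence is superoscillatory in the sense of Definition \ref{superoscill2}, I set $h_{j,k}(N) := g_k(h_j(N))$; the hypothesis $|g_k(a)|\leq 1$ for $|a|\leq 1$, combined with $|h_j(N)|\leq 1$, gives $\sup_{j,k,N} |h_{j,k}(N)| \leq 1$, while the hypothesis $|g_k(a)|>1$ for $|a|>1$ supplies the superoscillatory frequency gap at the limit. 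There is essentially no obstacle in the argument: Theorem \ref{new_suposs} subsumes all the analytic work, and what remains is the routine series rearrangement together with the Cauchy--Hadamard estimate on $u_\ell$.
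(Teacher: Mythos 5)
Your proof is correct and follows essentially the same route as the paper: both reduce the corollary to Theorem \ref{new_suposs} by checking the Cauchy--Hadamard decay $\limsup_{\ell\to\infty}\sup_{x\in K}|u_\ell(x)|^{1/\ell}=0$ from the entirety of the $g_k$'s, the only (cosmetic) difference being that the paper bounds $|u_\ell(x)|^{1/\ell}$ by $\sum_k |x_k|^{1/\ell}|g_{k,\ell}|^{1/\ell}$ via subadditivity of the $\ell$-th root while you factor out $M_K^{1/\ell}\bigl(\sum_k|g_{k,\ell}|\bigr)^{1/\ell}$. Your additional remarks — the rearrangement $\sum_\ell u_\ell(x)h_j(N)^\ell = \sum_k e_k x_k g_k(h_j(N))$ identifying the two displayed forms of $F_N$, and the explicit check that $h_{j,k}(N):=g_k(h_j(N))$ satisfies the frequency bound in Definition \ref{superoscill2} — are left implicit in the paper but are exactly the right observations.
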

\begin{proof}
The proof is a direct consequence of Theorem \ref{new_suposs}, once it is observed that, since the functions $g_k$'s are entire functions, we have
\[
\begin{split}
\limsup_{\ell\to\infty} \sup_{x\in K} |u_{\ell} (x)|^{1/\ell} & \leq  \limsup_{\ell\to\infty} \sup_{x\in K} \sum_{k=0}^n |x_k|^{1/\ell} |g_{k,\ell}|^{1/\ell} \\
& \leq  \limsup_{\ell\to\infty}  \sup_{x\in K} \left( \sum_{k=0}^n |x_k|^{1/\ell}\right ) \sum_{k=0}^n |g_{k,\ell}|^{1/\ell} =0
\end{split}
\]
for any compact subset $K\subset \mathbb R^{n+1}$.
\end{proof}

\begin{definition}[Supershift sequences in several Clifford variables]\label{ss2}
Let $|a|>1$. Let
 $\{ h_{j,\ell}(N) \}$,  $j=0,...,N$  for $ N\in \mathbb{N}$, be  real-valued sequences for $\ell=0,...,n$
 such that
$$
\sup_{j=0,\ldots ,N,\ N\in\mathbb{N}} \  |h_{j,\ell}(N)|\leq 1 ,\ \ {\rm for} \ \ell=0,...,n.
$$
Let $G(\lambda)$, be an entire left monogenic function.
We say that
the sequence
\begin{equation}\label{basic_sequence_sevFF_6}
F_N(x,a)=\sum_{j=0}^N Z_j(N,a)
G( x_0 h_{j,0}(N) +e_1x_1 h_{j,1}(N)+\dots +e_n x_n h_{j,n}(N)),
\end{equation}
where  $\{ Z_j(N,a)\}_{j,N}$, $j=0,\ldots ,N$, for $ N\in \mathbb{N}$ is a  Clifford-valued sequence,
admits the supershift property if
$$
\lim_{N\to \infty}F_N(x,a)=
G(x_0a +e_1x_1a+\dots +e_n x_n a),
$$
uniformly on any compact subset of $\mathbb R^{n+1}$.
\end{definition}

We now study the continuity of a suitable infinite-order differential operator that we will use in the sequel to examine the supershift property in the monogenic case.
\begin{theorem}\label{bounded_operatorq6}
Let $G$ be entire left monogenic function whose series expansions at zero is given by
whose series expansions at zero is given by
\begin{equation*}
G (\lambda)=\sum_{m\in\mathbb N^n_0} P_{m}(\lambda) G_{ m},
\end{equation*}
and $g_0,\dots, g_n$ be real analytic on $\mathbb R$ and real valued functions then the operators
\[
\begin{split}
& \mathcal V(x,\partial_{\uny})(f(y)):= \sum_{k=0}^\infty \frac{1}{k!}  \sum_{|t|=k} \sum_{i+r=t} \frac{t!}{i!} \sum_{s=0}^\infty \frac 1{c_s} \sum_{z'\in\mathbb N^n,\, |z'|=s} \frac 1{z'!} \partial_{\uny}^{z'}\left( f(y) \right) \frac{e_i}{|i|!} \\
&\times \sum_{z\in\mathbb N^{n+1},\, |z|=s} \left(\sum_{s_1+\dots+s_{|i|}=z_0} x_0 g_{s_1,0}\cdots x_0 g_{s_{|i|},0}\right) \frac{1}{r_1!} \left(\sum_{s_1+\dots+s_{r_1}=z_1} x_1 g_{s_1,1} \cdots x_1g_{s_{r_1},1} \right) \cdots\\
&\times \frac 1{r_n!} \left(\sum_{s_1+\dots+s_{r_n}=z_n} x_n g_{s_1,n}\cdots  x_n g_{s_{r_n},n} \right ) G_t,
\end{split}
\]
where $g_{j,\ell}$ are the Taylor coefficients of $g_i$ for $i=0,\dots, n$ and $\ell\in\mathbb N_0$, form a family of contiuous operators from $A_1$ to $A_1$ depending on the parameters $x$. If $x$ belongs to a compact subset $K\subseteq\mathbb R^{n+1}$, then the family is uniformly continuous with respect to $x\in K$.
\end{theorem}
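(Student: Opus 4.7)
My plan is to recognize that the operator $\mathcal V(x, \partial_{\uny})$ is essentially an instance of the operator $\mathcal U(x, \partial_{\uny})$ from Theorem \ref{bounded_operator_2}, with the specialization
$$
u_{\ell, 0}(x) := x_0 g_{0, \ell}, \qquad u_{\ell, v}(x) := x_v g_{v, \ell}, \quad v = 1, \dots, n,
$$
and with the final factor $e'_t$ replaced by the Taylor coefficient $G_t$ of the entire monogenic function $G$. The proof will then follow by adapting the proof of Theorem \ref{bounded_operator_2} in these two respects.

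First I would verify that the specialized functions $u_\ell(x) := x_0 g_{0, \ell} + e_1 x_1 g_{1, \ell} + \dots + e_n x_n g_{n, \ell}$ satisfy the hypothesis of Theorem \ref{bounded_operator_2}, namely $\limsup_{\ell \to \infty} \sup_{x \in K} |u_\ell(x)|^{1/\ell} = 0$ for every compact $K \subset \mathbb R^{n+1}$. This is immediate from the hypothesis that each $g_v$ is real analytic on $\mathbb R$ (so its Taylor coefficients at the origin satisfy $\limsup_\ell |g_{v, \ell}|^{1/\ell} = 0$), by the same elementary estimate as in \eqref{convuell} or in Corollary \ref{cor1}.

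Next I would repeat the combinatorial bookkeeping from the proof of Theorem \ref{bounded_operator_2}. For $f \in A_1$ with expansion $f(y) = \sum_{\ell \in \mathbb N_0^n} P_\ell(y) a_\ell$ and the bound $|a_\ell| \leq C_f b_f^{|\ell|} c(n, \ell)/|\ell|!$ from Lemma \ref{taylor_coeff_estimate}, the same chain of reductions --- the estimate $|P_\ell(y)| \leq P_\ell(y')$ with $y' = (\sqrt{y_0^2 + y_1^2}) e_1 + \dots + (\sqrt{y_0^2 + y_n^2}) e_n$; the multinomial-type bound $c(n, \ell + z') \leq (n-1)!\, 2^{2n + |\ell| + |z'| - 2}\, c(n, \ell)\, c(n, z')$; the lower bound $|c_s| \geq n^s$; the cardinality estimate $\sum_{|z'| = p} c(n, z') \leq C_\epsilon (n + \epsilon)^p$; and Lemma \ref{growth_taylor_series} --- would yield an analog of \eqref{continuity_estimate}:
$$
|\mathcal V(x, \partial_{\uny}) f(y)| \leq C_f\, \widetilde{C}_{n, \epsilon}\, e^{2\sqrt{n}\, b_f(|y| + 1/n^{3/2})} \sum_{k=0}^\infty \frac{1}{k!} \sum_{|t| = k} P_t\bigl(V(x)\bigr)\, |G_t|,
$$
where $V(x) := \sum_{v=1}^n e_v \sum_{\ell=0}^\infty \bigl(|x_0|\,|g_{0, \ell}| + |x_v|\,|g_{v, \ell}|\bigr)$ is a paravector depending continuously on $x$.

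Finally, I would verify that the tail series converges uniformly for $x$ in any compact $K$. Since $G$ is entire monogenic, Theorem 1 of \cite{CAK07} gives $\limsup_{|t| \to \infty} (|G_t|/c(n, t))^{1/|t|} = 0$; combined with the bound $|P_t(v)| \leq |v|^{|t|}$ from Theorem \ref{fpt}(IV) and the uniform boundedness of $V(x)$ on $K$ (which follows because each $g_v$ is entire), this majorizes the series by an exponential in $|V(x)|$, giving the required uniform convergence. It then follows that $\mathcal V(x, \partial_{\uny}) f_m \to 0$ in $A_1$ whenever $f_m \to 0$ in $A_1$, and that the family of operators is uniformly continuous for $x \in K$.

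The main obstacle is purely notational: the only substantive difference from Theorem \ref{bounded_operator_2} is the final factor $G_t$ in place of $e'_t$, and since $|G_t|$ decays more rapidly than $|e'_t|$ (because $G$ is entire), every estimate goes through \emph{a fortiori}. Beyond this, no new analytical ingredient is required.
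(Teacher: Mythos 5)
Your proposal is correct and takes essentially the same approach as the paper: the paper's own proof is literally a one-line reduction, ``the proof follows strictly the one of Theorem~\ref{bounded_operator_2} where we replace $e_t'$ with $G_t$, and $u_{\ell}(x)$ with $e_0x_0g_{\ell,0}+e_1x_1 g_{\ell,1}+\dots +e_n x_n g_{\ell, n}$.'' You have simply spelled out the two substitutions (verifying the hypothesis on $u_\ell$ and checking that the tail series with $|G_t|$ in place of $|e'_t|$ still converges), which is exactly the content the paper leaves implicit.
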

\begin{proof}
The proof follows strictly the one of Theorem \ref{bounded_operator_2} where we replace $e_t'$ with $G_t$, and $u_{\ell}(x)$ with $e_0x_0g_{\ell,0}+e_1x_1 g_{\ell,1}+\dots +e_n x_n g_{\ell, n}$.
\end{proof}

\begin{theorem}\label{propm=5q}
Let $|a|>1$ and assume that the sequence
$$f_N(x,a):=\sum_{j=0}^N E(h_j(N) x)Z_j(N,a)$$
converges to $E(ax)$ in the space $A_{1}$.
Let $G$ be an entire left monogenic function.
Let $g_0,\dots, g_n$ be analytic on $\mathbb R$ and real valued functions. We define
$$
F_N(x,a)=\sum_{j=0}^N Z_j(N,a) G(e_0x_0g_0(h_j(N))+e_1x_1 g_1(h_{j}(N))+\dots +e_n x_n g_n(h_{j}(N))).
$$
 Then, $F_N(x,a)$ admits the supershift property that is
$$
\lim_{N\to \infty}F_N(x,a)=G(e_0x_0g_0(a)+e_1x_1 g_1(a)+\dots +e_n x_n g_n(a)),
$$
uniformly on any compact subsets of $\mathbb R^{n+1}$.
\end{theorem}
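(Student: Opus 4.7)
The plan is to follow closely the strategy used in Theorem \ref{propm=3q}, replacing the slice hyperholomorphic operator of Theorem \ref{bounded_operatorq5} by the monogenic operator $\mathcal V(x,\partial_{\uny})$ constructed in Theorem \ref{bounded_operatorq6}. The whole argument will hinge on identifying $F_N(x,a)$ with $\mathcal V(x,\partial_{\uny})(f_N(y,a))\big|_{y=0}$ and then invoking the uniform continuity of $\mathcal V$ on $A_1$ to pass to the limit $N\to\infty$.

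First, I would expand $G$ as a Fueter series $G(\lambda)=\sum_{t\in\mathbb N_0^n}P_t(\lambda)G_t$, and write each $g_k(h_j(N))=\sum_{\ell\geq 0}g_{k,\ell}\,h_j(N)^\ell$. The argument of $G$ in $F_N(x,a)$ is then the paravector $\sum_{k=0}^n e_k x_k g_k(h_j(N))$, so by the binomial formula for Fueter polynomials (Theorem \ref{fpt}(V)) together with Remark \ref{remark_fp} evaluating $P_m$ on paravectors of special form, each $P_t$ splits as a sum over multi-indices $i+r=t$ of $\frac{t!}{i!r!}P_i(x_0 g_0(h_j(N)))\,P_r(e_1x_1g_1(h_j(N))+\dots+e_nx_ng_n(h_j(N)))$, which in turn reduces to the product $x_0^{|i|} g_0(h_j(N))^{|i|}\frac{e_i}{|i|!}\cdot \prod_{k=1}^n \frac{(x_k g_k(h_j(N)))^{r_k}}{r_k!}$, up to a fixed combinatorial factor coming from the permutation symbol $G_t$.

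Next, I would multiply out the powers of the series $g_k(h_j(N))^{\alpha}=\sum_{z}\sum_{s_1+\dots+s_\alpha=z}g_{k,s_1}\cdots g_{k,s_\alpha}\,h_j(N)^z$ and collect, for each fixed total degree $s$, the coefficient of $h_j(N)^s$. At this point Proposition \ref{derivative_mon_exp} applied to $E(h_j(N)y)$ allows me to replace every sum of the form $\sum_{j=0}^N h_j(N)^s Z_j(N,a)$ by $\frac{1}{c_s}\sum_{|z'|=s}\frac{1}{z'!}\partial_{\uny}^{z'}\!\left(\sum_{j=0}^N E(h_j(N)y)Z_j(N,a)\right)\!\big|_{y=0}$. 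Comparing with the explicit formula in Theorem \ref{bounded_operatorq6}, this gives exactly the identity $F_N(x,a)=\mathcal V(x,\partial_{\uny})\bigl(f_N(y,a)\bigr)\big|_{y=0}$.

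To conclude, since $\{\mathcal V(x,\partial_{\uny})\}_{x\in K}$ is a uniformly continuous family of operators from $A_1$ to $A_1$ for every compact $K\subset\mathbb R^{n+1}$ by Theorem \ref{bounded_operatorq6}, and since $f_N(\cdot,a)\to E(a\,\cdot)$ in $A_1$ by hypothesis, I can interchange the limit with $\mathcal V$ and obtain $\lim_{N\to\infty}F_N(x,a)=\mathcal V(x,\partial_{\uny})(E(ay))\big|_{y=0}$ uniformly in $x\in K$. Running the combinatorial calculation backwards (with $h_j(N)$ replaced by $a$ throughout) then collapses this to $G(e_0x_0g_0(a)+e_1x_1g_1(a)+\dots+e_nx_ng_n(a))$. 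The main obstacle is the careful bookkeeping in step two, where the Fueter polynomial expansion, the expansion of the analytic coefficients $g_k$, and the differentiation identity of Proposition \ref{derivative_mon_exp} all have to be lined up in the correct order so that the resulting expression matches the operator $\mathcal V$ defined in Theorem \ref{bounded_operatorq6}; once this identification is established, the continuity argument is immediate.
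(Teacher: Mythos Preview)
Your proposal is correct and follows essentially the same approach as the paper: the paper's proof simply states that one should follow the argument of Theorem \ref{new_suposs} with $e_t'$ replaced by $G_t$ and $u_\ell(x)$ replaced by $e_0x_0g_{\ell,0}+\dots+e_nx_ng_{\ell,n}$, which is exactly the combinatorial identification plus continuity-of-$\mathcal V$ argument you outline. Your reference to Theorem \ref{propm=3q} as the structural template (rather than Theorem \ref{new_suposs}) is harmless since the two proofs are parallel, and you correctly invoke the monogenic-specific ingredients, Theorem \ref{bounded_operatorq6} and Proposition \ref{derivative_mon_exp}, in place of their slice-hyperholomorphic counterparts.
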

\begin{proof}
The proof follows strictly the one of Theorem \ref{new_suposs} where we replace $e_t'$ with $G_t$, and $u_{\ell}(x)$ with $e_0x_0g_{\ell,0}+e_1x_1 g_{\ell,1}+\dots +e_n x_n g_{\ell, n}$.
\end{proof}

\end{document}